\documentclass[11pt]{amsart}
\usepackage{amsfonts,amssymb,graphicx,mystyle,color}
\usepackage[colorlinks,linkcolor=blue]{hyperref} 
\usepackage{cite}
\usepackage{hyperref}
\usepackage{tikz}
\usepackage{bbm}
    
\def\a{\alpha} \def\b{\beta}  \def\e{\varepsilon}  \def\g{\gamma} \def\l{\lambda} \def\s{\sigma} \def\t{\tau} 
\def\D{\Delta}   
\def\S{\Sigma}
    \def\H{\mathcal H} 

 \def\T{\mathcal T}  
\def\Re{\mathbb{R}}   
\def\O{\Omega}

\def\k{\kappa}
\def\rho{\varrho}

\def\supp{\text{supp}}
\def\Cav{\text{Cav}}

\begin{document}\openup 1\jot
\title[Information Spillover]{Information spillover in multiple Zero-sum Games\\}
 \author[L. Pahl]{Lucas Pahl}\thanks{I am grateful to Paulo Barelli and Hari Govindan for their guidance and encouragement. I would like to thank Sven Rady, Rida Laraki, Tristan Tomala, Heng Liu, Mathijs Janssen and two anonymous referees for comments and suggestions. I acknowledge financial support from the Hausdorff Center for Mathematics (DFG project no. 390685813).}
 \address{Hausdorff Center for Mathematics and Institute for Microeconomics, University of Bonn, Adenauerallee 24-42, 53113 Bonn, Germany.}
 \email{pahl.lucas@gmail.com}
 \date{\today. This paper subsumes a previous paper titled ``Information Spillover in Bayesian Repeated Games''.}
\maketitle

\begin{abstract} This paper considers an infinitely repeated three-player zero-sum game with two-sided incomplete information, in which an informed player plays two zero-sum games simultaneously at each stage against two uninformed players. This is a generalization of the model in Aumann et al. \cite{AMS1995} of two-player zero-sum one-sided incomplete information games. Under a correlated prior, the informed player faces the problem of how to optimally disclose information among two uninformed players in order to maximize his long-term average payoffs (i.e., undiscounted payoffs). Our objective is to understand the adverse effects of ``information spillover'' from one game to the other in the equilibrium payoff set of the informed player. We provide conditions under which the informed player can fully overcome such adverse effects and characterize equilibrium payoffs.  In a second result, we show how the effects of information spillover on the equilibrium payoff set of the informed player might be severe. Finally, we compare our findings on the equilibrium-payoff set of the informed player with those of Bayesian Persuasion models with multiple receivers. 
\end{abstract}

\maketitle

\section{Introduction}

In their seminal work, Aumann et al. \cite{AMS1995} analyzed an undiscounted infinitely repeated game with one-sided incomplete information: one player (the informed) knows the stage game being played whereas the other (the uninformed) does not know and cannot observe payoffs, only actions. They showed that this game has a value and constructed optimal strategies for the players. Matters are more complicated if the informed player were to play against {\it more than one} uninformed player, as it would be the case of a military power (e.g., USA) negotiating with two different countries (e.g., Russia and Iran).\footnote{The USA may want to conceal from Russia the exact size of its arsenal, and at the same time may want to reveal it to Iran to leverage its bargaining position. More examples in this line can be found in Aumann et al. \cite{AMS1995}.} By observing what the informed player plays against some other uninformed player, an uninformed player can make inferences about the game he plays against the informed player. As a consequence, it may not be optimal for the informed player to play his unilaterally optimal strategy against some of the uninformed players. Put differently, the {\it information spillover} among the games played between the informed player and the uninformed players adds layers of complexity to the analysis. 

We consider a three-player undiscounted infinitely repeated game in which one of the players is informed of the two zero-sum stage games that he plays against each of the other two (uninformed) players. Each uninformed player only knows the prior probability distribution over the finite set of pairs of zero-sum finite-action stage games, and during the play of the game observes the profiles of actions (but not the payoffs). The informed player collects the sum of payoffs from the two component games. 

In the absence of information spillover, for instance when each uninformed player cannnot observe the actions played in the other zero-sum game, our three-player game has a single expected payoff, namely, the sum of values of each of the two-player component games. However, when all players are able to observe the actions played across each zero-sum game, the information spillover kicks in and it is in principle unclear whether the informed player can attain the sum of values in equilibrium. This sum of values can actually be shown to be an upper bound on the equilibrium payoffs of the informed player in our three-player game.

Our first main result has two parts which together provide a condition under which the informed player can attain this upper bound in equilibrium, even in the presence of information spillover. More precisely, under such condition, we show the informed player can attain anything as an equilibrium payoff from his individually rational payoff to the above mentioned upper bound, thus characterizing the set of equilibrium payoffs in the three-player game.  In particular, this result implies that the three-player model we analyse might have a continuum of equilibrium payoffs, even though it is a zero-sum model. 

The method used to obtain this first result is also of interest to the model of two-player games studied by Aumann et al. \cite{AMS1995}. Under a sufficient condition on the stage payoffs, we show that different optimal strategies from those constructed by Aumann et al. exist.\footnote{We refer here to the so-called ``splitting strategies'' where the informed player first signals information about the underlying stage game, so as to ``concavify'' the nonrevealing value function.} The strategy of the informed player, in particular, does not involve any signalling on path of play, even when the standard optimal strategy constructed by Aumann et al. necessarily does. 

In a second result we provide a necessary condition for the existence of equilibria yielding the upper bound to the informed player. We explore two consequences of this result. First, we show that a natural class of equilibria which involve signalling on equilibrium path never pays the upper bound to the informed player. Second, we present an example showing that the effects of information spillover might be very severe, in the sense that the informed player is not able to attain the upper bound in equilibrium.

Given the recent heightened interest in Bayesian Persuasion (BP) (since \cite{KG2011}’s seminal contribution) and its overtones with Aumann et al., it is interesting to compare the effects of information spillover in a multiple-receiver setting in BP with the results of this paper. We consider a model with one sender and two receivers under two alternative specifications: (i) the sender can send private messages to each receiver and (ii) the sender can only send public messages. We show that the difference in the sender’s payoff from (i) to (ii) can be interpreted as a loss due to information spillover, but a similar intuition cannot be extended to our repeated-game model.


\subsection{Related Literature}To the best of our knowledge, the model analyzed here is new. Although the model we analyse is zero-sum, the results and the techniques presented remain closer to the non-zero-sum literature, especially to Hart \cite{SH1985} and Sorin \cite{SS1983}. We highlight here a few additional papers on discounted and undiscounted repeated games with incomplete information that have technical and thematic similarities to this project. A significant part of the literature on undiscounted repeated games with incomplete information analyses models under the assumption of ``known own payoffs'' (see Forges \cite{FF1992}).  This is a reasonable assumption in several applications and allows for equilibrium-payoff characterizations which are especially tractable (see  Shalev \cite{JS1988}). Under this assumption, Forges and Solomon \cite{FS2015} provide a simple characterization of Nash equilibrium payoffs in undiscounted repeated games of two players and incomplete information.\footnote{An additional assumption needed for the characterization is the existence of ``uniform punishment strategies'' for the players in the stage game, that is, strategies that allow a player to be punished by holding his payoffs at his ex-post individually rational level.} This characterization is used to show that in a class of public good games, uniform equilibria might not exist. More closely related to our paper in terms of the information environment is Forges et al. \cite{FHS2016}. In this paper, among other results, cooperative solutions of one-shot  games with two players and exactly one informed player are related to noncooperative solutions of two-player repeated games with exactly one informed player. More specifically, under the assumption of existence of uniform punishment strategies for the uninformed player, the \textit{joint plan} equilibrium payoffs of the repeated game equal the set of cooperative solutions of the one-shot game. This \textit{folk theorem} is not however an equilibrium payoff characterization, since it is known from Hart \cite{SH1985} that joint plans cannot account for the whole of equilibrium payoffs in general.

Our work was inspired by Huangfu and Liu \cite{HL2021} who considered the issue of information spillovers between markets. In their model, a seller holds private information about the quality of goods he sells in two different markets and buyers learn about the seller's private information from observing past trading outcomes not only in the market in which they directly participate, but also from observing the outcomes of the other market. The authors show that, under certain assumptions on the correlation of qualities between goods in different markets, information spillover mitigates the negative effects of adverse selection. 

The literature on Bayesian Persuasion with multiple receivers is extensive. We highlight here a few papers that connect to our discussion in the last section. Wang \cite{YW2013} presents a Bayesian Persuasion model with multiple Receivers whose payoffs depend not only on the unobserved state drawn by Nature but also on the action choices of the other receivers. Concretely, the Receivers vote on the outcome after receiving a signal about the underlying state. The paper compares two environments: one in which all Receivers observe a public message drawn from the experiment chosen by the Sender, and another in which each Receiver observes a private message independently drawn from the same experiment set out by the Sender. The main difference with the models we analyse in our last section is the payoff-interdependencies of the receivers: in our models Receivers do not care about each other's action choices, but only about their own. This implies that there is in effect no game among Receivers in our model, only a decision problem. The payoff interdependencies among Receivers coupled with the assumption of the independence of the messages drawn from the Sender's experiment drive the result of that paper in which the best equilibrium payoffs for the Sender are higher in the public compared to the private message environment, a result that is the opposite of what we find. 
	
Arieli and Babichenko \cite{AB2019} are interested in what happens with optimal experiments under different assumptions for the utility of the Sender as well as payoff interdependencies of the Receivers. One result is particularly reminiscent of the one we obtain in our analysis: a specification of the public Bayesian Persuasion model we define in our paper (say, with two perfectly correlated states for each Receiver and a well-chosen utility function for the Sender) yields \cite{AB2019}'s model, where the utility of the Sender is additive over the Receivers, and therefore supermodular in particular. Theorem 3 in that paper then applies and a public signaling experiment is optimal if and only if all Receivers have the same ``persuasion level'', i.e., are essentially identical from the point of view of the optimal persuasion policy. Therefore, if Receivers meaningfully differ, the optimal experiment must send private messages. In our words, this result could be read as saying that unless information spillover is not meaningful (i.e., Receivers have identical persuasion levels), public Bayesian Persuasion yields lower a equilibrium payoff for the Sender compared to private Bayesian Persuasion. Finally, Koessler, Laclau, and Tomala \cite{KLT2022} generalize the standard Bayesian Persuasion model in a number of directions, most importantly by considering multiple Senders and multiple Receivers.


\subsection{Organization}The remainder of the paper is organized as follows. Section \ref{SEC: Model} presents the model. Our first main result is divided in 2 parts: main result 1 and main result 2. Section \ref{SEC: EPC} presents our main result 1. Section \ref{SEC: empty} presents our main result 2. Section \ref{BP} compares the equilibrium payoff set in our model with that of a Bayesian Persuasion model with multiple receivers and Section \ref{SEC: conc} concludes. The proofs of technical results are left to the Appendix. Additional results can be found in the Supplemental Appendix.

\section{Model and Equilibrium Concept}\label{SEC: Model}

\subsection{Notation}Given a finite set $K$, $\Delta(K)$ is the set of probability distributions over $K$; given a topological space $X$  the interior of $X$ will be denoted by $\text{int}(X)$. If $X$ is a subspace, its boundary will be denoted by $\partial(X)$. For a set $Y \subseteq \Re^{m}$, its convex closure is denoted by $\text{co}(Y)$.  For $p \in \Delta(K_A \times K_B)$, $p_A$ (resp. $p_B$) denotes its marginal on $K_A$ (resp. $K_B$), and $\supp(p)$ its support. We denote a product distribution on $K_A \times K_B$ by $p_A\bigotimes p_B$, and use $\Delta(K_A) \bigotimes \Delta(K_B)$ to denote the set of all such distributions.

\subsection{Model}A \textit{three-player infinitely repeated zero-sum game with two-sided incomplete information}, denoted $\mathcal{G}(p^0)$, is given by the following data: 
\begin{itemize}

\item Three players, namely player 1 (the informed player), player 2 and player 3 (the uninformed players).
\item Finite sets: $I_i$, $J_i$, $K_i$, $i=A,B$  with $I_A\times I_B$ (resp. $J_A$ and $J_B$) being the set of actions of player 1 (resp. players 2 and 3), and $K_A\times K_B $ being the set of states.
\item $p^0 \in \Delta(K_A \times K_B)$ is the prior.
\item For each $k_A \in K_A$ and $k_B \in K_B$,  $A^{k_A}$ and $B^{k_B}$ matrices of dimensions $|I_A| \times |J_A|$ and $|I_B| \times |J_B|$ respectively. $A^{k_A}$ and $B^{k_B}$ are the stage-game payoff matrices. 
\end{itemize}

The play of the infinitely repeated game is as follows: 

\begin{itemize}
\item At stage 0, state $(k_A, k_B) \in K_A \times K_B$ is drawn according to distribution $p^0$ and only player 1 knows the draw.  
\item At each stage $t=1,2,...$, the players independently choose an action in their own set of actions: player $1$ chooses $(i^t_A, i^t_B) \in I_A \times I_B$ and players $2$ and $3$ choose $j^t_A \in J_A$ and $j^t_B \in J_B$, respectively. The stage payoff to player 1 is then  $A^{k_A}_{i^t_A,j^t_A} + B^{k_B}_{i^t_B, j^t_B}$; to player 2, $-A^{k_A}_{i^t_A, j^t_A}$ and to player 3, $-B^{k_B}_{i^t_B, j^t_B}$. Monitoring is perfect, i.e., the chosen actions are observed by all players before starting stage $t+1$. Realized payoffs are not observed by the players (though player 1 knows them, since he is fully informed).
\end{itemize}

Players are assumed to have perfect recall and the whole description of the game is common knowledge. 

A behavior strategy for player 1 is a tuple $\s = (\s^{(k_A,k_B)})_{(k_A,k_B) \in K_A \times K_B}$, where for each $(k_A,k_B) \in K_A \times K_B$, $\s^{(k_A,k_B)} = (\s^{(k_A,k_B)}_t)_{t \geq 1}$ and $\s^{(k_A,k_B)}_t$ is a mapping from the Cartesian product $H_t := (I_A \times J_A \times I_B \times J_B)^{t-1}$ (with $H_0 := \{\emptyset\}$) to $\D(I_A \times I_B)$, giving the lottery on actions played by player 1 at a stage $t$, when the state is $(k_A, k_B)$. Because players 2 and 3 do not know the state, a behavior strategy for player 2 (resp. player 3) is an element $\t_A = (\t_{A,t})_{t \geq 1}$ (resp. $\t_B = (\t_{B,t})_{t \geq 1}$), where $\t_{A,t}$ (resp. $\t_{B,t}$) is a mapping from $H_t$ to $\D(J_A)$ (resp. $\D(J_B))$, giving the lottery on actions to be played by player 2 (resp. player 3) on stage $t$. The set of behavior strategies of player 1 is denoted by $\S$; for player 2, it is denoted by $\T_A$ and for player 3,\ it is denoted by $\T_B$. 


A behavior strategy profile $(\sigma, \tau_A, \tau_B)$ induces, for every state $(k_A, k_B)$ and stage $T>0$, a probability distribution on $H_{T+1}$. Also, $(\sigma, \tau_A, \tau_B)$ and $p^0$ induce a probability distribution over $K_A \times K_B \times H_{T+1}$. We can thus define the expected average payoffs (with $\kappa$ being a random variable taking values on $K_A \times K_B$ distributed according to $p^0$, $\kappa_A$ the random variable obtained from projecting $\kappa$ on $K_A$ and $\kappa_B$ the random variable obtained from projecting $\kappa$ on $K_B$):


 $$\a^{k_A, k_B}_T = \alpha^{k_A, k_B}_T (\sigma,\tau_A,\tau_B) := \mathbb{E}_{\sigma^{(k_A, k_B)},\tau_A, \tau_B}\Big[\frac{1}{T} \sum^{T}_{t=1}(A^{k_A}_{i^t_A,j^t_A} + B^{k_B}_{i^t_B, j^t_B})\Big], $$

$$\beta^{A}_T(\sigma, \tau_A, \tau_B) :=  \mathbb{E}_{\sigma,\tau_A, \tau_B,p^0}\Big{[}\frac{1}{T} \sum^{T}_{t=1}(-A^{\kappa_A}_{i^t_A, j^t_A})\Big{]},$$

 $$\beta^{B}_T(\sigma, \tau_A, \tau_B) : = \mathbb{E}_{\sigma,\tau_A, \tau_B,p^0}\Big{[}\frac{1}{T} \sum^{T}_{t=1}(-B^{\kappa_B}_{i^t_B, j^t_B})\Big{]}.$$

The number $\alpha^{k_A, k_B}_T (\sigma,\tau_A,\tau_B)$ is the expected average payoff (up to time $T$) of player 1; $\beta^{A}_T(\sigma, \tau_A, \tau_B)$ is the expected average payoff (up to time $T$) of player 2 and $\beta^{B}_T(\sigma, \tau_A, \tau_B)$ is the expected average payoff (up to time $T$) of player 3.

The model defined is therefore a ``combination'' of two zero-sum games: at each stage, player 1 plays simultaneously a zero-sum game against player 2 and another zero-sum game against player 3, collecting the sum of the payoffs of each of these games; players 2 and 3 are the minimizers in each of the zero-sum game they play against player 1. One distinctive and important aspect of this model is the fact that each uninformed player can observe not only the actions played in his own zero-sum game, but also the actions played in the other zero-sum game.

\subsection*{Equilibrium Concept}

A profile $(\sigma, \tau_{A}, \tau_{B})$ is a \textit{uniform equilibrium} of $\mathcal{G}(p^0)$ when:

\begin{enumerate}

\item For each $(k_A, k_B) \in \text{supp}(p^0)$,\footnote{Remark \ref{important remark} in Section \ref{SEC: EPC} shows that assuming the prior $p^0 \in \text{int}(\D(K_A \times K_B))$ (as customary in the literature) is not without loss of generality for the results in this paper. This is why we present the definition requiring convergence of $(\a^{k_A, k_B}_T)_{T \geq 1}$, $(k_A, k_B) \in \text{supp}(p^0)$. The same reasoning applies to condition (2).} $(\alpha^{k_A, k_B}_T (\sigma,\tau_A,\tau_B))_{T \geq 1}$ converges as $T$ goes to infinity to some $\alpha^{k_A, k_B}(\sigma,\tau_A,\tau_B)$, $(\beta^{A}_T(\sigma, \tau_A, \tau_B))_{T \geq 1}$ converges to some $\beta^A(\sigma, \tau_A, \tau_B)$ and $(\beta^{B}_T(\sigma, \tau_A, \tau_B))_{T \geq 1}$ converges to some $\beta^B(\sigma, \tau_A, \tau_B)$.

\item For each $\epsilon >0$, there exists a positive integer $T_0$ such that for all $T \geq T_0$, $(\sigma,\tau_A, \tau_B)$ is an $\epsilon$-Nash equilibrium in the finitely repeated game with $T$ stages, i.e., \begin{enumerate} 
\item For each $(k_A, k_B) \in \text{supp}(p^0)$ and $\s' \in \S$, $\alpha^{k_A, k_B}_T(\sigma', \tau_A, \tau_B) \leq \alpha^{k_A,k_B}_T(\sigma, \tau_A, \tau_B) + \epsilon$; 
\item For each $\tau_A' \in \mathcal{T}_A$, $\beta^{A}_T(\sigma,\tau_A', \tau_B) \leq \beta^{A}_T(\sigma,\tau_A, \tau_B) + \epsilon$;
\item For each $\tau_B' \in \mathcal{T}_B$,  $\beta^{B}_T(\sigma,\tau_A, \tau_B') \leq \beta^{B}_T(\sigma,\tau_A, \tau_B) + \epsilon$.
\end{enumerate}

\end{enumerate}

Uniform equilibrium is a standard equilibrium concept for the analysis of undiscounted repeated games. It contains a strong requirement, namely (2), which posits that the profile $(\s,\t_A,\tau_B)$ must generate an $\epsilon$-Nash equilibrium on the finite  but sufficiently long-horizon ($T \geq T_0$, where $T < \infty$) version of our model.\footnote{One notable aspect of this equilibrium notion is that uniform equilibria in our model are approximate Nash equilibria in the discounted version of our model: if $(\s, \t_A, \t_B)$ is a uniform equilibrium, then $(\s, \t_A, \t_B)$ is a $\e$-Nash equilibrium of the discounted versions of our model for a sufficiently high discount factor. See Theorem 13.32 in Maschler et al. \cite{MSZ2013}.}

Unless explicitly stated otherwise, from now on whenever we refer to uniform equilibrium or equilibria we will use simply \textit{equilibrium} or \textit{equilibria}.

If $(\sigma, \tau_A, \tau_B)$ is an equilibrium in $\mathcal{G}(p^0)$, the associated vector $$(\alpha(\sigma, \tau_A, \tau_B), \beta^A(\sigma, \tau_A, \tau_B), \beta^B(\sigma, \tau_A, \tau_B)),$$ where $\alpha(\sigma,\tau_A, \tau_B) := (\alpha^{k_A, k_B}(\sigma, \tau_A, \tau_B))_{(k_A,k_B) \in \text{supp}(p^0)}$, is the \textit{vector of payoffs of}  $(\sigma, \tau_A, \tau_B)$. Also  $\alpha(\sigma,\tau_A, \tau_B) \cdot p^0$ (where $\cdot$ is the standard scalar product in Euclidean space) is the \textit{ex-ante equilibrium payoff} of the informed player. 

\subsection{Preliminaries on the Aumann et al. model}\label{sec:AM}Our analysis of the equilibrium payoff set of the game $\mathcal{G}(p^0)$ in the next section will rely on certain properties of each of the two-player, infinitely repeated zero-sum games that the informed player plays against each uninformed player. For this reason we now recall some of the main results in Aumann et al. \cite{AMS1995}, which is the original reference for this two-player model. Let $K$ be the finite set of states and $M = (M^k)_{k \in K}$ a collection of zero-sum payoff matrices where $M^{k} \in \mathbb{R}^{I \times J}$ for each $k \in K$. Denote by $G_M(p)$ the \textit{infinitely repeated, two-player, zero-sum game with one-sided incomplete information} with prior $p \in \Delta(K)$ and undiscounted payoffs (see Sorin \cite{SS2002}, Chapter 3, for a detailed description of this model or Aumann et al. \cite{AMS1995}). Let $M(p) = \sum_{k \in K} p^k M^k$ and define $v_{M}(p) = \text{min}_{t \in \D(J)}\text{max}_{s \in \D(I)}s M(p) t = \text{max}_{s \in \D(I)}\text{min}_{t \in \D(J)}s M(p) t$, where $s$ is a row vector and $t$ a column vector. The function $q \in \D(K) \mapsto v_{M}(q) \in \Re$ is called the {\it non-revealing value function}. Let $\text{Cav}(v_M)$ be the (pointwise) smallest concave function $g$ from $\Delta(K)$ to $\mathbb{R}$ such that $g(q) \geq v_M(q)$ for all $q \in \Delta(K)$. Alternatively, one can define $\text{Cav}(v_M)(q) := \text{sup} \{\sum^k_{i=1}\alpha_i v_M(q_i) | \exists k \in \mathbb{N}, \forall i \in \{1,...,k\}, \alpha_i \geq 0,  \sum^k_{i=1}\alpha_i q_i = q, \sum^k_{i=1}\alpha_i =1\}$. Aumann et al. \cite{AMS1995} proved that a (uniform) value of $G_M(p)$ exists and equals $\text{Cav}(v_M)(p)$. They also showed how to construct (uniformily) optimal strategies for both players. 

Given the model $\mathcal{G}(p^0)$, the infinitely repeated, two-player, zero-sum game with one-sided incomplete information with prior $p^0_A$ defined by states $K_A$ and payoff matrices $(A^{k_A})_{k_A \in K_A}$ with undiscounted payoffs will be denoted $G_A(p^0_A)$ -- this game is played by players 1 (informed) and 2 (uninformed). Analogously, we define $G_B(p^0_B)$ as the two-player, infinitely repeated, zero-sum game with one-sided incomplete information played between players 1 and 3. The two-player, infinitely repeated zero-sum game with one-sided incomplete information and prior $p^0 \in \Delta(K_A \times K_B)$ with undiscounted payoffs, where stage payoff matrices are $(C^{k_A, k_B})_{k_A \in K_A, k_B \in K_B}$ given by $C^{k_A, k_B}_{i_A,i_B,j_A,j_B} := A^{k_A}_{i_A,j_A} + B^{k_B}_{i_B, j_B}$ will be denoted $G_{A+B}(p^0)$. This two-player game will be used as an auxiliary game to construct strategies in the three-player game $\mathcal{G}(p^0)$.

\subsection{Example}\label{Example 1}
We would like to illustrate the new strategic difficulties  that arise in the model $\mathcal{G}(p^0)$ in comparison to the two-player zero-sum model of Aumann et al. \cite{AMS1995}. Specifically, we would like to show that in this example that if the informed player plays the optimal strategies constructed by Aumann et al. in each game $G_A(p^0_A)$ and $G_B(p^0_B)$, he cannot guarantee the  ex-ante expected payoff equal to the sum of values Cav$(v_A)(p^0_A)$ + Cav$(v_B)(p^0_B)$. 

Two sets $A = \{A^1,A^2\}$ and $B = \{B^1,B^2\}$ of payoff matrices are defined below together with $p^0 \in \Delta(K_A \times K_B)$, where $K_A = \{1,2\}$ and $K_B = \{1,2\}$.

$$p^0 = \begin{bmatrix} 1/2 & 0 \\ 0 & 1/2 \end{bmatrix}$$
$$\\
A^1 = \begin{bmatrix} 1 & 0 \\ 0 & 0  \end{bmatrix} \,\, A^2 = \begin{bmatrix} 0 & 0 \\ 0 & 1 \end{bmatrix}
$$
$$
B^1 = \begin{bmatrix} 4 & 0 & 2 \\ 4 & 0 & -2 \end{bmatrix} \,\,B^2 = \begin{bmatrix}0 & 4 & -2 \\ 0 & 4 & 2 
\end{bmatrix} 
$$\\

In the matrix $p^0$, an entry $p^0_{ij}$ corresponds to the probability with which Nature chooses $A^{i}$ and $B^{j}$. So, the stage-payoffs in the two zero-sum games are given by $A^1$ and $B^1$ with probability $1/2$, and they are given by $A^2$ and $B^2$ with probability $1/2$. Since the prior assigns perfect correlation between states, there are only two states to consider effectively: states $(1,1)$ and $(2,2)$. 

In figures \ref{FigCavvA} and \ref{FigCavvB}, $q$ denotes the probability of state $(1,1)$ and $1-q$ the probability of state $(2,2)$. Each row of $A^i$ and $B^j, i,j \in \{1,2\}$ corresponds to a stage-game action of the informed player: call the first row ``$U$'' and the second row ``$D$''. 
By computation, we get that the non-revealing values are:

\[v_A(q)=q(1-q)\text{ for all }q \in [0,1]\]  

\[v_B(q) = \begin{cases}
       4q        &\text{if }  q \in [0,1/4)\\
       -4q +2    &\text{if }  q \in [1/4,1/2)\\
       4q -2    &\text{if }  q \in [1/2,3/4)\\
       -4q + 4 &\text{if }  q \in [3/4,1]
    \end{cases}\]
 
These imply that the concavification of these values are: 

\[\text{Cav}(v_A)(q)= v_A(q)=q(1-q)\text{ for all }q \in [0,1]\]

\[\text{Cav}(v_B)(q) = \begin{cases} 
       4q & \text{if } q \in [0,1/4)\\
       1 & \text{if } q \in [1/4,3/4)\\
       -4q + 4 & \text{if } q \in [3/4,1]
     \end{cases}\]

\begin{figure}[t]
\centering{}%
\caption{Graphs of $\text{Cav}(v_A)$ and $v_A$}
\begin{tikzpicture}\label{FigCavvA}

  \draw[->] (0,0) -- (4.5,0) node[right] {$q$};
  \draw[->] (0,-1) -- (0,2) node[above] {$\mathbb{R}$};
  \draw[scale=4,domain=0:1,smooth,variable=\p,blue, thick] plot ({\p},{\p*(1-\p)});
 
\node at (4,-0.25) (noderight) {1};
\node at (-0.25,-0.25) (noderight) {0};

\end{tikzpicture}

\caption{Graphs of $\text{Cav}(v_B)$ and $v_B$}
\begin{tikzpicture}\label{FigCavvB}

 \draw[->] (0,0) -- (4.5,0) node[right] {$q$};
  \draw[->] (0,-1) -- (0,4) node[above] {$\mathbb{R}$};
  \draw[scale=4,domain=0:0.25,smooth,variable=\p,blue, thick] plot ({\p},{4*(\p)});
  \draw[scale=4,domain=0.25:0.5,smooth,variable=\y,blue, thick ] plot ({\y},{-4*(\y)+2});
  \draw[scale=4,domain=0.5:0.75,smooth,variable=\y,blue, thick] plot ({\y},{4*(\y)-2});
  \draw[scale=4,domain=0.75:1,smooth,variable=\y,blue, thick] plot ({\y},{-4*(\y)+4});
  \draw[scale=4,domain=0.25:0.75, densely dotted, variable=\s,red, thick] plot ({\s},{1});
  
\node at (3,-0.35) (noderight) {$\frac{3}{4}$};
\node at (1,-0.35) (noderight) {$\frac{1}{4}$};
\node at (2, -0.35) (noderight) {$\frac{1}{2}$};
\node at (4,-0.25) (noderight) {1};
\node at (-0.25,-0.25) (noderight) {0};

\end{tikzpicture}
\end{figure} 

\bigskip

We present the optimal strategy of the informed player in the game $G_B(1/2)$. The optimal strategy of the informed player in the two-player repeated zero-sum game $G_B(1/2)$ is defined as follows:  in case the state drawn by Nature is 1, the informed player plays ``$U$''  with probability $1/4$ and after that plays at each stage, independently, the optimal action of the one-shot zero-sum game whose matrix is $B(1/4)$; in case the state drawn by Nature is 2, the informed player plays ``$U$'' with probability $3/4$ and, after that, plays the optimal action of the one-shot zero-sum game whose matrix is $B(3/4)$. 

After observing the realized action of the informed player in the first stage, the uninformed player updates his beliefs about the states to \textit{posteriors} about states 1 and 2: in our example, the uninformed player, after observing ``$U$'', assigns probability $1/4$ to the state being 1. After observing ``$D$'', the uninformed player assigns probability $3/4$ to the state being $1$. The strategy just described for the informed player is an example of a \textit{signalling strategy}: the informed player uses his actions to signal information about the underlying state. 

After the first stage, according to the construction described, no more information is signaled and the uninformed player plays the optimal action of the one-shot zero-sum game with matrix $B(1/4)$ or $B(3/4)$ forever, depending on whether $U$ or $D$ was realized, respectively. Playing the signalling strategy guarantees to the informed player an ex-ante payoff of $(1/2) v_B(1/4) + (1/2) v_B (3/4) = (1/2) \text{Cav}(v_B)(1/4) + (1/2) \text{Cav}(v_B)(3/4) = \text{Cav}(v_B)(1/2) = 1$. 
	
Now, in game $G_A(1/2)$ the non-revealing value function of the informed player is strictly concave, which implies that his optimal strategy in this game is non-revealing (at any prior): one optimal strategy for the informed player is to play at each stage the optimal action of the one-shot zero-sum game with matrix $A(1/2)$ independently forever, which generates no uptading of the beliefs on the part of the uninformed players. 

If the informed player uses the signalling strategy described in $G_B(1/2)$, because of perfect correlation between $\kappa_A$ and $\kappa_B$, this strategy also induces the same updating on the part of the (uninformed) player 2, inducing,  similarly, posteriors $1/4$ and $3/4$ in the two-player zero-sum repeated game $G_A(1/2)$. By using that strategy in game $G_B(1/2)$, the informed player in game $G_A(1/2)$ can now only guarantee $(1/2)\text{Cav}(v_A)(1/4) + (1/2) \text{Cav}(v_A)(3/4) = 3/16 < 1/4 = \text{Cav}(v_A)(1/2)$. 

Therefore, if the informed player plays the strategies described above in games $G_A(1/2)$ and $G_B(1/2)$, he cannnot guarantee in $\mathcal{G}(p^0)$ the sum of the uniform values of each zero-sum game i.e., Cav$(v_A)(p^0_A)$ $+$ Cav$(v_B)(p^0_B)$. 

\section{Main Result 1: Equilibrium Payoff-Set Characterization}
\label{SEC: EPC}

Our first main result (main result 1) has two parts: The first part of main result 1 is Theorem \ref{mainthm1} which provides a sufficient condition under which  the ex-ante equilibrium payoffs of the informed player in $\mathcal{G}(p^0)$ permits a simple characterization. The second part of our main result 1 is Theorem \ref{main1}, which provides a general class of games under which that sufficient condition holds. The two results provide a condition under which $\mathcal{G}(p^0)$ has a continuum of equilibrium payoffs.

This section is subdivided in three subsections. The first of these, subsection \ref{sec:MR1a}, is dedicated to Theorem \ref{mainthm1} and the main ideas of its proof. Subsection \ref{sec:MR1b} is dedicated to Theorem \ref{main1}. Finally, subsection \ref{SEC: 2-player games} highlights a by-product for the theory of two-player zero-sum games of the equilibrium constructions used in Theorems \ref{mainthm1} and \ref{main1}.


\subsection{First Part of Main Result 1}\label{sec:MR1a}

We now introduce the necessary concepts and state our main result 1 in full generality. Paralleling the definitions of the previous section, we denote the set of histories at stage $t \geq 1$ for a two-player, zero-sum, infinitely repeated, undiscounted game with one-sided incomplete information $G_A(p^0_A)$ by $H^A_t$ with generic element $h^A_t$. The notation for a behavior strategy of the informed player in $G_A(p^0_A)$ is exactly analogous to the one defined for player 1 in the three-player game $\mathcal{G}(p^0)$: $\s_A = (\s^{k_A}_A)_{k_A \in K_A}$ and  $\s^{k_A}_A = (\s^{k_A}_{At})_{t \geq 1}$, with $\s^{k_A}_{At} : H^A_t \to \D(I_A)$.

\begin{definition} 
 An equilibrium $(\s_A, \t_A)$ of $G_A(p^0_A)$ is \textit{non-revealing} if for each $t \geq 1$, $k_A, k'_A \in \supp(p^0_A)$ and $h^A_t \in H^A_t$ played with positive probability by $(\s_A, \t_A)$, we have $\s^{k_A}_{At}(h^A_t) = \s^{k'_A}_{At}(h^A_t)$. Analogously, an equilibrium $(\s_B, \t_B)$ of $G_B(p^0_B)$ is \textit{non-revealing} if for each $t \geq 1$, $k_B, k'_B \in \supp(p^0_B)$ and $h^B_t \in H^B_t$ played with positive probability by $(\s_B, \t_B)$, we have $\s^{k_B}_{Bt}(h^B_t) = \s^{k'_B}_{Bt}(h^B_t)$.
\end{definition} 

In non-revealing equilibria the informed player makes no use of his private information on-path of the equilibrium play. Therefore, the uninformed player can infer nothing from the actions played at each stage, which leaves the ``posterior'' unchanged and equal to the prior. 

In $\mathcal{G}(p^0)$, the number $\text{Cav}(v_A)(p^0_A)+ \text{Cav}(v_B)(p^0_B)$  is an upper bound on the ex-ante equilibrium payoffs of player 1, because each uninformed player can always play the optimal strategy of his repeated zero-sum game, holding the payoffs of the informed player at most at $\text{Cav}(v_A)(p_A)+ \text{Cav}(v_B)(p_B)$.  On the other hand, letting $\mathfrak{h}(p) := v_A(p_A)+ v_B(p_B)$, a lower bound on the ex-ante equilibrium payoffs of the informed player is given by the concavification of $h$ evaluated at $p^0$.\footnote{Consider the two-player infinitely repeated zero-sum game $G_{A+B}(p^0)$.  It is straightforward to check that the non-revealing value of $G_{A+B}(p^0)$ is $\mathfrak{h}(p^0)$. From Aumann et al. \cite{AMS1995}, the value of $G_{A+B}(p^0)$ is $\text{Cav}(\mathfrak{h})(p^0)$, so the informed player can guarantee $\text{Cav}(\mathfrak{h})(p^0)$.}

For $p^0 \in \Delta(K_A \times K_B)$, let $$I(p^0) = [\text{Cav}(\mathfrak{h})(p^0), \text{Cav}(v_A)(p^0_A)+\text{Cav}(v_B)(p^0_B)].$$ We call $\text{Cav}(\mathfrak{h})(p^0)$ the \textit{lower end of} $I(p^0)$ and $\text{Cav}(v_A)(p^0_A)+\text{Cav}(v_B)(p^0_B)$ the \textit{upper end of} $I(p^0)$. The interval $I(p^0)$ might be degenerate as well as non-degenerate. Whenever $I(p^0)$ is non-degenerate, for any sufficiently small perturbation of the stage-game payoff matrices, the resulting model $\mathcal{G}(p^0)$ also has an associated interval $I(p^0)$ which is non-degenerate. A proof of this robustness property can be found in the Supplemental Appendix (Proposition \ref{nondeg}). As an example, the interval $I(p^0)$ of the game $\mathcal{G}(p^0)$ of Example \ref{Example 1} is non-degenerate: we have $\text{Cav}(v_A)(p_A^0) + \text{Cav}(v_B)(p_B^0) = 1/4 + 1 > 1 + 3/16 = \text{Cav}(\mathfrak{h})(p^0)$. There are cases where $I(p^0)$ is degenerate (for example, when $p^0 \in \D(K_A) \bigotimes \D(K_B)$). In these cases only one ex-ante equilibrium payoff exists (without any assumptions on the games $G_A(p^0_A)$ and $G_B(p^0_B))$. The first part of our main result can now be stated: 

\begin{theorem}\label{mainthm1}
Let $p^0 \in \D(K_A \times K_B)$. Suppose there exist non-revealing equilibria in $G_A(p^0_A)$ and $G_B(p^0_B)$. Then the set of ex-ante equilibrium payoffs of the informed player is $I(p^0)$.
\end{theorem}

\subsection*{On the proof of Theorem \ref{mainthm1}} First, the lower end of $I(p^0)$ is always an ex-ante equilibrium payoff for the informed player in $\mathcal{G}(p^0)$ -- i.e., no additional assumption is required. A proof of this result can be found in the Supplemental Appendix (Proposition \ref{existence lower bound}).\footnote{We do not include this proof here because it derives essentially from a long application of the work of Simon et al. \cite{SST1995} to our three-player environment.} When the upper end of $I(p^0)$ is also an ex-ante equilibrium payoff for the informed player, then we can use a straightforward application of \textit{jointly controlled lotteries} developed in Aumann et al. \cite{AMS1995} in order to obtain that the whole interval $I(p^0)$ can be attained as an ex-ante equilibrium payoff to the informed player.\footnote{A jointly controlled lottery is a public randomization device that is endogenously generated by the players so that the players can coordinate. We can define a jointly controlled lottery that randomizes between equilibria paying the upper and lower ends of $I(p^0)$ with any given probability, in order to obtain any given number in $I(p^0)$ as an ex-ante equilibrium payoff of the informed player.  See Aumann et al. \cite{AMS1995} p. 272 for details. For completeness, the explicit construction of the equilibrium involving jointly controlled lotteries is included in the Supplemental Appendix (Proposition \ref{continuum}).} Therefore, the only remaining task to obtain a proof of Theorem \ref{mainthm1} is to show that the upper end of $I(p^0)$ is an ex-ante equilibrium payoff for the informed player.  

We define the set of \textit{non-revealing equilibrium payoffs} (of the informed player), denoted  $\mathcal{NR}(p^0)$, of game $\mathcal{G}(p^0)$. Let $\mathcal{NR}(p^0)$ be the set of vectors $(\phi_A, \phi_B) \in \Re^{|K_A|} \times \Re^{|K_B|}$ that satisfy: 

\begin{enumerate} 
\medskip
\item (Feasibility) $$(\phi_A, \phi_B) \in F_A \times F_B,$$ where $F_A := \text{co} \{ (A^{k_A}_{i_A,j_A})_{k_A \in K_A}| i_A \in I_A, j_A \in J_A\}$ and $F_B := \text{co} \{(B^{k_B}_{i_B,j_B})_{k_B \in K_B}| i_B \in I_B, j_B \in J_B \}.$

\item (Individual rationality for player 1) $$\phi \cdot q \geq \mathfrak{h}(q), \, \forall q \in \D(\supp(p^0)),$$ where  $$\phi = (\phi^{k_A}_A + \phi^{k_B}_B)_{(k_A, k_B) \in \supp(p^0)}, \, \mathfrak{h}(q) = v_A(q_A) + v_B(q_B).$$

\item (Individual rationality for players 2 and 3) $$\phi_A \cdot p^0_A \leq \Cav(v_A)(p^0_A), \, \phi_B \cdot p^0_B \leq \Cav(v_B)(p^0_B).$$

\end{enumerate} 

\medskip

The three conditions above defining the set of non-revealing equilibrium payoffs parallel the conditions defined for two-player repeated games with a single informed player (see Hart \cite{SH1985}). We briefly recall the reason why $(\phi_A, \phi_B) \in \mathcal{NR}(p^0)$ implies we can construct equilibria in $\mathcal{G}(p^0)$ that reveal no information on path of play and have $\phi_A$ (resp. $\phi_B$) as the vectors of payoffs of the informed player in $G_A(p^0_A)$ (resp. $G_B(p^0_B)$). Let $(\phi_A, \phi_B) \in \mathcal{NR}(p^0)$. Since $\phi_A \in F_A$, $$\phi_A \, = \, \sum_{(i_A, j_A) \in I_A \times J_A}\l_{i_A, j_A}(A^{k_A}_{i_A,j_A})_{k_A \in K_A},$$ where  $$\sum_{(i_A, j_A) \in I_A \times J_A}\l_{i_A, j_A} =1, \,  \l_{i_A, j_A}\geq 0.$$ Consider then a sequence of (pure) actions $((i^t_A, j^t_A))_{t \geq 1}$ and define a function $\chi^{(i_A,j_A)}: I_A \times J_A \to \{0,1\}$ where $\chi^{(i_A,j_A)}(i'_A, j'_A) = 1$, if $(i'_A, j'_A) = (i_A, j_A)$, and $\chi^{(i_A,j_A)}(i'_A, j'_A) = 0$, if $(i'_A, j'_A) \neq (i_A, j_A)$. Assume that the sequence $h^A_{\infty} = ((i^t_A, j^t_A))_{t \geq 1}$ satisfies for each $(i_A, j_A) \in I_A \times J_A$, $\lim_{T \to +\infty}\frac{1}{T}\sum^{T}_{t=1} \chi^{(i_A, j_A)}(i^t_A, j^t_A) = \l_{i_A, j_A}$.\footnote{Lemma 2 in Sorin \cite{SS1983} shows such a sequence exists for any such $\l_{i,j}$.} When players 1 and 2 play the sequence of actions $h^A_{\infty}$ in $G_A(p^0_A)$, the payoff achieved is $\phi^{k_A}_A$, for each state $k_A \in K_A$. Obviously, a similar reasoning applies to $\phi_B \in F_B$, and $\phi^{k_B}_B$ is the payoff achieved by a deterministic sequence of actions played at each stage. Conditions (2) and (3) now imply that this deterministic path of play can be supported as an equilibrium path of play: condition (2) guarantees that, in case player 1 deviates from the deterministic sequence, players 2 and 3 can punish him. So it guarantees that under no possible state $(k_A, k_B)$ player 1 could obtain more that $\phi^{(k_A k_B)}$\footnote{Condition (2) implies that there is an optimal strategy of the uninformed player in the game $G_{A+B}(p^0)$ which guarantees the informed player will not obtain more than $\phi^{(k_A,k_B)}$ for each state $(k_A, k_B)$. In the Supplemental Appendix, Proposition \ref{approach strategy}, we show this strategy can indeed be played by players 2 and 3 in $\mathcal{G}(p^0)$.}; condition (3) guarantees, in turn, that any deviation of player $2$ (resp. player $3)$ can be punished by player $1$ with an optimal strategy of game $G_A(p^0_A$)(resp. $G_B(p^0_B)$).

We now define the set $NR_A(p^0_A)$ of non-revealing equilibrium payoffs of the two-player repeated game $G_A(p^0_A)$. The set $NR_A(p^0_A)$ is the set of vectors $\a_A \in \mathbb{R}^{|K_A|}$ that satisfies (i) $\a_A \cdot q \geq v_A(q)$, for all $q \in \D(K_A)$; (ii) $\a_A \cdot p^0_A = \text{Cav}(v_A)(p^0_A)$ and (iii) $\a_A \in F_A$. This is the set of equilibrium payoffs for which no signalling occurs on path. This set is essentially the specification for a two-player, zero-sum infinitely repeated game with one-sided incomplete information of the set of non-revealing equilibrium payoffs (called ``$G$'') defined for nonzero-sum two-player repeated games with lack on information on one side in Hart \cite{SH1985}(see p. $124$). Note that each $k_A$-th entry of a vector $\a_A$ in $NR_A(p^0_A)$ is the payoff for the informed player in $G_A(p^0_A)$ when the state is $k_A$. All vectors $\a_A$ in $NR_A(p^0_A)$ generate the same ex-ante payoff, i.e., the (uniform) value Cav$(v_A)(p^0_A) = \a_A \cdot p^0_A$, but there might be several vectors generating this payoff.  It is now easy to see that $NR_A(p^0_A) \times NR_B(p^0_B) \subseteq \mathcal{NR}(p^0)$. Since we assumed that $NR_A(p^0_A) \times NR_B(p^0_B) \neq \emptyset$, Theorem \ref{mainthm1} now follows immediately. 

\begin{remark}\label{important remark} We remark that property (2) of $\mathcal{NR}(p^0)$ differs from the usual assumption present in the literature, namely, that supp$(p^0) = K_A \times K_B$. The reason why we do not adopt this assumption is as follows: as shown above, $NR_{A}(p^0_A) \times NR_B(p^0_B) \subseteq \mathcal{NR}(p^0)$ for any $p^0 \in \D(K_A \times K_B)$; if $p^0 \in \text{int}(\D(K_A \times K_B))$, then we have in addition that $NR_A(p^0_A) \times NR_B(p^0_B) = \mathcal{NR}(p^0)$: to see this, take $(\phi_A, \phi_B) \in \mathcal{NR}(p^0)$. To show $(\phi_A, \phi_B)$ is in $NR_A(p_A) \times NR_B(p_B)$, we just have to check that property $(i)$ defining the sets $NR_A(p_A)$ and $NR_B(p_B)$ is satisfied; the other conditions are immediate. Suppose by contradiction there exists $\bar q_A \in \D(K_A)$ such that $\phi_A \cdot \bar q_A < v_A(\bar q_A)$. Fix now $\bar q_B \in \D(K_B)$ such that $\phi_B \cdot \bar q_B \leq v_B(\bar q_B)$. It follows that for $\bar q = \bar q_A \bigotimes \bar q_B \in \D(K_A \times K_B)$, $\phi_A \cdot \bar q_A + \phi_B \cdot \bar q_B <  v_A(\bar q_A) + v_B(\bar q_B) = \mathfrak{h}(\bar q)$, which is a contradiction, since $(\phi_A, \phi_B)$ satisfies Condition (2) of $\mathcal{NR}(p^0)$. Hence, when $p^0 \in \text{int}(\D(K_A \times K_B))$, $(\phi_A, \phi_B) \in \mathcal{NR}(p^0)$ implies that $\phi_A$ (respec. $\phi_B$) is a vector of equilibrium payoffs in $G_A(p^0_A)$ (respec. $G_B(p^0_B)$) of the informed player. 

One can see immediately that the argument above relies on the product structure of the set of states $\supp(p^0) = K_A \times K_B$; if one assumes a prior $p^0$ for which $\supp(p^0)$ is not a cartesian product, then the argument above cannot be repeated and, in fact, the claim is not true.\footnote{This reasoning also justifies why the assumption that is adopted throghout is $p^0_A \in \text{int}(\D(K_A))$ and $p^0_B \in \text{int}(\D(K_B))$ and not $p^0 \in \text{int}(\D(K_A \times K_B))$: if, for instance, $p^0_A$ does not have full support, then one can eliminate from the set of states in $K_A$ the ones that have $0$ probability under $p^0_A$ and the resulting space of states is also a product subset of $K_A \times K_B$.} The example below shows that for a certain $p^0 \notin \text{int}(\D(K_A \times K_B))$, $NR_{A}(p^0_A) \times NR_B(p^0_B) \subsetneq \mathcal{NR}(p^0)$. \end{remark}

\begin{example}We follow the notation of Example \ref{Example 1}. For $\e>0$ and $q^0 \in (0,1)$, consider the game $\mathcal{G}(p^0)$ given by the following data:

$$p^0 = \begin{bmatrix} q^0 & 0 \\ 0 & (1-q^0) \end{bmatrix}$$
$$\\
A^1 = \begin{bmatrix} 1 & 0 \\ 0 & 0  \end{bmatrix} \,\, A^2 = \begin{bmatrix} 0 & 0 \\ 0 & 1 \end{bmatrix}
$$
$$
B^1 = \begin{bmatrix} -\e & -\e \\ \e & \e  \end{bmatrix} \,\,B^2 = \begin{bmatrix} \e & \e  \\ -\e & -\e 
\end{bmatrix} 
$$\\

\end{example}

The first thing to observe is that $NR_B(p^0_B) = \emptyset$: notice that $\phi_B \in NR_B(p^0_B)$ iff $\phi_B = (\e, \e)$; but $F_B = \text{co} \{(\e,-\e), (-\e, \e)\}$, which clearly does not contain $(\e,\e)$. In particular, we therefore have that $NR_A(p^0_A) \times NR_B(p^0_B) = \emptyset$. One can now show that $\mathcal{NR}(p^0) \neq \emptyset$. We sketch the proof for completeness: fix $q^0 = \frac{1}{5}$. The game $G_A(p^0_A)$ trivially satisfies the condition $NR$ at $p^0_A$ (see Definition \ref{main def}), because $v_A$ is strictly concave and smooth in the interval $(0,1)$. Therefore, by Theorem \ref{main1}, there exists $\phi_A \in F_A$ such that $\phi_A \cdot p^0_A = v_A(p^0_A)$ and $\phi_A \cdot q \geq v_A(q), \forall q \in \D(K_A)$. For example, one might take $\phi_A = (\frac{16}{25}, \frac{1}{25})$. In $G_B(p^0_B)$ consider now the vector $\phi_B = (-\e, \e)$, which is in $F_B$. Taking $\e>0$ sufficiently small, the vector $\phi \in \Re^{K_A \times K_B}$ whose entries are given by $\phi^{k_A, k_B} := \phi^{k_A}_A + \phi^{k_B}_B, \forall k_A \in K_A, k_B \in K_B$ is in $\mathcal{NR}(p^0)$.

\medskip

\subsection{Second Part of Main Result 1}\label{sec:MR1b}The second part of our main result provides a general sufficient condition for the non-emptyness of $NR_A(p^0_A)$ and $NR_B(p^0_B)$. We introduce a few preliminary definitions in order to state the condition.

Recall that the non-revealing value function $v_A(q)$ is defined by $\text{min}_{\tau}\text{max}_{\s}\s A(q)\t'$, where $\s$ is a row vector and $\t'$ a column vector of the one-shot, two-player, zero-sum game with payoff matrix $A(q)$.  For $q$ in the affine hull $H_A$ of the simplex $\D(K_A)$, one can consider the immediate extension of $v_A(q)$ to $H_A$ given by the same min-max formula. Denote this extension by $v^e_A$. As we will need to make considerations about the derivative of $v^e_A$ at points in $\D(K_A)$, we will define once and for all a parametrization for the affine space $H_A$. Let $T: \Re^{|K_A|-1} \to H_A$ be defined as follows: Let $e^{|K_A|-1}_i = (0,..,1,...,0) \in \mathbb{R}^{|K_A| -1} $ with 1 in the \textit{i}-th position. Analogously, let $e^{|K_A|}_i = (0,..,1,...,0) \in  \Re^{|K_A|}$. Define $T:  \mathbb{R} ^{|K_A| -1} \rightarrow H_A \subseteq \Re^{|K_A|}$ as the affine transformation that maps $e^{K_A-1}_i  \mapsto e^{K_A}_{i+1}$ and $0 \mapsto e^{|K_A|}_{1}$, for $i \in \{1,2,...,|K_A| - 1\}$. Since $T$ is affine, $Tx = Sx + e^{|K_A|}_1$, where $S$ is an injective linear transformation; we will also denote by $S$ the matrix representation of $S$ according to the canonical basis. The function $(v^e_A \circ T): \Re^{|K_A|-1} \to \Re$ is a Lipschitz function and therefore is almost everywhere differentiable in $\Re^{|K_A|-1}$. The \textit{generalized gradient}\footnote{See Clark \cite{FC1975}.} of $(v^e_A \circ T)$ at $x^0$ is defined as $\partial (v^e_A \circ T)(x^0) = \text{co}\{\text{lim} (\nabla (v^e_A \circ T))(x^0 +h_k) | h_k \to 0$ as $k \to +\infty \}$, where $x^0 + h_k \in \R^{|K_A| -1}$ is a point of differentiability of $(v^e_A \circ T)$, for all $k \in \mathbb{N}$.\footnote{Notice that to define the generalized gradient, one needs to take limits from all possible directions in $\Re^{|K_A|-1}$. That is why one needs to extend the non-revealing value function ``outside'' of the simplex.} For notational convenience, we write $\partial v_A(p^0_A) \equiv \partial (v^e_A \circ T)(x^0)$, where $T(x^0) = p^0_A$. Let now $P \subset \Re^{|K_A|-1}$ be such that $T(P) = \D(K_A)$. Define the \textit{restricted superdifferential of} $\Cav(v_A \circ T|_{P})$ at $p$ - denoted $\partial^* \Cav(v_A)(p)$ - as the set of vectors $v \in \Re^{|K_A|-1}$ that satisfy $\Cav(v_A)(p) + v \cdot h \geq \Cav(v_A \circ T|_{P})(x+h)$ for all $h$ with $x+h \in P$ and $T(x) = p$. Below, $\phi_AS$ denotes the pre-multiplication of the row vector $\phi_A$ by the matrix $S$.

\begin{definition}\label{main def}
The two-player infinitely repeated zero-sum game with one-sided incomplete information $G_A(p^0_A)$ satisfies the property \textit{$NR$ at $p^0_A$} if there exists $p_A \in \D(K_A)$ and $\phi_A \in \R^{|K_A|}$ such that: 

\begin{enumerate}
\item $\text{Cav}(v_A)(p_A) = v_A(p_A) = \phi_A \cdot p_A$ and Cav$(v_A)(p^0_A) = \phi_A \cdot p^0_A$;

\item $\phi_A S \in \partial v_A(p_A)$;

\item $\phi_A S \in \partial^* \text{Cav}(v_A)(p_A)$.
\end{enumerate}

\end{definition}
 The properties (1)-(3) in the Definition of $NR$ at $p^0_A$ (Definition \ref{main def}) are properties of $v_A$: (1) states that $p_A$ is a point of identity between $v_A$, $\text{Cav}(v_A)$ and the affine function $q \in \D(K_A) \mapsto \phi_A \cdot q$; (1) also states that $p^0_A$ is a point of identity between the same affine function and $\Cav(v_A)$; (2) states that $\phi_AS$ is a (generalized) gradient at $p_A$ of $v_A$; (3) states that $\phi_A$ is a ``supergradient'' of $\Cav(v_A)$ at $p_A$. We are now ready to state the sufficient condition. 

\begin{theorem}\label{main1}
Let $p^0_A \in \D(K_A)$ such that $\supp(p^0_A) = K_A$. Suppose $G_A(p^0_A)$ satisfies $NR$ at $p^0_A$. Then the set of non-revealing equilibrium payoffs $NR_A(p^0_A)$ is nonempty. Evidently, the analogous statement holds for $G_B(p^0_B)$.
\end{theorem}

A proof of this Theorem can be found in Appendix A. Evidently, for $p^0 \in \D(K_A \times K_B)$ with $\text{supp}(p^0_A) = K_A$ and $\text{supp}(p^0_B) = K_B$, if $G_A(p^0_A)$ satisfies $NR$ at $p^0_A$ and $G_B(p^0_B)$ satisfies $NR$ at $p^0_B$, then it follows from Theorem \ref{mainthm1} that $I(p^0)$ is the ex-ante equilibrium payoff set of the informed player.

\begin{remark} As both definitions of the property $NR$ at $p^0_A$ and the set $NR_A(p^0_A)$ concern the game $G_A(p^0_A)$, we compare their content in detail. If a vector $\phi_A$ satisfies (3) and the first two equalities of (1) of Definition \ref{main def}, then this implies that $\phi_A$ satisfies $(i)$ in the definition of $NR_A(p^0_A)$. If $\phi_A$ satisfies the last equality of (1) of Definition \ref{main def}, then it is immediate that it satisfies $(ii)$ of $NR_A(p^0_A)$. Property $(iii)$ in the definition of $NR_A(p^0_A)$ bears no immediate relation with the non-revealing value function $v_A$. But, as the proof of Theorem \ref{main1} shows, it is implied by a geometric property of $v_A$, namely, property (2). Therefore, information about $NR_A(p^0_A)$ can be infered from properties of $v_A$ only. The precise way to do this inference is presented in the proof of Theorem \ref{main1}. 

Our aim at introducing the condition $NR$ at $p^0_A$ is to highlight that the attainability of the upper end of $I(p^0)$ as an ex-ante equilibrium payoff of the informed player is not purely an ``information problem'', i.e., it does not depend exclusively on the existence of correlation between states in $K_A$ and $K_B$. The property shows how the attainability of the upper end of $I(p^0)$ also fundamentally relies on the payoff structure of the component games, which determines the geometry of the nonrevealing value function. Though the sufficient condition presented might not be easier to check computationally than the direct non-emptyness of $NR_A(p^0_A)$, it isolates the aspects of this geometry which determine the existence of equilibria with such payoffs. 

\end{remark}

\subsection*{An Interpretation for $NR$} Even though the property $NR$ at $p^0_A$ is not straightforward to interpret, as it mainly describes certain geometric properties of the non-revealing value function, there is a class of infinitely repeated two-player zero-sum games with one-sided incomplete information in which this property can be interpreted straightforwardly and in which this property is always satisfied.

In this section, whenever a two-player, zero-sum infinitely repeated game $G_A(p^0_A)$ is considered, it is assumed, without loss of generality, that $p^0_A \in \text{int}(\D(K_A))$. 

\begin{definition}\label{local non reve}
A two-player, infinitely repeated zero-sum game with lack of information on one-side and undiscounted payoffs $G_A(p)$ is \textit{locally non-revealing at $p$}\footnote{We provide a robustness result regarding payoff perturbations for property ``locally non-revealing'' in the Supplemental Appendix (subsection \ref{propnongen}). The result shows that the property is \textit{not} non-generic.}  whenever there exist $k \in \mathbb{N}, (\lambda_i)^{k}_{i=1} \in \mathbb{R}^{k}$ and $(p_i)^{k}_{i =1} \in \prod^{k}_{i=1}\D(K_A)$ such that:
\begin{enumerate}
\item For each $i=1,...,k$, $\l_i >0$ and $\sum^{k}_{i=1}\lambda_i = 1$;
\item $\sum^{k}_{i =1}\lambda_i p_i = p$;
\item $\text{Cav}(v_A)(p)=  \sum^{k}_{i=1}\lambda_i v_A(p_i)$;
\item For some $i_0 \in \{1,...,k\}$, $p_{i_0} \in$ int($\Delta(K_A)$). 
\end{enumerate}
\end{definition}

The definition above implicitly describes an optimal strategy for the informed player in the game $G_A(p)$ for which there is signalling on path of play: the informed player ``splits'' the prior $p$ into finitely many posteriors $(p_i)^k_{i=1}$ such that $\text{Cav}(v_A)(p)=  \sum^{k}_{i=1}\l_i v_A(p_i)$. This is the typical optimal strategy constructed by Aumann et al.. The local non-revelation property tells us that whenever the informed player in $G_A(p^0_A)$ has an Aumman et al.'s strategy under which he does not exclude some state (i.e., when there is at least one induced posterior in the interior of the simplex of states), then there exists an equilibrium of the game $G_A(p^0_A)$ where the informed player does not signal on path of play.  

Figure \ref{FIG:Normals} illustrates how the conditions of Definition \ref{local non reve} are related to the property of $NR$ at $p^0_B$. Let $G_B(p^0_B)$ be the game originating the non-revealing value function $v_B$ (whose graph is depicted in black). The data defining this game is the exact same as in Example \ref{Example 1}. In this figure, let $p^0_B = 1/2$ denote the prior probability of state $1$. One can obviously write $p^0_B$ as a convex combination with equal weights of $1/4$ and $3/4$ (the ``optimal splitting'' which determines the Cav$(v_B)$). Note that at the interior posterior $q = 1/4$ in the figure, the vectors $N_2$ and $N_1$ generate the normal cone to the graph of $v_B$ at $q=1/4$. Each vector of this normal cone is uniquely associated with a supergradient of $v_B$ at $q=1/4$, i.e., $N_m \cdot q = (q, \mathfrak{n}_m \cdot q), m \in \{1,2\}$, so $N_m = (1, \mathfrak{n}_m)$and $\mathfrak{n}_m$ is a super-gradient of $v_A$ at $1/4$. Letting $N = (1, \mathfrak{n})$, note that Cav$(v_B)(q) = \mathfrak{n} \cdot q, q \in [1/4, 3/4]$ and $N$ belongs to that normal cone at $q=1/4$, and therefore $\mathfrak{n}$ can be given by a convex combination of $\mathfrak{n}_1$ and $\mathfrak{n}_2$.

\begin{figure}[t]
\centering{}%
\caption{Local Non-revelation and $NR$ at $p^0_B =1/2$ }
\medskip
\label{FIG:Normals}
\medskip
\includegraphics[scale=0.9]{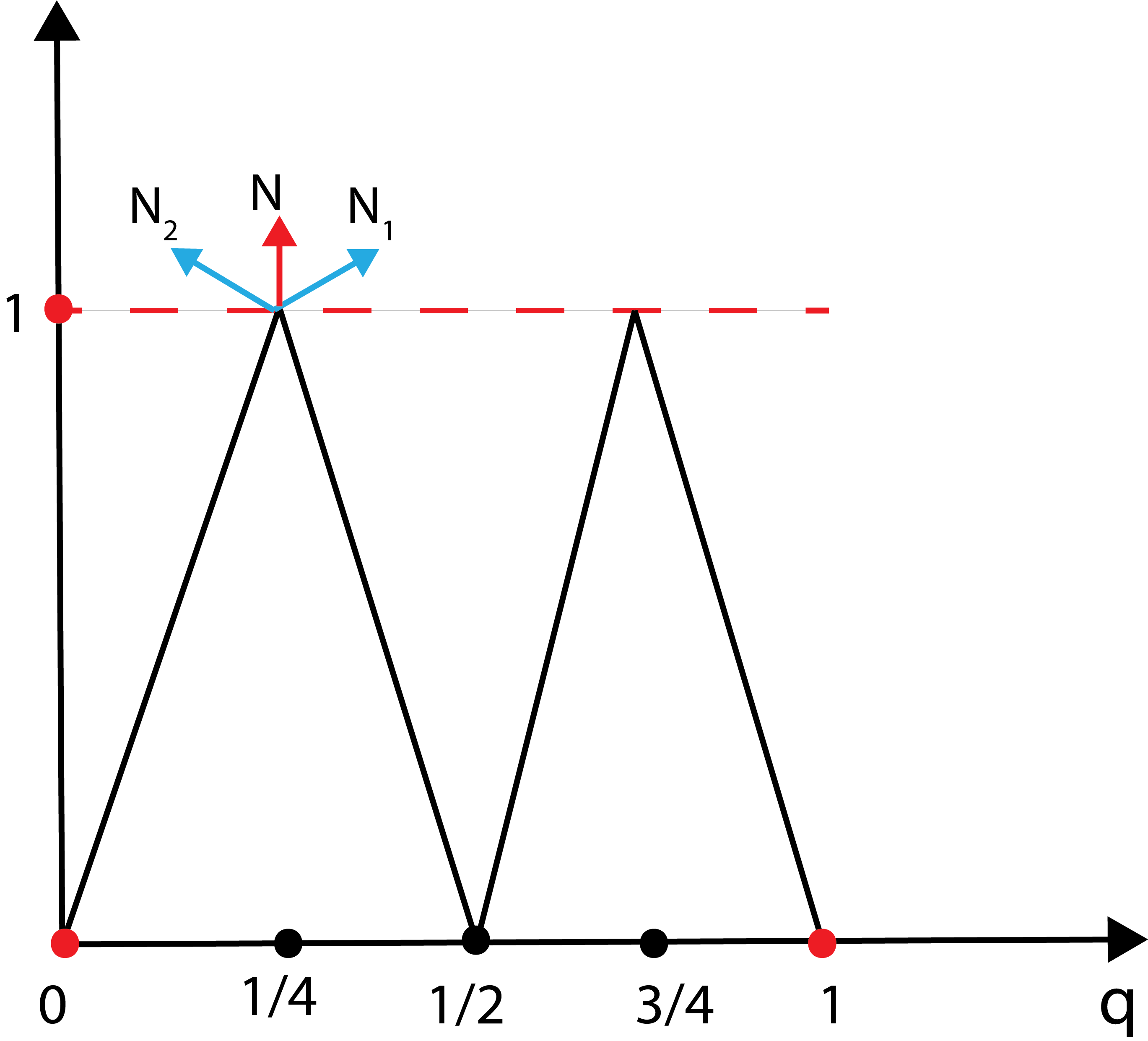}

\end{figure}


\begin{proposition}\label{corgeometric}
Suppose $G_A(p^0_A)$ is locally non-revealing at $p^0_A$. Then  $G_A(p^0_A)$ satisfies $NR$ at $p^0_A$. Evidently, the same statement holds for $G_B(p^0_B)$. \end{proposition}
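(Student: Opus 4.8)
The plan is to extract the vector $\phi_A$ and the point $p_A$ required by Definition \ref{main def} directly from the data furnished by the local non-revelation property at $p^0_A$. Since $G_A(p^0_A)$ is locally non-revealing at $p^0_A$, fix the splitting $(\alpha_i)_{i=1}^k$, $(p_i)_{i=1}^k$ with $\sum \alpha_i p_i = p^0_A$, $\mathrm{Cav}(v_A)(p^0_A) = \sum \alpha_i v_A(p_i)$, and some index $i_0$ with $p_{i_0} \in \mathrm{int}(\D(K_A))$. The natural candidate is $p_A := p_{i_0}$. Because $\mathrm{Cav}(v_A)$ is the concave hull of $v_A$ and the splitting achieves the concavification at $p^0_A$, each $p_i$ lies on a supporting hyperplane of $\mathrm{Cav}(v_A)$ through $p^0_A$ and satisfies $\mathrm{Cav}(v_A)(p_i) = v_A(p_i)$; in particular $\mathrm{Cav}(v_A)(p_{i_0}) = v_A(p_{i_0})$. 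I would take $\phi_A$ (as an affine functional on $H_A$, identified via the parametrization $T$ with a vector in $\R^{|K_A|-1}$ after pre-multiplication by $S$) to be the linear functional defining that supporting hyperplane. Then $\phi_A \cdot p_A = v_A(p_A)$, $\phi_A \cdot p^0_A = \mathrm{Cav}(v_A)(p^0_A)$, and $\phi_A \cdot q \geq \mathrm{Cav}(v_A)(q) \geq v_A(q)$ for all $q \in \D(K_A)$, which gives condition (1) of Definition \ref{main def} and also condition (3), the restricted superdifferential membership $\phi_A S \in \partial^*\mathrm{Cav}(v_A)(p_A)$.

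The substantive point is condition (2): $\phi_A S \in \partial v_A(p_A)$, i.e. $\phi_A$ restricted to the affine hull is a generalized gradient of $v^e_A$ at the interior point $p_A = p_{i_0}$. Here is where interiority of $p_{i_0}$ is essential. Since $p_{i_0} \in \mathrm{int}(\D(K_A))$, a full neighborhood of $p_{i_0}$ in $H_A$ lies inside $\D(K_A)$, so the inequality $\phi_A \cdot q \geq v_A(q)$ holds for all $q$ in a neighborhood of $p_{i_0}$, with equality at $p_{i_0}$; thus $\phi_A|_{H_A}$ is a genuine (affine) supergradient of $v^e_A$ at $p_{i_0}$ in the ordinary sense. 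The remaining task is to argue that an affine supergradient of the locally Lipschitz function $v^e_A$ at an interior point belongs to its Clarke generalized gradient. This follows from standard properties of the Clarke subdifferential: if $\ell$ satisfies $v^e_A(p_{i_0}+h) \le v^e_A(p_{i_0}) + \ell\cdot h$ for all small $h$, then the Clarke directional derivative satisfies $(v^e_A)^{\circ}(p_{i_0};h) \le \ell \cdot h$ in every direction, and since $(v^e_A)^{\circ}(p_{i_0};\cdot)$ is the support function of $\partial v^e_A(p_{i_0})$ while $\ell$ is dominated by it, a separation argument places $\ell$ in the (closed convex) Clarke subdifferential. Transporting this through the affine chart $T$ (so $\ell$ corresponds to $\phi_A S$) yields $\phi_A S \in \partial v_A(p_A)$.

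The main obstacle I anticipate is purely one of careful bookkeeping with the parametrization: one must check that $\phi_A$, which is a priori only defined as an affine functional on the affine hull $H_A \subseteq \R^{|K_A|}$, can be lifted to a vector $\phi_A \in \R^{|K_A|}$ with $\phi_A S$ the corresponding functional on $\R^{|K_A|-1}$, and that the three occurrences of ``$\cdot$'' in Definition \ref{main def} (two on $\D(K_A) \subseteq \R^{|K_A|}$, one via $S$ on $\R^{|K_A|-1}$) are made consistent — the lift of an affine functional to $\R^{|K_A|}$ is not unique, but any lift agreeing with the hyperplane on $H_A$ works, and one simply picks one. A second, minor point requiring care is verifying that the splitting in Definition \ref{local non reve} indeed forces $\mathrm{Cav}(v_A)(p_i) = v_A(p_i)$ for \emph{each} $i$ (not just on average): this is the standard fact that in an optimal concavification every atom sits on the graph of $v_A$, proved by noting that if $\mathrm{Cav}(v_A)(p_j) > v_A(p_j)$ for some $j$ one could re-split $p_j$ and strictly increase $\sum \alpha_i v_A(p_i)$ beyond $\mathrm{Cav}(v_A)(p^0_A)$, a contradiction. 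Both of these are routine, so modulo the chart bookkeeping the proof is short.
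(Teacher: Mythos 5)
Your construction of the pair $(\phi_A,p_A)$ is essentially the paper's: you take $p_A=p_{i_0}$ (the interior posterior) and let $\phi_A$ be a supporting affine functional of $\Cav(v_A)$ at $p^0_A$, which by the Jensen-equality forced by the splitting must pass through every $(p_i,v_A(p_i))$; conditions (1) and (3) then follow exactly as in the paper (the paper phrases the same object via a face of the hypograph of $\Cav(v_A)$ and a supporting hyperplane containing it). Where you genuinely diverge is condition (2). The paper proves $\phi_AS\in\partial v_A(p_{s_0})$ by a comparison of normal cones to the epigraphs of $\mathrm{Vex}(-(v_A\circ T|_P))$ and of $-(v_A\circ T|_P)$, and then invokes Proposition 3.17 of Clark to pass from normal cones to generalized gradients. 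You instead reduce it to the single statement that a \emph{local} supergradient of a locally Lipschitz function at a point belongs to its Clarke generalized gradient, using interiority of $p_{i_0}$ only to ensure that the supergradient inequality holds on a full neighborhood in $H_A$. That statement is true and your route is shorter and more transparent than the paper's; it also dispenses with the paper's case split, since case (b) is subsumed.

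However, your justification of that key statement is wrong as written: from $v^e_A(p_{i_0}+h)\le v^e_A(p_{i_0})+\ell\cdot h$ it does \emph{not} follow that $(v^e_A)^{\circ}(p_{i_0};h)\le\ell\cdot h$. (Take $f(x)=-|x|$ on $\Re$ and $\ell=0$ at $x=0$: then $f^{\circ}(0;1)=1>0$.) What you need, and what is actually true, is the reverse inequality $(v^e_A)^{\circ}(p_{i_0};h)\ge\ell\cdot h$: in the Clarke limsup take base points $y=p_{i_0}-th$, so that the difference quotient equals $\bigl(v^e_A(p_{i_0})-v^e_A(p_{i_0}-th)\bigr)/t\ge\ell\cdot h$ by the supergradient inequality applied at $-th$. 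Since $(v^e_A)^{\circ}(p_{i_0};\cdot)$ is the support function of the Clarke subdifferential (which coincides with the paper's $\partial(v^e_A\circ T)$, defined as the convex hull of limits of gradients), the inequality $\ell\cdot h\le (v^e_A)^{\circ}(p_{i_0};h)$ for all $h$ is precisely membership $\ell\in\partial(v^e_A\circ T)(x)$. With that one-line correction, and the chart bookkeeping you already flag, your proof is complete.
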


Proposition \ref{corgeometric} shows that the local non-revelation property implies the $NR$ property. For a proof of this proposition, see Appendix A. Example \ref{attainable} below shows that the $NR$ property is strictly more general than the local non-revelation property. As an example, the games $G_A(p^0_A)$ and $G_B(p^0_B)$ defined in Example \ref{Example 1} are both locally non-revealing at their respective priors. Note that in that example, the game $G_B(p^0_B)$ is such that the optimal strategy of the informed player constructed by Aumann et al. involves signalling on path (this is the signalling strategy we briefly described in the example).

\begin{example}\label{attainable} Let $K_A = \{1,2\}$ be the set of states. Let $q$ denote the probability of state $1$ and $p^0_A = 1/2$ (prior of state 1). Consider the following game $G_A(p^0_A)$:

$$ A^1 = \begin{bmatrix} 0 & 0 \\ 0 & -1 \end{bmatrix}; A^2 = \begin{bmatrix}  -1 & 0 \\ 0 & 0 \end{bmatrix}
$$

\begin{figure}[!h]
\caption{Graphs of $\Cav(v_A)$ and $v_A$}
\begin{tikzpicture}
\medskip
  \draw[->] (0,0) -- (4,0) node[right] {$q$};
  \draw[->] (0,-1) -- (0,2) node[above] {$\Re$};
  \draw[scale=4,domain=0:1,smooth,variable=\p,blue] plot ({\p},{-\p*(1-\p)});
  \draw[scale=4, domain=0:1, densely dotted,variable=\x,red] plot ({\x},{0*(1-\x)});
\node at (4,-0.25) (noderight) {1};
\node at (-0.25,-0.25) (noderight) {0};
\end{tikzpicture}
\end{figure}
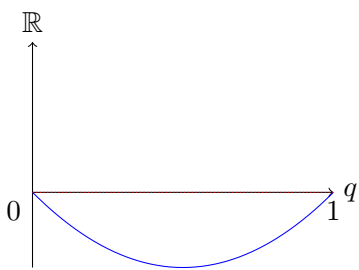

\medskip
For a row vector $v \in \Re^m$, denote by $v'$ the transposed column vector. The figure depicts the graphs of $v_A(q) = -q(1-q)$ and $\Cav(v_A)(q) = 0, \forall q \in [0,1]$. Consider the actions $\s_A = (1,0)$ and $ \t_A = (0,1)'$. Then $\alpha_A = (\s_A A^1 \t_A, \s_A A^2 \t_A) = (0,0)$, so $\alpha^A \cdot q = \text{Cav}(v_A)(q) = 0, \forall q \in [0,1]$. Therefore $\a_A \in NR_A(p^0_A)$. First note that the local non-revelation property at $p^0_A$ is not satisfied in $G_A(p^0_A)$. We show that the property $NR$ at $p^0_A$ is satisfied in the example. The linear transformation $S$ in this example is defined by $Sx = (x, -x)$. Notice that the only candidates for $p$ and $\phi_A$ satisfying the conditions of property $NR$ at $p^0_A$ are $p = 0$ or $p=1$ and $\a_A$. Notice that at $p=0$, just by looking at the graph depicted above, one can see that conditions (1) and (3) of property $NR$ are satisfied for the vector $\a^A$. Now, $\lim_{q \to 0^{-}}\nabla (v^e_A \circ T)(q) =0$, since $v^e_A \circ T$ is constant and equal to $0$ in $(-\infty, 0)$. Using the notation defined above, we have that $\a^A S$ = 0, so that $\a^A S \in \partial v_A(p)$. Observe that because of the strict convexity of the non-revealing value function $v_A$, the optimal strategy of the informed player as constructed in Aumann et al. necessarily involves signalling on path, namely, inducing posteriors at the boundary of the $1$-simplex of states. However, property $NR$ at $p^0_A$ guarantees that an equilibrium exists for which no signalling occurs on path. \end{example}

The next example shows a game $G_A(p^0_A)$ on which $NR$ at $p^0_A$ is not satisfied.

\begin{example}\label{not attainable upper bound} Let $G_A(p^0_A)$ be defined from the payoff matrices below and $q$ denote the probability of state $1$ with prior $p^0_A = 1/2$.

$$ A^1 = \begin{bmatrix} 1 & 1 \\ -1 & -1 \end{bmatrix} \,\, A^2 = \begin{bmatrix}  -1 & -1 \\ 1 & 1 \end{bmatrix}
$$

\begin{figure}[h]
\centering{}%
\caption{Graphs of $\text{Cav}(v_A)$(dotted) and $v_A$(continuous)}\label{fig3}
\bigskip
\begin{tikzpicture}

 \draw[->] (0,0) -- (4,0) node[right] {$q$};
  \draw[->] (0,-1) -- (0,4) node[above] {$\Re$};
  \draw[scale=4,domain=0:0.5,smooth,variable=\p,blue] plot ({\p},{1-2*(\p)});
  \draw[scale=4,domain=0.5:1,smooth,variable=\y,blue] plot ({\y},{-1 + 2*(\y)});
  \draw[scale=4, domain=0:1, densely dotted, variable=\x, red] plot ({\x},{1}); 
  ]

\node at (-0.25,-0.25) (noderight) {0};
\node at (2,-0.25) (noderight) {1/2};
\node at (4,-0.25) (noderight) {1};
\node at (-0.25,4) (noderight) {1};

\end{tikzpicture}
\end{figure}
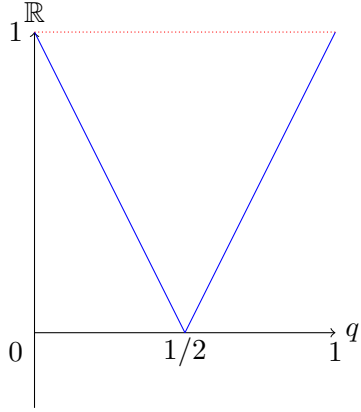
\bigskip
Figure \ref{fig3} depicts the non-revealing value function and its concavification.  The only two candidates for $p_A$ and $\phi_A$ are $p_A=0$ and $p_A=1$ and $(1,1)$. Notice that $v^e_A \circ T$ has a derivative at $p_A=0$, which is equal to $-2$, so $\partial v_A(p_A) = \{-2\}$, whereas $\phi_A S= (1,1) S =0$. Therefore, $\phi_A S \notin \partial v_A(p_A)$. The same reasoning applies to show that $\phi_A S \notin \partial v_A(p_A)$, when $p_A=1$. Therefore, property (2) of Definition \ref{main def} is not satisfied. 
\end{example}

\begin{remark} It is possible to prove directly that if $G_A(p^0_A)$ is locally non-revealing at $p^0_A$ then $NR_A(p^0_A) \neq \emptyset$. The proof is illustrative of how the geometry of $v_A$ determines the existence of equilibria in $G_A(p^0_A)$ that involve no signalling on path of play and therefore we include it here.

The proof is divided in two cases: (a) $\Cav(v_A)(p^0_A) > v_A(p^0_A)$ and (b) $\Cav(v_A)(p^0_A) = v_A(p^0_A)$. We first construct the candidate vector of payoffs $\phi_A \in \Re^{|K_A|}$ for the informed player. Then we show that $\phi_A \in NR_A(p^0_A)$. 

Let $\Cav(v_A)(p^0_A) = \sum_{s \in S}\l_s v_A(p_s)$, $\l_s \geq 0$, $\sum_{s \in S}\l_s =1$  and let $p_{s_0}$ be the posterior which is interior to the simplex of states. Let $\H = \{(q,\a) \in \D(K_A) \times \Re | \Cav(v_A)(q) \geq \a \}$. Note that $\H$ is a convex subset of $\Re^{|K_A|} \times \Re$. Now, note that $(p^0_A, \Cav(v_A)(p^0_A)) = \sum_{s \in S}\l_s(p_s, v_A(p_s))$. Since, by (a), $\forall s \in S, p_s \neq p_0$, it follows that $(p^0_A, \Cav(v_A)(p^0_A))$ is not an extremum point of $\H$ (cf. \cite{TR1970}, Section 18). Therefore, there exists a face $F$ of $\H$, with dimension $d \geq 1$, such that $F$ contains $(p^0_A, \Cav(v_A)(p^0_A))$ and each point $(p_s, v_A(p_s))$. Take now a supporting hyperplane $H$ to $\H$ which contains $F$. The hyperplane $H$ intersected with $\D(K_A) \times \Re$ is the graph of an affine function $q \in \D(K_A) \mapsto \phi_A \cdot q \in \Re$, where $\phi_A$ is a vector in $\Re^{|K_A|}$. We now claim the following:


\begin{claim} The vector $\phi_A$ belongs to $NR_A(p^0_A)$. \end{claim} 

\begin{proof} As the graph of the affine function $q \in \D(K_A) \mapsto \phi_A \cdot q$ is $H \cap (\D(K_A) \times \Re)$, it follows that $\phi_A \cdot q = \Cav(v_A)(q), \forall q \in [p_{s_0}, p^0_A]$, where $[p_{s_0}, p^0_A]$ denotes the segment between $p_{s_0}$ and $p^0_A$. In particular, $\Cav(v_A)(p^0_A) = \phi_A \cdot p^0_A$ is satisfied, and so $\phi_A$ satisfies (ii). From the fact that $H$ supports $\H$ at $F$, it follows that $\phi_A \cdot q \geq v_A(q), \forall q \in \D(K_A)$. This proves $\phi_A$ satisfies $(i)$. Thus it only remains to show $\phi_A \in F_A$. Let $\s_A$ be the Aumann et al. optimal strategy of the informed player in $G_A(p_{s_0})$, which is state-independent since $\Cav(v_A)(p_{s_0}) = v_A(p_{s_0})$.  For the uninformed player, there exists a Blackwell strategy \footnote{Cf. the next subsection for a discussion and definition of Blackwell strategies.} $\t_A$ and a constant $L>0$ such that $\mathbb{E}^{k_A}_{\s_A,\tau_A, p_{s_0}}[\frac{1}{T} \sum^{T}_{t=1}(A^{k_A}_{i^t_A, j^t_A})] \leq \phi^{k_A}_A + \frac{L}{\sqrt{T}}, \forall k_A \in K_A$ (see proof of Corollary 3.34 in Sorin \cite{SS2002}), where $\mathbb{E}^{k_A}_{\s_A,\tau_A, p_{s_0}}$ is the conditional expectation given $\k_A = k_A$. The profile $(\s_A, \t_A)$ is therefore a uniform equilibrium of $G_A(p_{s_0})$, so it follows that the limit $\lim_{T \to +\infty}\mathbb{E}^{k_A}_{\s_A,\tau_A, p_{s_0}}[\frac{1}{T} \sum^{T}_{t=1}(A^{k_A}_{i^t_A, j^t_A})]$ exists for each $k_A \in K_A$.

For each $k_A \in K_A$, let $v^{k_A}$ be this limit. It follows from the inequality of the Blackwell strategy given above that $\forall k_A \in K_A, v^{k_A} \leq \phi^{k_A}_A$. Since for each $T \geq 1$, $$\mathbb{E}_{\s_A,\tau_A, p_{s_0}}\Big[\frac{1}{T} \sum^{T}_{t=1}(A^{\kappa_A}_{i^t_A, j^t_A})\Big] = \sum_{k_A \in K_A} p^{k_A}_{s_0} \Big(\mathbb{E}^{k_A}_{\s_A,\tau_A, p_{s_0}}\Big[\frac{1}{T} \sum^{T}_{t=1}(A^{k_A}_{i^t_A, j^t_A})\Big]\Big)$$ it follows that $\sum_{k_A \in K_A} p^{k_A}_{s_0}v^{k_A} = \Cav(v_A)(p_{s_0})$, as any uniform equilibrium pays $\Cav(v_A)(p_{s_0})$. Since $p_{s_0} \in \text{int}(\D(K_A))$, it must be that $v^{k_A} = \phi^{k_A}_A$, for all $k_A \in K_A$. Since $F_A$ is compact and convex and neither $\s_A$ nor $\t_A$ condition on realized states, it follows that $(v^{k_A})_{k_A \in K_A} = v \in F_A$. This shows $\phi_A \in F_A$ and concludes that $\phi_A \in NR_A(p^0_A)$, as claimed. 

The proof of case (b) is immediate, since the Aumann et al. optimal strategy of the informed player in $G_A(p^0_A)$ is already state-independent. This concludes the proof of the claim. \end{proof}
\medskip

We would like to highlight two things about the above remark. Notice that the first part of the proof does not rely on any assumption on $G_A(p^0_A)$ being locally non-revealing at $p^0_A$, i.e., it does not rely on the assumption that an optimal strategy of the informed player exists inducing an interior posterior. The first paragraph just serves the purpose of pinning down the correct vector $\phi_A$. Notice that the assumption of $p_{s_0} \in \text{int}(\D(K_A))$ is used in the proof above only when we show that $\phi_A = v$. Second, we did not show in the proof that the profile $(\s_A, \t_A)$ is an equilibrium of $G_A(p^0_A)$; we used the fact that $(\s_A,\t_A)$ is a uniform equilibrium of $G_A(p_{s_0})$, inducing the vector of payoffs for the informed player equal to $\phi_A$, and obtained that $\phi_A \in NR_A(p^0_A)$. \end{remark}

\subsection{Consequences for Two-player Zero-sum Repeated Games with Incomplete Information}\label{SEC: 2-player games} Theorem \ref{main1} tells us that given a two-player zero-sum game with one-sided incomplete information that satisfies the property $NR$ at the prior, there exists an equilibrium of the 2-player game for which no signalling occurs on path of play (i.e., the set of non-revealing equilibrium payoffs of this game is non-empty). Even for games for which Aumann et al. constructed strategies that necessarily involved signalling on path (see Example \ref{attainable} or  $G_B(p^0_B)$ in Example \ref{Example 1}), the $NR$ property implies the existence of an equilibrium for which no signalling occurs on path.  We would like to illustrate this message with an example. 

Consider the two-player, zero-sum infinitely repeated game with one-sided incomplete information $G_B(1/2)$ between players 1 and 3, defined by the data in Example \ref{Example 1}. Following Aumann et al.'s  technique for constructing optimal strategies, the strategy of the informed player would be the strategy highlighted in Example \ref{Example 1}, that is, a \textit{signalling strategy}: the informed player uses his actions to signal information about the underlying state for finitely many stages and after that plays the (mixed) optimal action of the one-shot, zero-sum game given by the posterior at each state independently. Playing this strategy in $G_{B}(p^0_B)$ guarantees to the informed player an ex-ante payoff of $(1/2) v_B(1/4) + (1/2) v_B (3/4) = (1/2) \text{Cav}(v_B)(1/4) + (1/2) \text{Cav}(v_B)(3/4) = \text{Cav}(v_B)(1/2) = 1$. Now, the optimal strategy for the uninformed player in $G_B(1/2)$ is a so-called \textit{approachability strategy} or \textit{Blackwell strategy}. For a general game $G_B(p^0_B)$, a Blackwell strategy can be defined as follows: given $\phi_B \in \Re^{|K_B|}$ s.t. $\phi_B \cdot q \geq v_B(q), q \in \D(K_B), \Cav(v_B)(p^0_B) = \phi_B \cdot p^0_B$, $\t_B$ is a \textit{Blackwell strategy} (for $\phi_B$ and $p^0_B$), if for each $\e>0$, there exists $T_0 \in \mathbb{N}$ such that $\forall T \geq T_0$, $\s_B$ a strategy of the informed player and $k_B \in K_B$, 

$$\mathbb{E}^{k_B}_{\sigma_B,\tau_B,p^0}\Big[\frac{1}{T} \sum^{T}_{t=1}(B^{\kappa_B}_{i^t_B, j^t_B})\Big] \leq \phi^{k_B}_B + \e,$$ where  $\mathbb{E}^{k_B}_{\sigma_B,\tau_B,p^0}$ is the conditional expectation given $\kappa_B = k_B$. So $\t_B$ precludes the informed player from achieving more than $\phi^{k_B}_B + \e$ in a sufficiently long (but finitely repeated) game, for any state $k_B \in K_B$. This implies that the ex-ante expected payoff to the informed player in a sufficiently long game is not larger than $\Cav(v_B)(p^0_B) + \e$. For our example, take $\phi_B = (1,1)$ and consider $\t_B$ the Blackwell strategy for $\phi_B$ and $p^0_B$ as defined in the example. It follows the pair $(\s_B, \t_B)$ is a (uniform) equilibrium with associated payoff $\Cav(v_B)(p^0_B)=1$. 

We call the strategies just defined \textit{standard optimal strategies}. In contrast to these standard optimal strategies, for which there is revelation of information on path of play, we now construct equilibrium strategies for both players for which \textit{no information is revealed on path}. The idea for the construction of these strategies is simple. Both players play a deterministic sequence of actions as long as the other plays it. This deterministic sequence is supported by punishment strategies, in the sense that if any player deviates from his sequence of actions, the other player starts to play the punishment strategy forever. Let us first define the deterministic path of play for this example: the deterministic sequence of actions is defined by $((i^t_B, j^t_B))^{\infty}_{t=1} = ((U,R), (U,M), (U,R), (U,M),...)$; so the uninformed player alternates between $R$ (right column) and $M$ (middle column), whereas the informed player plays $U$ at every stage. For each state $k_B \in \{1,2\}$, the payoff (to the informed player) obtained from this path of play is: $$\lim_{T \to +\infty}\frac{1}{T}\sum^{T}_{t=1}B^{k_B}_{i^t_B, j^t_B} = 1.$$This implies the ex-ante payoff to the informed player is also $1 = \text{Cav}(v_B)(1/2)$. In case any player deviates from his prescribed sequence of actions, the other player can observe the deviation and play from the next stage onwards his standard optimal strategy in $G_B(1/2)$, which guarantees that a deviation is not profitable. As mentioned, the strategies just defined are also a (uniform) equilibrium in $G_B(1/2)$, but reveal no information on path of play.

\section{Main Result 2: Necessary Condition for the Upper End of $I(p^0)$ to be Attained in Equilibrium}
\label{SEC: empty}
\medskip  

Theorem \ref{nec condition} is the main result of this section. Intuitively, it shows that equilibria which pay the upper end of $I(p^0)$ to the informed player imply a particular type of signalling procedure. Theorem \ref{nec condition} can therefore be viewed as a constraint to the signalling strategies an informed player might play in an equilibrium paying the upper end of $I(p^0)$. Presenting the formal statement of the Theorem requires some preliminary definitions.

Endow $I_A, I_B, J_A, J_B, K_A$ and $K_B$ with the discrete topology. For each $t \in \mathbb{N}$, let $\mathcal{H}_t$ be the discrete field over $H_{t} := (I_A \times I_B \times J_A \times J_B)^{t-1}$. Endow $H_{\infty} := (I_A \times I_B \times J_A \times J_B)^{\infty}$ and $\Omega := H_{\infty} \times K_A \times K_B$ with the induced product topology and let $\mathcal{H}_{\infty}$ and $\mathcal{F}(\O)$ be the Borel sigma-fields over $H_{\infty}$ and $\O$, respectively.  For notational convenience, we will also denote by $\mathcal{H}_t$ the field generated by $H_t$ on $\O$. 

Let $(\sigma, \tau_A, \tau_B)$ be a profile of strategies in $\mathcal{G}(p^0)$. Let $\mathbb{P}_{\s, \t_A, \t_B, p^0}$ be the probability induced by $(\s, \t_A, \t_B, p^0)$ on $(\O, \mathcal{F}(\O))$. We define the \textit{martingale of posteriors} obtained through Bayesian updating. For $t \in \mathbb{N}$,  let $p^{k_A,k_B}_t : =\mathbb{P}_{\sigma,\tau_A,\tau_B,p}(\kappa = (k_A, k_B)| \mathcal{H}_t)$ and $p_t : = (p^{k_A, k_B}_t)_{(k_A,k_B) \in K_A \times K_B}$. The sequence $(p_t)_{t \in \mathbb{N}}$ is a $\Delta(K_A \times K_B)$-valued martingale with respect to $(\mathcal{H}_t)_{t \in \mathbb{N}}$, satisfying: (i) $p_1 = p^0$; (ii) there exists $p^{\infty}$ such that $p_t \rightarrow p^{\infty}$ a.s. as $t \rightarrow + \infty$. The a.s. limit $p^{\infty}$ of the process $(p_t)_{t \in \mathbb{N}}$ is called the \textit{asymptotic posterior}.

\begin{theorem}\label{nec condition}
Let $(\s, \t_A, \t_B)$ be an equilibrium of $\mathcal{G}(p^0)$ and let $(p_{t})_{t \in \mathbb{N}}$ be the martingale of posteriors induced by the equilibrium. Assume $(\s, \t_A, \t_B)$ pays ex-ante $\Cav(v_A)(p^0_A) + \Cav(v_B)(p^0_B)$ to the informed player. Then $(\Cav(v_{\ell})(p_{t\ell}))_{t \in \mathbb{N}}$ is a martingale, for each $\ell \in \{A,B\}$. 
\end{theorem}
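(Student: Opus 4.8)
The plan is to exploit the tension between two facts: the informed player's ex-ante payoff is exactly the upper end $\Cav(v_A)(p^0_A)+\Cav(v_B)(p^0_B)$, while in each component game $G_\ell(p^0_\ell)$ the uninformed player can hold the informed player's conditional payoffs down. First I would set up, for each $\ell\in\{A,B\}$, the marginal martingale $(p_{t\ell})_{t\in\mathbb N}$ obtained by projecting the posterior martingale $(p_t)$ onto $\Delta(K_\ell)$; since $\Cav(v_\ell)$ is concave, $(\Cav(v_\ell)(p_{t\ell}))_{t\in\mathbb N}$ is automatically a bounded supermartingale by Jensen's inequality applied to the martingale increments. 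The content of the theorem is that this supermartingale is in fact a martingale, i.e. there is no strict drift at any stage. So the whole proof reduces to showing $\mathbb E[\Cav(v_\ell)(p_{(t+1)\ell})\mid\mathcal H_t]=\Cav(v_\ell)(p_{t\ell})$ a.s.

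The key step is to bound the informed player's equilibrium payoff from above by an expression governed by the terminal (asymptotic) posterior $p^\infty$. Fixing $\ell=A$, I would have the uninformed player 2 play, after any history, a Blackwell/approachability strategy in $G_A$ calibrated to the current posterior $p_{tA}$ — more precisely, use the standard fact (Aumann--Maschler, via the splitting procedure along the martingale) that against \emph{any} strategy of the informed player, player 2 can guarantee that the informed player's long-run average payoff in game $A$ is at most $\mathbb E[\Cav(v_A)(p^\infty_A)]$, and in fact at most $\mathbb E[\Cav(v_A)(p_{TA})]$ for every finite horizon $T$ up to a $o(1)$ error. Symmetrically for game $B$. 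Adding the two bounds, the informed player's total ex-ante payoff is at most $\mathbb E[\Cav(v_A)(p^\infty_A)]+\mathbb E[\Cav(v_B)(p^\infty_B)]$. But by the supermartingale property (Jensen), $\mathbb E[\Cav(v_\ell)(p^\infty_\ell)]\le\Cav(v_\ell)(p^0_\ell)$. Since the equilibrium payoff \emph{equals} $\Cav(v_A)(p^0_A)+\Cav(v_B)(p^0_B)$, both inequalities must be equalities:
\[
\mathbb E[\Cav(v_\ell)(p^\infty_\ell)]=\Cav(v_\ell)(p^0_\ell),\quad \ell\in\{A,B\}.
\]
A bounded supermartingale whose terminal expectation equals its initial value is a martingale (all the intermediate Jensen gaps vanish), which gives the claim.

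The main obstacle — and the step I would spend the most care on — is the uniform upper bound "player 2 can hold the informed player to at most $\mathbb E[\Cav(v_A)(p_{TA})]+o(1)$ in game $A$ alone, no matter what happens in game $B$.'' The subtlety is that player 2 only observes the $A$-component of play (he does observe all actions under perfect monitoring, but his inferences about $\kappa_A$ run through the full martingale $p_t$, whose $A$-marginal is what matters), and the informed player's signalling in game $A$ is entangled with game $B$ through the correlated prior. I would handle this by noting that the $A$-marginal $(p_{tA})$ is itself a martingale for player 2's filtration, that player 2 can run Blackwell's strategy adapted to this marginal martingale exactly as in the one-sided Aumann et al.\ model, and that the resulting guarantee $\tfrac1T\sum_{t\le T}A^{\kappa_A}_{i^t_A,j^t_A}\le \Cav(v_A)(p_{TA})+L/\sqrt T$ (conditionally on $\kappa_A$, hence in ex-ante expectation) is insensitive to what the informed player does in game $B$ — the payoffs and information in game $B$ simply do not enter player 2's approachability computation. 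The equilibrium (best-response) condition for the informed player then forces his actual payoff in game $A$ to be \emph{at least} the target $\mathbb E[\Cav(v_A)(p^\infty_A)]$ as well, which is where the two-sided squeeze closes; combined with the same argument for $\ell=B$ and the hypothesis on the total payoff, every inequality in the chain is tight, delivering the martingale property stagewise.
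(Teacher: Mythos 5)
Your overall architecture coincides with the paper's: $(\Cav(v_\ell)(p_{t\ell}))_{t}$ is a bounded supermartingale by Jensen, and the hypothesis that the total payoff equals $\Cav(v_A)(p^0_A)+\Cav(v_B)(p^0_B)$ is used to force every Jensen gap to vanish via an upper bound on the informed player's game-$\ell$ payoff in terms of the concavified value at the posteriors; your final step (a bounded supermartingale with constant expectation is a martingale) is fine. The gap is in the step you yourself flag as the crux. The inequality you invoke, $\frac1T\sum_{t\le T}A^{\kappa_A}_{i^t_A,j^t_A}\le\Cav(v_A)(p_{TA})+L/\sqrt T$ conditionally on $\kappa_A$, is not what approachability gives. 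A Blackwell strategy calibrated to a belief $q$ approaches the orthant below a vector $\phi_A$ with $\phi_A\cdot q=\Cav(v_A)(q)$ and $\phi_A\cdot q'\ge v_A(q')$ for all $q'$; it yields $\frac1T\sum_{t}A^{k_A}_{i^t_A,j^t_A}\le\phi_A^{k_A}+L/\sqrt T$ for each $k_A$, hence in expectation at most $\phi_A\cdot p^0_A=\Cav(v_A)(p^0_A)$. Since $\phi_A\cdot p_{TA}\ge\Cav(v_A)(p_{TA})$ pointwise, the bound you assert --- whose expectation $\mathbb{E}[\Cav(v_A)(p_{TA})]$ is weakly \emph{smaller} than $\Cav(v_A)(p^0_A)$ --- is strictly stronger and does not follow from a single approachability strategy. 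Moreover, a strategy of player 2 ``adapted to the posterior process'' presupposes knowledge of $\sigma$ (the posteriors are Bayes updates with respect to $\sigma$), and once player 2 deviates the law of $(p_t)$ changes, so one cannot quote a guarantee stated in expectation over the \emph{equilibrium} posterior process.

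What actually delivers the needed bound is a history-by-history deviation argument, and this is the content of the paper's Lemma \ref{characterization}: after every positive-probability history $h_t$, the informed player's continuation payoff in game $A$ is at most $\Cav(v_A)(p_{tA}(h_t))$, because otherwise player 2 would profitably switch at $h_t$ to a Blackwell strategy calibrated to $p_{tA}(h_t)$ --- a deviation that leaves the posterior at $h_t$ unchanged, since that posterior depends only on $\sigma$ and $h_t$. To aggregate these conditional bounds into the chain $\Cav(v_A)(p^0_A)=\beta_{A,1}=\mathbb{E}[\beta_{A,s}]\le\mathbb{E}[\Cav(v_A)(p_{sA})]$ one also needs the continuation payoffs $(\beta_{A,t})_t$ to form a martingale; since $\lim_T\mathbb{E}[\alpha_T\mid\mathcal{H}_t]$ need not exist for $t>1$, the paper defines $\beta_{A,t}$ by a Banach limit and proves the martingale property from there. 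Your proposal contains neither the conditional (per-history) form of the bound nor this aggregation device, and both are essential. A smaller point: your appeal to the informed player's best-response condition to get a matching \emph{lower} bound in game $A$ is neither needed nor clearly true --- the informed player cannot in general secure $\Cav(v_A)(p^0_A)$ and $\Cav(v_B)(p^0_B)$ simultaneously (that is the point of Example \ref{Example 1}); the tightness comes entirely from the hypothesis on the total payoff together with the upper bounds.
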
 

For an intuition on Theorem \ref{nec condition}, note that from Jensen's inequality, it is immediate to see that Cav$(v_A)(p_{t \ell})$ is a supermartingale w.r.t. $(\H_t)_{t \in \mathbb{N}}$. If Cav$(v_A)(p_{t \ell_0})$ is not a martingale for some $\ell_0 \in \{A,B\}$, then this would imply that the expected payoffs induced by the equilibrium in game $G_{\ell_0}(p^0_{\ell_0})$ are less than Cav$(v_A)(p^0_A)$, contradicting the assumption of the Theorem. 

The next corollary of Theorem \ref{nec condition} is motivated by the following intuition. Take a game $\mathcal{G}(p^0)$ for which $p^0 \in \text{int}(\D(K_A \times K_B))$ and assume that $\mathcal{NR}(p^0) = \emptyset$. This last assumption implies that if the upper end of $I(p^0)$ can be attained in equilibrium in $\mathcal{G}(p^0)$, then the informed player \textit{must} use a state-dependent strategy on path, i.e., he must use signalling. An idea of how such signalling procedure could occur is as follows: the informed player could signal so as to induce posteriors that are product distributions;  this would imply that, once the posterior realizes, from then onwards no correlation exists between the zero-sum games at the posteriors and therefore the informed player could play each of the zero-sum games without concern for information spillover; is it possible that there are equilibria of $\mathcal{G}(p^0)$ for which the described signalling procedure exists and the upper end of $I(p^0)$ can be attained in equilibrium? Corollary \ref{product} provides an answer to this question.  

\begin{corollary}\label{product}Let $(\s,\t_A,\t_B)$ be an equilibrium of $\mathcal{G}(p^0)$. Suppose this equilibrium pays the upper end of $I(p^0)$ and its associated asymptotic posterior $p^{\infty}$ is a product\footnote{By $p^{\infty}$ being a \textit{product a.s.} we mean that there exists $X: \O \to \D(K_A)$ and $Y: \O \to \D(K_B)$, both $\mathcal{F}(\O)$-measurable, such that $p^{\infty} = X \otimes Y$ a.s..} a.s.. Then $I(p^0)$ is degenerate.\end{corollary}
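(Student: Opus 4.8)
The plan is to combine Theorem \ref{nec condition} with the hypothesis that $p^{\infty}$ is a product a.s., in order to show that the ex-ante payoff of the equilibrium — which is assumed to be the upper end of $I(p^0)$ — actually equals the lower end $\Cav(h)(p^0)$, whence $I(p^0)$ is degenerate. The starting point is that, since the equilibrium pays $\Cav(v_A)(p^0_A)+\Cav(v_B)(p^0_B)$ to the informed player, Theorem \ref{nec condition} applies: each process $(\Cav(v_\ell)(p_{t\ell}))_{t\in\mathbb{N}}$, $\ell\in\{A,B\}$, is a martingale with respect to $(\mathcal{H}_t)_{t\in\mathbb{N}}$. Being a bounded martingale, it converges a.s. and in $L^1$, and its limit is $\Cav(v_\ell)(p^{\infty}_\ell)$ by continuity of $\Cav(v_\ell)$ and of the marginal map; the martingale (optional stopping / tower) property then gives
\[
\Cav(v_A)(p^0_A)=\mathbb{E}\big[\Cav(v_A)(p^{\infty}_A)\big],\qquad
\Cav(v_B)(p^0_B)=\mathbb{E}\big[\Cav(v_B)(p^{\infty}_B)\big].
\]
Adding these,
\[
\Cav(v_A)(p^0_A)+\Cav(v_B)(p^0_B)=\mathbb{E}\big[\Cav(v_A)(p^{\infty}_A)+\Cav(v_B)(p^{\infty}_B)\big].
\]

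Next I would relate the right-hand side to $h$ using the product hypothesis. When $p^{\infty}=X\otimes Y$ a.s. with $X=p^{\infty}_A$, $Y=p^{\infty}_B$, the matrix $C^{k_A,k_B}=A^{k_A}+B^{k_B}$ averaged against $p^{\infty}$ factors as $C(p^{\infty})=A(p^{\infty}_A)\oplus B(p^{\infty}_B)$ in the sense that the bilinear form separates, so $v_{A+B}(p^{\infty})=v_A(p^{\infty}_A)+v_B(p^{\infty}_B)=h(p^{\infty})$ a.s. In fact I claim more: at a product distribution $\Cav(v_A)(p^{\infty}_A)+\Cav(v_B)(p^{\infty}_B)$ is attained by a \emph{non-revealing} continuation in the auxiliary game $G_{A+B}$, because the informed player can independently run an optimal (signaling) strategy of $G_A(p^{\infty}_A)$ and of $G_B(p^{\infty}_B)$ — there is no spillover once the posterior is a product — and this is feasible in $G_{A+B}(p^{\infty})$; hence $\Cav(h)(p^{\infty})\ge \Cav(v_A)(p^{\infty}_A)+\Cav(v_B)(p^{\infty}_B)$. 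The reverse inequality $\Cav(h)(q)\le \Cav(v_A)(q_A)+\Cav(v_B)(q_B)$ holds for \emph{every} $q$ (it is exactly why the upper end of $I(p^0)$ dominates the lower end), so we get equality:
\[
\Cav(h)(p^{\infty})=\Cav(v_A)(p^{\infty}_A)+\Cav(v_B)(p^{\infty}_B)\quad\text{a.s.}
\]

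Now combine everything. Since $(p_t)$ is a martingale in $\Delta(K_A\times K_B)$ with $p_1=p^0$ and $p_t\to p^{\infty}$, and $\Cav(h)$ is concave, Jensen gives $\Cav(h)(p^0)=\Cav(h)(\mathbb{E}[p^{\infty}])\ge \mathbb{E}[\Cav(h)(p^{\infty})]$. Chaining with the displayed identities,
\[
\Cav(h)(p^0)\ \ge\ \mathbb{E}\big[\Cav(h)(p^{\infty})\big]
=\mathbb{E}\big[\Cav(v_A)(p^{\infty}_A)+\Cav(v_B)(p^{\infty}_B)\big]
=\Cav(v_A)(p^0_A)+\Cav(v_B)(p^0_B).
\]
But the lower end is always $\le$ the upper end, i.e. $\Cav(h)(p^0)\le \Cav(v_A)(p^0_A)+\Cav(v_B)(p^0_B)$, so both ends coincide and $I(p^0)$ is degenerate.

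The main obstacle I anticipate is the middle step: rigorously justifying $\Cav(h)(q)=\Cav(v_A)(q_A)+\Cav(v_B)(q_B)$ at a \emph{product} point $q$. The inequality $\Cav(h)(q)\le \Cav(v_A)(q_A)+\Cav(v_B)(q_B)$ is the easy direction (it is built into the definition of $I(p^0)$). For the reverse, one must argue that when $q$ is a product, the informed player's concavification $\Cav(v_A)(q_A)$ can be realized by a splitting of $q_A$ into posteriors, $\Cav(v_B)(q_B)$ by an independent splitting of $q_B$, and that the product of these two splittings is a valid splitting of $q=q_A\otimes q_B$ along which $h$ averages up to the sum — i.e. $\sum_{i,j}\lambda_i\mu_j\,h(q_A^i\otimes q_B^j)=\sum_i\lambda_i v_A(q_A^i)+\sum_j\mu_j v_B(q_B^j)$, which holds because $h(q_A^i\otimes q_B^j)=v_A(q_A^i)+v_B(q_B^j)$ at product points and the weights sum to one in each coordinate. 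A secondary technical point is the a.s./$L^1$ convergence and the continuity of $\Cav(v_\ell)$ and of the marginalization maps, but these are routine: $\Cav(v_\ell)$ is concave hence continuous on the simplex, the marginal map is linear, and bounded martingales converge in $L^1$, so passing to the limit inside the expectation is legitimate.
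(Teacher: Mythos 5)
Your proof is correct and follows essentially the same route as the paper: the identity $\Cav(h)(p^{\infty})=\Cav(v_A)(p^{\infty}_A)+\Cav(v_B)(p^{\infty}_B)$ at product points, Jensen's inequality for $\Cav(h)$ along the posterior martingale, and the martingale conclusion of Theorem \ref{nec condition} to identify $\mathbb{E}[\Cav(v_\ell)(p^{\infty}_\ell)]$ with $\Cav(v_\ell)(p^0_\ell)$. The only difference is that you spell out, via the product-of-splittings argument, the identity at product points that the paper asserts without proof; that justification is correct.
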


\begin{proof}We will denote by $\mathbb{E}$ the expectation operator $\mathbb{E}_{\s, \t_A, \t_B, p^0}$. Since $p^{\infty}$ is a product a.s., then $\Cav(\mathfrak{h})(p^{\infty}) = \Cav(v_A)(p^{\infty}_A) + \Cav(v_B)(p^{\infty}_B)$ a.s.. Taking expectations on both sides and using Jensen's inequality, it follows that $\Cav(\mathfrak{h})(p^{0}) \geq \mathbb{E}[\Cav(v_A)(p^{\infty}_A)] + \mathbb{E}[\Cav(v_B)(p^{\infty}_B)]$. By Theorem \ref{nec condition}, $\mathbb{E}[\Cav(v_A)(p^{\infty}_A)] = \Cav(v_A)(p^0_A)$ and $\mathbb{E}[\Cav(v_B)(p^{\infty}_B)] = \Cav(v_B)(p^0_B)$.  Therefore, $\Cav(\mathfrak{h})(p^{0}) \geq \Cav(v_A)(p^0_A) + \Cav(v_B)(p^0_B)$. Since, $\Cav(\mathfrak{h})(p^{0}) \leq \Cav(v_A)(p^0_A) + \Cav(v_B)(p^0_B)$, we have that $I(p^0)$ is degenerate. \end{proof}

Therefore, if $I(p^0)$ is non-degenerate, no equilibrium paying the upper end of $I(p^0)$ to the informed player induces a product asymptotic posterior. In other words, such an equilibrium must maintain the correlation (even at infinity) between the zero-sum games with positive probability.

\subsection{On the Proof of Theorem \ref{nec condition}} The proof of Theorem \ref{nec condition} requires some preliminary work. In particular, it requires an auxiliary Lemma (Lemma \ref{characterization}) which provides a necessary condition for equilibria of $\mathcal{G}(p^0)$.

\begin{lemma}\label{characterization}
Let $(\s, \t_A, \t_B)$ be an equilibrium in $\mathcal{G}(p^0)$. Then there exists a sequence of random variables $(p_t, \b_{A,t}, \b_{B,t})_{t \in \mathbb{N}}$ on the probability space $(\O, \mathcal{F}(\O), \mathbb{P}_{\s,\t_A,\t_B,p^0})$  taking values in $\D(K_A \times K_B) \times \Re \times \Re$ such that:

\begin{enumerate}

\item  $(p_t, \b_{A,t}, \b_{B,t})_{t \in \mathbb{N}}$ is a martingale adapted to $(\mathcal{H}_t)_{t \in \mathbb{N}}$.

\item $\b_{A,1} + \b_{B,1}$ is the expected payoff of the equilibrium to player 1. 

\item $\b_{A,t} \leq \Cav(v_A)(p_{tA})$ a.s., $\forall t \in \mathbb{N}$. 

\item $\b_{B,t} \leq \Cav(v_B)(p_{tB})$ a.s., $\forall t \in \mathbb{N}$.

\end{enumerate}

\end{lemma}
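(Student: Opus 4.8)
The plan is to take $(p_t)_{t\in\mathbb{N}}$ to be the martingale of posteriors already introduced, and to build $(\b_{A,t},\b_{B,t})_{t\in\mathbb{N}}$ from conditional continuation payoffs of player~$1$ in each component, read along a single well-chosen subsequence of horizons. Write $\mathbb{P}:=\mathbb{P}_{\s,\t_A,\t_B,p^0}$, and let $g^A_s:=A^{\k_A}_{i^s_A,j^s_A}$ and $g^B_s:=B^{\k_B}_{i^s_B,j^s_B}$ be the stage payoffs accruing to player~$1$ from components $A$ and $B$; these are bounded, say by $M$. For each $t$ and $\ell\in\{A,B\}$ the sequence $T\mapsto\mathbb{E}[\tfrac1T\sum_{s=t}^{t+T-1}g^\ell_s\mid\H_t]$ is a bounded sequence of $\H_t$-measurable random variables, and since $\H_t$ is a \emph{finite} field it assumes only finitely many values; a diagonal extraction over the countably many pairs $(t,\ell)$ then yields a single sequence $T_n\uparrow\infty$ along which all of them converge. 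I define $\b_{\ell,t}:=\lim_n\mathbb{E}[\tfrac1{T_n}\sum_{s=t}^{t+T_n-1}g^\ell_s\mid\H_t]$.

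Conditions (1) and (2) are then quick. The process $(p_t)$ is a martingale by construction. Because $\H_t$ is finite, $\mathbb{E}[\,\cdot\mid\H_t]$ is a finite convex combination and hence commutes with the limit along $T_n$, so $\mathbb{E}[\b_{\ell,t+1}\mid\H_t]=\lim_n\mathbb{E}[\tfrac1{T_n}\sum_{s=t+1}^{t+T_n}g^\ell_s\mid\H_t]$; replacing the averaging window $\{t+1,\dots,t+T_n\}$ by $\{t,\dots,t+T_n-1\}$ changes this by $\tfrac1{T_n}\mathbb{E}[g^\ell_{t+T_n}-g^\ell_t\mid\H_t]=O(1/T_n)$, so $\mathbb{E}[\b_{\ell,t+1}\mid\H_t]=\b_{\ell,t}$; together with the posterior martingale this gives~(1). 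For~(2), $\H_1$ is trivial, and by definition of a uniform equilibrium $\b^\ell_T(\s,\t_A,\t_B)=-\mathbb{E}[\tfrac1T\sum_{s=1}^Tg^\ell_s]\to\b^\ell(\s,\t_A,\t_B)$, so $\b_{\ell,1}=-\b^\ell(\s,\t_A,\t_B)$ and hence $\b_{A,1}+\b_{B,1}=-\b^A-\b^B=\lim_T(-\b^A_T-\b^B_T)=\a(\s,\t_A,\t_B)\cdot p^0$, the ex-ante equilibrium payoff of player~$1$.

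The real work is (3); the argument for (4) is verbatim with $B$, player~$3$ and $\t_B$ in place of $A$, player~$2$ and $\t_A$. Fix $t$, a history $h_t\in H_t$ with $\mathbb{P}(h_t)>0$, and $\e>0$. Choose $\phi_A\in\Re^{|K_A|}$ with $\phi_A\cdot p_{tA}(h_t)=\Cav(v_A)(p_{tA}(h_t))$ and $\phi_A\cdot q\ge v_A(q)$ for all $q\in\D(K_A)$ — a supergradient of the concave map $\Cav(v_A)$ at $p_{tA}(h_t)$ — and let $\t_A^{\mathrm{Bl}}$ be a Blackwell strategy of the uninformed player for $(\phi_A,p_{tA}(h_t))$ in $G_A(p_{tA}(h_t))$ in the sense of Section~\ref{SEC: 2-player games} (its existence is Aumann et al.'s approachability result, cf. Sorin \cite{SS2002}); note that $\t_A^{\mathrm{Bl}}$ uses only the record of component-$A$ actions. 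Consider player~$2$'s deviation $\t_A'$: follow $\t_A$ through stage $t-1$; from stage $t$ on, on the atom $\{h_t\}$ play $\t_A^{\mathrm{Bl}}$ (re-indexed to start at stage $t$), and on every other atom of $\H_t$ continue with $\t_A$. Since $\t_A'$ and $\t_A$ induce the same play on stages $1,\dots,t-1$ and on all later stages off $\{h_t\}$, for $T=t-1+T_n$ one has
\[
\b^A_{T}(\s,\t_A',\t_B)-\b^A_{T}(\s,\t_A,\t_B)=-\,\mathbb{P}(h_t)\Bigl(\mathbb{E}_{\t_A'}\bigl[\tfrac1{T}\textstyle\sum_{s=t}^{T}g^A_s\mid h_t\bigr]-\mathbb{E}_{\t_A}\bigl[\tfrac1{T}\textstyle\sum_{s=t}^{T}g^A_s\mid h_t\bigr]\Bigr),
\]
where $\mathbb{E}_{\t_A'}$, $\mathbb{E}_{\t_A}$ denote expectation under $(\s,\t_A',\t_B,p^0)$ and $(\s,\t_A,\t_B,p^0)$. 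On $\{h_t\}$ the conditional law of $\k_A$ is $p_{tA}(h_t)$; since $\t_A^{\mathrm{Bl}}$ depends only on component-$A$ play, its approachability guarantee applies against \emph{any} behaviour of player~$1$ in component~$A$, in particular against player~$1$'s continuation strategy in $\mathcal{G}(p^0)$ (which also acts in component~$B$ and may correlate with $\k_B$), so for $T_n$ large $\mathbb{E}_{\t_A'}[\tfrac1{T_n}\sum_{s=t}^{t+T_n-1}g^A_s\mid h_t,\k_A=k_A]\le\phi^{k_A}_A+\e$ for every $k_A$; averaging over $k_A$ with $p_{tA}(h_t)$ and rescaling the horizon gives $\mathbb{E}_{\t_A'}[\tfrac1{T}\sum_{s=t}^{T}g^A_s\mid h_t]\le\Cav(v_A)(p_{tA}(h_t))+\e+O(1/T_n)$. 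Combined with the uniform-equilibrium inequality $\b^A_{T}(\s,\t_A',\t_B)\le\b^A_{T}(\s,\t_A,\t_B)+\e$ (valid for $T=t-1+T_n$ large), the displayed identity yields $\mathbb{E}_{\t_A}[\tfrac1T\sum_{s=t}^{T}g^A_s\mid h_t]\le\Cav(v_A)(p_{tA}(h_t))+\e(1+\mathbb{P}(h_t)^{-1})+O(1/T_n)$; letting $n\to\infty$ (the left side tends to $\b_{A,t}(h_t)$) and then $\e\downarrow0$ gives $\b_{A,t}\le\Cav(v_A)(p_{tA})$ on $\{h_t\}$, and since $h_t$ is an arbitrary positive-probability atom, (3) holds a.s.

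The hard part is the transfer in (3) of the approachability bound from the two-player auxiliary game $G_A(p_{tA}(h_t))$ to the three-player game: one must be sure the uninformed player's Blackwell strategy still caps player~$1$'s component-$A$ payoff at $\Cav(v_A)(p_{tA}(h_t))$ even though, in $\mathcal{G}(p^0)$, player~$1$ simultaneously plays component~$B$ and may condition his component-$A$ actions on $\k_B$ and on the component-$B$ history. This goes through precisely because approachability guarantees are robust to arbitrary — in particular history-dependent and externally randomized — opponent strategies, so it suffices that $\t_A^{\mathrm{Bl}}$ use only the component-$A$ record. A secondary nuisance, the possible non-convergence in $T$ of the Cesàro averages of realized payoffs, is dealt with once and for all by passing to the common subsequence $T_n$; this route deliberately avoids having to show that the continuation of $(\s,\t_A,\t_B)$ after $h_t$ is itself a uniform equilibrium of $\mathcal{G}(p_t(h_t))$, whose payoff-convergence clause is not automatic.
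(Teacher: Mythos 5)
Your proof is correct and follows essentially the same route as the paper's: the posterior martingale is paired with conditional continuation payoffs turned into a martingale by a generalized limit, and the key bound (3) is obtained by letting player 2 deviate, on a positive-probability atom $h_t$, to a Blackwell strategy for a supergradient of $\Cav(v_A)$ at the posterior, then invoking the $\e$-equilibrium inequality. The only substantive differences are technical: the paper uses a Banach limit (following Hart, whose Lemma 4.6 supplies the commutation with conditional expectation), where you use a diagonal subsequence $T_n$ exploiting the finiteness of the fields $\H_t$ — an equally valid and somewhat more elementary device — and you prove (3) by a direct estimate rather than by contradiction; your explicit justification that the approachability guarantee survives player 1's access to $\k_B$ and the component-$B$ history is a point the paper uses but leaves implicit.
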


\begin{proof}[Proof of Theorem \ref{nec condition}] First, notice that for each $\ell \in \{A,B\}$ and $k,s \in \mathbb{N}$ with $k \leq s$ we have that $\text{Cav}(v_{\ell})(p_{k\ell}) = \text{Cav}(v_{\ell})(\mathbb{E}[p_{s \ell}| \mathcal{H}_k]) \geq \mathbb{E}[\text{Cav}(v_{\ell})(p_{s \ell}) | \mathcal{H}_k]$ a.s. -- where the equality follows from the fact that $(p_s)_{s \in \mathbb{N}}$ is a martingale, and the inequality follows from Jensen's inequality. Assume by contradiction that there exist $k,s \in \mathbb{N}$ with $k < s$,  $\ell_0 \in \{A,B\}$ and an atom $h_k \in H_{k}$ such that Cav$(v_{\ell_0})(p_{k\ell_0})(h_k) > \mathbb{E}[\text{Cav}(v_{\ell_0})(p_{s \ell_0})| \mathcal{H}_{k}](h_k)$. It follows that $\text{Cav}(v_{\ell_0})(p^0_{\ell_0}) \geq \mathbb{E}[\text{Cav}(v_{\ell_0})(p_{k\ell_0})] > \mathbb{E}[\text{Cav}(v_{\ell_0})(p_{s\ell_0})] \geq \mathbb{E}[\beta_{\ell_0,s}] = \beta_{\ell_0,1}$, where the first inequality is given by Jensen's inequality, the second by assumption, the third by (3) and (4) of Lemma \ref{characterization} and the last equality by the martingale property. This then implies that $\text{Cav}(v_A)(p^0_A) + \text{Cav}(v_B)(p^0_B) > \beta_{A,1} + \beta_{B, 1}$. Contradiction, since by (2) of Lemma \ref{characterization} we have that $\beta_{A,1} + \beta_{B,1} =  \text{Cav}(v_A)(p^0_A) + \text{Cav}(v_B)(p^0_B)$. \end{proof}

\begin{example}\label{not attainable} In the next example, $I(p^0)$ is non-degenerate and the upper end of $I(p^0)$ is not an ex-ante equilibrium payoff. More precisely, only the lower end of $I(p^0)$ is an ex-ante equilibrium payoff for the informed player. 
We will provide a proof of this claim through an application of Theorem \ref{nec condition}. Later in a remark,  we provide a more elementary proof of this claim, which will not make any reference to the stochastic process of payoffs and posteriors of Lemma \ref{characterization}. Consider $\mathcal{G}(p^0)$ defined by the following data:

$$ p^0 = \begin{bmatrix} 1/2  & 0 \\ 0 & 1/2 \end{bmatrix}$$

$$ A^1 = \begin{bmatrix} 1 & 1 \\ -1 & -1 \end{bmatrix} \,\, A^2 = \begin{bmatrix}  -1 & -1 \\ 1 & 1 \end{bmatrix}
$$

$$ 
B^1 = \begin{bmatrix} 1 & 0 \\ 0 & 0 \end{bmatrix} \,\, B^2 = \begin{bmatrix}  0 & 0 \\ 0 & 1 \end{bmatrix}
$$

\medskip

\begin{claim} For the game $\mathcal{G}(p^0)$ defined by the data above, only the lower end of $I(p^0)$ is an ex-ante equilibrium payoff for the informed player. \end{claim}

\begin{proof}Assume by way of contradiction that $(\sigma, \tau_A, \tau_B)$ is an equilibrium that pays ex-ante $\text{Cav}(v_A)(p_A^0)+\text{Cav}(v_B)(p_B^0)$ for the informed player in $\mathcal{G}(p^0)$. Let $V_A : \triangle(K_A)\rightarrow\mathbb{R}$ be given by  $V_A(p) := \text{max}_{\sigma ,\tau}\{ \sigma A(p) \tau | \sigma A(p) \tau \leq \text{Cav}(v_A)(p) \}$, i.e., the maximum payoff player 1 attains in the one-shot zero-sum game with payoff matrix $A(p)$ which is less than \text{Cav}$(v_A)(p)$. For this example we have that  $V_A(p) = v_A(p),  \forall p \in \triangle(K_A)$, which can be checked by computation. Let $(\beta_{A,s})_{s \in \mathbb{N}}, (\beta_{B,s})_{s \in \mathbb{N}}$ be the martingales from Lemma \ref{characterization}. We need the following auxiliary claim, whose proof is left to Appendix B. 

\begin{claim}\label{aux claim}
For each $t \in \mathbb{N}$, $\beta_{A,t} \leq V_A(p_{tA}) + Z_t$ a.s., where $(Z_t)_{t \in \mathbb{N}}$ is a (a.s.) nonnegative, bounded sequence that converges (a.s.) to $0$.\end{claim}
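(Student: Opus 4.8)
The plan is to first establish the single almost-sure inequality $\beta_{A,\infty}\le V_A(p^\infty_A)$, where $\beta_{A,\infty}:=\lim_{s}\beta_{A,s}$ (the limit exists a.s.\ as that of a bounded martingale, by Lemma \ref{characterization}(1)) and $p^\infty$ is the asymptotic posterior, and then to propagate it back to every finite $s$. Since $(\beta_{A,s})_s$ is a bounded martingale it is closed by its limit, so $\beta_{A,s}=\mathbb{E}[\beta_{A,\infty}\mid\mathcal{H}_s]\le\mathbb{E}[V_A(p^\infty_A)\mid\mathcal{H}_s]$ a.s.; putting $Z_s:=\max\{0,\ \mathbb{E}[V_A(p^\infty_A)\mid\mathcal{H}_s]-V_A(p_{sA})\}$ gives $\beta_{A,s}\le V_A(p_{sA})+Z_s$ with $Z_s\ge0$. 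That $Z_s$ is bounded is clear since $V_A$ takes values between the smallest and the largest entries of the matrices $(A^{k_A})$; and $Z_s\to0$ a.s.\ because $\mathbb{E}[V_A(p^\infty_A)\mid\mathcal{H}_s]\to V_A(p^\infty_A)$ by L\'evy's theorem (the integrand is $\mathcal{H}_\infty$-measurable, being a function of $p^\infty=\lim_t p_t$) while $V_A(p_{sA})\to V_A(p^\infty_A)$ by continuity of $V_A$ and a.s.\ convergence of the posteriors.

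For the core inequality, record first that $V_A(q)=\min\{\max_{i_A,j_A}\bar A(q)_{i_Aj_A},\ \Cav(v_A)(q)\}$: indeed $\{\sigma\bar A(q)\tau:\sigma\in\D(I_A),\tau\in\D(J_A)\}$ is exactly the interval $[\min_{i,j}\bar A(q)_{ij},\max_{i,j}\bar A(q)_{ij}]$, whose left endpoint is $\le v_A(q)\le\Cav(v_A)(q)$, so the largest point of that interval not exceeding $\Cav(v_A)(q)$ is $\min\{\max_{i,j}\bar A(q)_{ij},\Cav(v_A)(q)\}$; in particular $V_A$ is continuous. Hence $\beta_{A,\infty}\le V_A(p^\infty_A)$ splits into $(a)$ $\beta_{A,\infty}\le\Cav(v_A)(p^\infty_A)$ and $(b)$ $\beta_{A,\infty}\le\max_{i,j}\bar A(p^\infty_A)_{ij}$. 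Part $(a)$ is immediate from part (3) of Lemma \ref{characterization}: $\beta_{A,t}\le\Cav(v_A)(p_{tA})$ a.s.\ for all $t$, and letting $t\to\infty$ along $p_{tA}\to p^\infty_A$ and using continuity of $\Cav(v_A)$ gives it.

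Part $(b)$ is the substance. Here I would use the construction behind Lemma \ref{characterization}, in which $\beta_{A,s}$ equals the $\mathcal{H}_s$-conditional expectation of player $1$'s long-run average payoff in the $A$-component (equivalently, of $\lim_T\tfrac1T\sum_{u=s}^{s+T-1}A^{\kappa_A}_{i^u_A,j^u_A}$, since the first $s-1$ stages do not affect a Ces\`aro limit), together with the standard posterior estimates of Aumann--Maschler theory. For $t\ge s$ write $\sigma^{k_A}_t$ for player $1$'s stage-$t$ marginal mixed action on $I_A$ conditional on $\{\kappa_A=k_A\}$ and the history, $\bar\sigma_t:=\sum_{k_A}p^{k_A}_{tA}\sigma^{k_A}_t$, and $\tau_t$ for player $2$'s stage-$t$ action; then $\mathbb{E}[A^{\kappa_A}_{i^t_A,j^t_A}\mid\mathcal{H}_t]=\sum_{k_A}p^{k_A}_{tA}\,\sigma^{k_A}_t A^{k_A}\tau_t=\bar\sigma_t\bar A(p_{tA})\tau_t+e_t$ with $|e_t|\le L\sum_{k_A}p^{k_A}_{tA}\|\sigma^{k_A}_t-\bar\sigma_t\|_1$ for a constant $L$, and the Aumann--Maschler inequality bounds $\sum_{k_A}p^{k_A}_{tA}\|\sigma^{k_A}_t-\bar\sigma_t\|_1$ by $C\sqrt{d_t}$ with $d_t:=\mathbb{E}[\|p_{t+1}-p_t\|^2\mid\mathcal{H}_t]$. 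As $(p_t)$ is an $L^2$-bounded martingale, $\sum_t\mathbb{E}\,d_t<\infty$, so $\sum_t d_t<\infty$ and $\sqrt{d_t}\to0$ a.s. Using $\bar\sigma_t\bar A(p_{tA})\tau_t\le\max_{i,j}\bar A(p_{tA})_{ij}$, averaging $\mathbb{E}[A^{\kappa_A}_{i^t_A,j^t_A}\mid\mathcal{H}_t]\le\max_{i,j}\bar A(p_{tA})_{ij}+LC\sqrt{d_t}$ over $t\in[s,s+T)$, taking $\mathcal{H}_s$-conditional expectation and letting $T\to\infty$ (the Ces\`aro averages of $\sqrt{d_t}$ and of $\max_{i,j}\bar A(p_{tA})_{ij}$ tend a.s.\ to $0$ and to $\max_{i,j}\bar A(p^\infty_A)_{ij}$, and both sequences are bounded, so bounded convergence applies) yields $\beta_{A,s}\le\mathbb{E}[\max_{i,j}\bar A(p^\infty_A)_{ij}\mid\mathcal{H}_s]+\varepsilon_s$ with $\varepsilon_s\ge0$, $\varepsilon_s\to0$ a.s.; letting $s\to\infty$ gives $(b)$.

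The main obstacle is part $(b)$. It hinges on two points that need care: identifying $\beta_{A,s}$ with the $\mathcal{H}_s$-conditional expected long-run $A$-average — which forces one to invoke the precise construction of $(\beta_{A,t})$ in Lemma \ref{characterization} and the attendant convergence properties of the informed player's average payoff along an equilibrium (replacing $\lim$ by $\limsup$ and reverse Fatou if a.s.\ convergence of the realized average is unavailable) — and the posterior-movement estimate in the present three-player, two-sided-information setting: here $p_{tA}$ is updated from the realized pair $(i^t_A,i^t_B)$ and marginalizes over $\kappa_B$, so one should first bound the $L^1$-spread of the full state-conditional action laws $\mu^{(k_A,k_B)}_t$ by $\sqrt{d_t}$ and then pass to the $I_A$-marginals by convexity. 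Everything else is routine martingale bookkeeping.
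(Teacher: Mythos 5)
Your proof is correct, and it reaches the claim by a genuinely different route from the paper's. The paper works directly at each fixed $s$: conditioning the stage payoff on $\mathcal{H}_{t+1}$ and $\kappa_A$, it recenters the averaged matrix at $p_{sA}$, bounds $\mathbb{E}[A(p_{sA})_{i^t_A,j^t_A}\mid\mathcal{H}_s]$ by $V_A(p_{sA})$ --- which works because in this example $\max_{i,j}\bar A(q)_{ij}=v_A(q)\le\Cav(v_A)(q)$, so the constraint in the definition of $V_A$ is slack --- and absorbs the recentering error into $Z_s:=\mathbb{E}[\pi_s\mid\mathcal{H}_s]$ with $\pi_s=\sum_{k_A}\sup_{t\ge s}|p^{k_A}_{t+1A}-p^{k_A}_{sA}|$, which tends to $0$ a.s.\ by Hart's Lemma 4.24. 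You instead prove the single inequality $\beta_{A,\infty}\le V_A(p^{\infty}_A)$ at infinity, splitting $V_A=\min\{\max_{i,j}\bar A(\cdot)_{ij},\,\Cav(v_A)\}$, treating the $\Cav(v_A)$ branch by part (3) of Lemma \ref{characterization} and the $\max$ branch by a one-step splitting estimate whose error is controlled by the quadratic variation of the posterior martingale; you then propagate back to finite $s$ via the closed-martingale property and L\'evy's theorem, taking $Z_s$ as a positive part. Your version buys two things: it does not rely on the example-specific coincidence that makes the constrained maximum in $V_A$ unconstrained, and it replaces Hart's Lemma 4.24 by the elementary bound $\sum_t\mathbb{E}\|p_{t+1}-p_t\|^2<\infty$ for $L^2$-bounded martingales. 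The costs are that your $Z_s$ is less explicit, and that you need the (routine, and correctly flagged) adaptations: identifying $\beta_{A,s}$ with the Banach limit of conditional long-run averages from the construction in Lemma \ref{characterization}, replacing limits by $\limsup$ where needed, and passing from the full posterior on $K_A\times K_B$ updated from $(i^t_A,i^t_B)$ to its $K_A$-marginal. Both proofs ultimately run on the same engine: the conditional stage payoff is at most the value of the current averaged matrix plus a posterior-movement error.
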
 

By the claim, we have $\beta_{A,t} \leq v_A(p_{tA}) + Z_t$ a.s.. The Martingale Convergence Theorem now implies that  $\beta_{A,\infty} \rightarrow \beta_{A, \infty}, p_{tA} \rightarrow p^{\infty}_{A}$, as $t \to +\infty$. By the claim $Z_t \to 0$ a.s.. Therefore, we obtain $\mathbb{E}[\beta_{A,\infty}] \leq \mathbb{E}[v_A(p^{\infty}_A)]$. From $(2)$ and $(3)$ in Lemma \ref{characterization}, we have that $\mathbb{E}[\beta_{A,\infty}] = \text{Cav}(v_A)(p^0_A) \geq \mathbb{E}[v_A(p^{\infty}_A)]$. So it follows that $\mathbb{E}[\beta_{A,\infty}] = \mathbb{E}[v_A(p^{\infty}_{A})]$, which implies that the distribution of $p^{\infty}_{A}$ is concentrated at the boundary of $\Delta(K_A)$. Since $v_B$ is strictly concave\footnote{See Example 1 for the formula of $v_B$ and depiction of its graph.} and $(v_B(p_{tB}))_{t \in \mathbb{N}}$ is a martingale (by Theorem \ref{nec condition}), it follows that $p_{tB} = p^0_B$ a.s., $\forall t \in \mathbb{N}$. Hence, we have that for any history $h_{\infty}$ outside a set of $\mathbb{P}_{\sigma,\tau_A, \tau_B,p}$-measure zero, the matrix representation of $p^{\infty}(h_{\infty})$ has either the first or the second row filled with zeros (recall that an entry $p^0_{ij}$ represents the probability of states $i$ and $j$ in game $G_A(p^0_A)$ and $G_B(p^0_B)$, respectively), i.e., $p^{\infty}(h_{\infty})$ is either: 

\begin{center}
$$
 \begin{bmatrix} 1/2 & 1/2 \\ 0 & 0 \end{bmatrix} \text{ or }\begin{bmatrix} 0 & 0 \\ 1/2 & 1/2 \end{bmatrix} 
$$

\end{center}

\bigskip 

Now the process of posteriors is a martingale, which implies that the expectation of $p^{\infty}$ is $p^0$. This implies that the following equation has a solution in $\lambda \in [0,1]$:  

\bigskip 

$$
\begin{bmatrix} 1/2  & 0 \\ 0 & 1/2  \end{bmatrix} =  \lambda \begin{bmatrix} 0 & 0 \\ 1/2 & 1/2 \end{bmatrix} + (1 - \lambda) \begin{bmatrix} 1/2 & 1/2 \\ 0 & 0 \end{bmatrix}.
$$

\bigskip

But this equation has no solution for $\lambda \in [0,1]$, which finally implies a contradiction. Hence, there is no equilibrium paying ex-ante to the informed player $\text{Cav}(v_A)(p^0_A)+\text{Cav}(v_B)(p^0_B)$. The arguments above give us more: recall that we had $\beta_{A, \infty} \leq V_A(p^{\infty}_A) = v_A(p^{\infty}_A) $ a.s. and since $\beta_{B, \infty} \leq \text{Cav}(v_B)(p^{\infty}_B) = v_B(p^{\infty}_B)$ a.s., these imply that $\beta_{A, \infty} + \beta_{B,\infty} \leq v_A(p^{\infty}_A) + v_B(p^{\infty}_B)$ a.s. and therefore $\mathbb{E}[\beta_{A, \infty} + \beta_{B, \infty}] \leq \mathbb{E}[v_A(p^{\infty}_A) + v_B(p^{\infty}_ B)] \leq  \mathbb{E}[\text{Cav}(\mathfrak{h})(p^{\infty})] \leq \text{Cav}(\mathfrak{h})(p^0)$, where the second inequality follows by definition of $\text{Cav}(\mathfrak{h})$ and the last inequality is given by Jensen's inequality. The number $\text{Cav}(\mathfrak{h})(p^0)$ is the lowest possible ex-ante equilibrium payoff to the informed player. This implies that every uniform equilibrium of the example pays $\text{Cav}(\mathfrak{h})(p^0)$ to the informed player. \end{proof}

\begin{remark}We would like to provide an alternative proof of the claim that in the game of Example \ref{not attainable} only the lower end of $I(p^0)$ is an equilibrium payoff.\footnote{We thank an anonymous referee for the suggestion of this alternative proof.} It is obvious that the matrix $A^1$ (respec. $A^2$) can be substituted by the following equivalent matrix $A^1_r$ (respec. $A^2_r$), by simply eliminating the redundant column action of player 2. So,

$$ p^0 = \begin{bmatrix} 1/2  & 0 \\ 0 & 1/2 \end{bmatrix}$$

$$ A^1_r = \begin{bmatrix} 1 \\ -1 \end{bmatrix} \,\, A^2_r = \begin{bmatrix}  -1 \\ 1 \end{bmatrix}
$$

$$ 
B^1 = \begin{bmatrix} 1 & 0 \\ 0 & 0 \end{bmatrix} \,\, B^2 = \begin{bmatrix}  0 & 0 \\ 0 & 1 \end{bmatrix}
$$

The model $\mathcal{G}(p^0)$ defined by the data above is evidently equivalent to the model $\mathcal{G}(p^0)$ defined by the data of the previous example, as only redundant actions have been eliminated, which leaves the best-reply correspondences of all players unaltered. Player 2 now, evidently, is a dummy player. We can define therefore a two-player non-zero-sum game between players $1$ and $3$, whose equilibria immediately induce the equilibria of the three-player game $\mathcal{G}(p^0)$. To be precise, we define a two-player, non-zero-sum infinitely repeated game with one-sided incomplete information and undiscounted payoffs $\mathbb{G}(q^0)$, where the set of states will be $K=\{1,2\}$ with prior $q^0=1/2$ for state $1$:  the payoffs are given by the following bimatrix $C^k$ ($k \in K$), where the informed player plays row and the uninformed player 3 plays column:


 $$C^1: \begin{array}{ccc}
\multicolumn{1}{c}{} &\multicolumn{1}{c}{L} &\multicolumn{1}{c}{R} \\
\cline{2-3}
\multicolumn{1}{c}{(U,U)} &\multicolumn{1}{|c}{(2,-1)} &
\multicolumn{1}{|c|}{(1,0)} \\
\cline{2-3}
\multicolumn{1}{c}{(U,D)} &\multicolumn{1}{|c}{(1,0)} &
\multicolumn{1}{|c|}{(1,0)}\\
\cline{2-3}
\multicolumn{1}{c}{(D,U)} &\multicolumn{1}{|c}{(0,-1)} &
\multicolumn{1}{|c|}{(-1,0)}\\
\cline{2-3}
\multicolumn{1}{c}{(D,D)} &\multicolumn{1}{|c}{(-1,0)} &
\multicolumn{1}{|c|}{(-1,0)}\\
\cline{2-3}
\end{array}\qquad
C^2:\; \begin{array}{cccc}
\multicolumn{1}{c}{} &\multicolumn{1}{c}{L} &\multicolumn{1}{c}{R} \\
\cline{2-3}
\multicolumn{1}{c}{(U,U)} &\multicolumn{1}{|c}{(-1,0)} &
\multicolumn{1}{|c|}{(-1,0)} \\
\cline{2-3}
\multicolumn{1}{c}{(U,D)} &\multicolumn{1}{|c}{(-1,0)} &
\multicolumn{1}{|c|}{(0,-1)}\\
\cline{2-3}
\multicolumn{1}{c}{(D,U)} &\multicolumn{1}{|c}{(1,0)} &
\multicolumn{1}{|c|}{(1,0)}\\
\cline{2-3}
\multicolumn{1}{c}{(D,D)} &\multicolumn{1}{|c}{(1,0)} &
\multicolumn{1}{|c|}{(2,-1)}\\
\cline{2-3}
\end{array}
$$
\medskip

The rows and column labels in matrix $C^k$ should be read as follows: $L$ and $R$ stand for the stage-game actions of the uninformed player (i.e., player 3). For the row player (i.e., player 1), $(U,D)$ corresponds to choosing the top row in game in $A^{k}$ and the bottom row in game $B^{k}$. A generic entry is therefore $C^k_{(i_A, i_B), j_B} \equiv ((A^{k}_r)_{i_A} + B^{k}_{i_B, j_B}, -B^k_{i_A, j_B})$. The other entries are analogously constructed. 

We can now modify the stage-game payoffs $C^k, k=1,2$, so that the best-reply correspondence of both players remains unaltered and, after the modification, we obtain a zero-sum game between players $1$ and $3$. Define new stage-game payoff matrices $D^k, k=1,2$ by: $D^k_{(i_A, i_B) , j_B} \equiv C^k_{(i_A,i_B), j_B} - (0, (A^k_r)_{i_A}) = ((A^{k}_r)_{i_A} + B^{k}_{i_B, j_B}, -(A^{k}_r)_{i_A} - B^k_{i_A, j_B})$. Consider now the two-player, zero-sum infinitely repeated game with one-sided incomplete information and undiscounted payoffs where the stage-game payoff matrices are given by $(D^{1}, D^2)$ and the prior of state $1$ is $q^0 = 1/2$. This modification leaves the payoffs of player 1 unaltered, and therefore does not change his best-reply correspondence when compared to $\mathbb{G}(q^0)$. Though the payoffs of player 3 are modified, his best-reply correspondence is not, which finally implies that the equilibria under this modification are the same as in  $\mathbb{G}(q^0)$. Since the modified game is now zero-sum, it follows from Aumann et al. that it has a (uniform) value, which is the unique ex-ante uniform equilibrium payoff to player 1, and is then evidently the lower end of $I(p^0)$ (cf. footnote 7).

\end{remark}
\end{example}

\section{Information Spillover in Bayesian Persuasion: A comparison with our results}\label{BP}

The problem of information spillover can also be studied in the Bayesian Persuasion ($BP$) setting. This possibility is briefly discussed in the paper by Kamenica and Gentzkow \cite{KG2011} in the section ``Multiple Receivers''. We would like to draw a comparison between the effects of information spillover over equilibrium payoffs in our model and over equilibrium payoffs in $BP$. We first describe the game form of the $BP$ model we have in mind in detail. We refer to this model as \textit{public $BP$}. 

Let $K_A \times K_B$ be the set of \textit{states}, with $K_A$ and $K_B$ finite sets. The set $M_A \times M_B$ is the set of \textit{messages}, with $M_A$ and $M_B$ being finite sets and $|M_i| \geq |K_i|, i \in \{A,B\} $. The actions of player 2 (respec. player 3) are denoted $j_A \in J_A$, ($j_B \in J_B$), with both $J_A$ and $J_B$ being finite sets.  At an ex-ante stage, player 1 chooses a \textit{state-dependent lottery} or \textit{experiment} $x \in \D(M_A \times M_B)^{K_A \times K_B}$ (the set of pure actions of Player 1). Then Nature draws a state $(k_A, k_B)$ according to some prior probability $p^0 \in \D(K_A \times K_B)$ and a message $(m_A, m_B)$ according to $x^{(k_A,k_B)} \in \D(M_A \times  M_B)$. Players 2 and 3 observe the message $(m_A, m_B)$ but not the states; player 2 takes an action $j_A$ and player 3 takes an action $j_B$ and the game ends. We now define payoffs for the players. Given state $(k_A, k_B) \in K_A \times K_B$ and actions $j_A$ and $j_B$ of players 2 and 3, player 1 obtains payoff $u_A(k_A, j_A) + u_B(k_B, j_B)$; player 2 obtains $\nu_A(k_A, j_A)$ and player 3 obtains $\nu_B(k_B, j_B)$. 

In the model just described, players 2 and 3 observe messages \textit{publicly} (which motivates the terminology public $BP$). We will also be interested in the model where messages are observed \textit{privately} by each player (i.e., player 2 observes only $m_A$ and player 3 observes only $m_B$), and will call this model \textit{private $BP$}. 

Using the equilibrium concept in \cite{KG2011}, it is not hard to show that any equilibrium will pay to player 1 the same payoff. We compute this equilibrium payoff: for any $q_A \in \D(K_A)$, let $\tau_A(q_A) = \text{argmax}_{\t_A \in \D(J_A)}\sum_{k_A}\sum_{j_A}q^{k_A}_A \tau_A(j_A)\nu_A(k_A, j_A)$; for $q_B \in \D(K_B)$ we define $\tau_B(q_B) \in \D(J_B)$ analogously for player $3$. Each $x \in \D(M_A \times M_B)^{K_A \times K_B}$ uniquely corresponds to a distribution over $\D(K_A \times K_B)$ with finite support and with mean $p^0$, so we can without loss assume that player 1 chooses a distribution over $\D(K_A \times K_B)$ with finite support and mean $p^0$. Concretely, this amounts to choosing a vector $\l = (\l_m)_{m \in M_A \times M_B}, \l_m \geq 0$ and $\sum_{m}\l_m = 1$ and $(p_m)_{m \in M_A \times M_B}$, $p_m \in \D(K_A \times K_B)$ such that $\sum_{m \in M_A \times M_B} \l_m p_m = p^0$. For each realized message $m \in M_A \times M_B$, the induced posterior is denoted $p_m \in \D(K_A \times K_B)$ and $\l_m$ corresponds to the probability with which $p_m$ realizes. In equilibrium, player 1 chooses a distribution over posteriors so as to maximize $\sum_{m \in M_A \times M_B}\l_m(U_A(p_{mA}) + U(p_{mB}))$, where $U_A(q_A) = \sum_{k_A \in K_A}\sum_{j_A \in J_A}q^{k_A}_A \tau_A(q_A)(j_A) u_A(k_A, j_A)$ and  $U_{B}(q_B) = \sum_{k_B \in K_B}\sum_{j_B \in J_B}q^{k_B}_B \tau_B(q_B)(j_B) u_B(k_B, j_B)$. It is now clear that the maximum value of this program corresponds precisely to the definition of Cav$(U_A + U_B)(p^0)$, which is the equilibrium payoff of player 1.

It is easy to construct examples where $\text{Cav}(U_A)(p^0_A) + \text{Cav}(U_B)(p^0_B) > \text{Cav}(U_B + U_A)(p^0)$.\footnote{Take for instance the following non-zero sum public $BP$ example: $K_A = K_B = \{ 1,2\}$ with $p^0 \in \D(K_A \times K_B)$ defined by $(p^0)^{(1,1)} = 1/2$ and $(p^0)^{(2,2)} = 1/2$. Let $J_A = \{j_A, j'_A \}$ and $J_B = \{j_B, j'_B\}$; let $M_1 = M_2 = \{m, m'\}$. Define $u_A(1,j_A)= 0, u_A(1,j'_A) =2, u_A(2,j_A) = 2, u_A(2,j'_A) =0$. For player 2,  $\nu_A(1,j_A) =-1, \nu_A(1, j'_A) = 1, \nu_A(2, j_A) = 1, \nu_A(2, j'_A) = -1$. For player 3, $\nu_B(1,j_B) = u_B(1,j_B) = -1; \nu_B(2, j_B) =u_B(2,j_B) = 1; u_B(1,j'_B) = \nu_B(1, j'_B) = 1; u_B(2,j'_B) = \nu_B(2, j'_B) = -1$. In this example, we have that $\text{Cav}(U_A)(p^0_A) + \text{Cav}(U_B)(p^0_B) > \text{Cav}(U_A + U_B)(p^0)$.} If in addition we assume we are in the public $BP$ model, we have shown in the previous paragraph $\text{Cav}(U_A)(p^0_A) + \text{Cav}(U_B)(p^0_B)$ cannot be an equilibrium payoff, because of the information spillover phenomenon.

When messages are \textit{privately} sent to players, however, this is simply the standard $BP$ model of a sender simultaneously playing two receivers, which implies that $\text{Cav}(U_A)(p^0_A) + \text{Cav}(U_B)(p^0_B)$ is an equilibrium payoff. The difference between $\text{Cav}(U_A)(p^0_A) + \text{Cav}(U_B)(p^0_B) - \text{Cav}(U_B + U_A)(p^0)$ can be interepreted, therefore, as the loss to the Sender generated by information spillover in the public $BP$ model.

If we specify payoffs to be zero sum, i.e., $\nu_A = - u_A$ and $\nu_B = - u_B$, the public $BP$ model yields that $U_A$ and $U_B$ are concave functions, therefore implying that  $\text{Cav}(U_A)(p^0_A) + \text{Cav}(U_B)(p^0_B) = \text{Cav}(U_A + U_B)(p^0) = U_B(p^0_A) + U_A(p^0_A)$. The next claims settles this result.

\begin{claim}In the public $BP$ zero-sum model, $U_A$ as well as $U_B$ are concave. Therefore, for $p^0 \in \D(K_A \times K_B), \text{Cav}(U_A)(p^0_A) + \text{Cav}(U_B)(p^0_B) = \text{Cav}(U_A + U_B)(p^0) = U_B(p^0_B) + U_A(p^0_A)$. \end{claim}

\begin{proof} We prove that $U_A$ is concave. The proof of concavity of $U_B$ is similar. For each $p_A \in \D(K_A)$ and $j_A \in J_A$, let $f_{j_A}(p_A) =  \sum_{k_A \in K_A} p^{k_A}_A u_A(k_A, j_A)$. Note that $f_{j_A}$ is an affine function of $p_A$. Since player $2$ is a minimizer, $U_A(p_A) = \text{min}_{j_A \in J_A}\{f_{j_A}(p_A)\}_{j_A \in J_A}$. The map $U_A$ is therefore piecewise affine and concave in $p_A$. The remainder of the claim is immediate from the definition of Cav. \end{proof}

Hence, the effect of information spillover in the zero-sum public $BP$ model is inexistent from the perspective of equilibrium payoffs, but is relevant in the non-zero sum public $BP$ model, since there might be loss to player 1 generated by information spillover. As we showed with our main result 1 in this paper, for the model $\mathcal{G}(p^0)$ the difference between $\text{Cav}(v_A)(p^0_A) + \text{Cav}(v_B)(p^0_B) - \text{Cav}(\mathfrak{h})(p^0)$ cannot be interpreted similarly as the loss generated by information spillover, because $\text{Cav}(v_A)(p^0_A) + \text{Cav}(v_B)(p^0_B)$ might be attained in equilibrium. 

\section{Conclusion}\label{SEC: conc}

We studied a three-player generalization of the Aumann et al. model and analysed the effects of information spillover on the equilibrium payoff set of the informed player. Our first two results provided a sufficient condition under which a continuum of equilibrium payoffs exist in the model and which implies, in particular, the existence of equilibria where the informed player circumvents the adverse effects of information spillover. These equilibria involve no signalling on equilibrium path. This sufficient condition is implied by the more interpretable local non-revelation condition. Our second main result presented a necessary condition for equilibria to attain the upper end of $I(p^0)$, which provides a restriction on the signalling processes that can be generated by such an equilibrium. A corollary of this result is that equilibria which ``uncorrelate'' the two two-player zero-sum games $G_A(p^0_A)$ and $G_B(p^0_B)$(whenever $I(p^0)$ is, of course, non-degenerate) do not achieve the upper end of $I(p^0)$.

Several questions remain unanswered with regards to the model $\mathcal{G}(p^0)$. What are the ex-ante equilibrium payoffs of the informed player that can be achieved through signalling on path of play? Is it possible, when $\mathcal{NR}(p^0) = \emptyset$ and $I(p^0)$ is non-degenerate, that the upper end of $I(p^0)$ is achieved as an equilibrium payoff of the informed player? As our last example (Example \ref{not attainable}) showed, it might be the case that only the lower end of $I(p^0)$ is achievable as an equilibrium payoff. Is it possible that an example of $\mathcal{G}(p^0)$ exists for which the upper end of $I(p^0)$ is not achievable, but something in the interior of $I(p^0)$ is an equilibrium payoff? These questions remain to be answered in future work.

\section{Appendix A} 

 \subsection*{Proof of Theorem \ref{main1}}\begin{proof}
For the  proof we maintain the notation for the parametrization $T$, which is established before the statement of Theorem \ref{main1}. Let $\phi_A$ and $p_A$ be the vector and probability distribution over $\D(K_A)$ respectively, given by the property $NR$ at $p^0_A$. We will show that $\phi_A \in NR_A(p^0_A)$. We first show that $\phi_A$ satisfies (i) and (ii) of $NR_A(p^0_A)$. The vector $\phi_A$ satisfies (1) of Definition \ref{main def}, which implies it satisfies (ii) of $NR_A(p^0_A)$ immediately. For (i), let $x \in \Re^{|K_A|-1}, T(x) = p_A$ and $h \in \Re^{|K_A|-1}$ such that $x+h \in P$. Because $\phi_AS$ satisfies (3) of Definition \ref{main def}, we have that $\Cav(v_A \circ T|_{P})(x)+ \phi_A S \cdot h \geq \Cav(v_A \circ T|_{P})(x+h)$. From (1) in Definition \ref{main def} and the definition of $\Cav(v_A)$, $(v_A \circ T|_{P})(x) + \phi_AS \cdot h  \geq (v_A \circ T|_{P})(x+h)$. Again, from (1), the left hand side of the last inequality can be written as $\phi_A \cdot T(x) + \phi_AS \cdot h = \phi_A \cdot S(x+h) + \phi_A \cdot e^{K_A}_{1} = \phi_A \cdot T(x+h)$. Therefore, we have that $\phi_A \cdot T(x+h) \geq (v_A \circ T|_{P})(x+h)$, for any $h \in \Re^{|K_A|-1}$ such that $x+h \in P$. Hence, $\phi_A \cdot q \geq v_A(q), \forall q \in \D(K_A)$. This proves $\phi_A$ satisfies (i). 

It now remains to prove $\phi_A \in F_A$. We will show that $\partial v_A(p_A) \subseteq F_A$.\footnote{We thank an anonymous referee for suggesting a simpler proof of this claim.} Let $f = v_A \circ T$, $x \in P$ and $h \in \Re^{|K_A|-1}$. Let $\S (x)$ denote the set of optimal strategies of the maximizer in the one-shot zero-sum game with matrix $A(T(x))$ and analogously denote $\T (T(x))$ for the optimal strategies of the minimizer in the same game. Applying Proposition 3.4.2 in \cite{LRS2019}:

$$ \lim_{\e \to 0^+} \frac{1}{\e}(f(x + \e h) - f(x)) = \max_{\s \in \S(T(x))}\min_{\t \in \T(T(x))} \s A(S h) \t' = \max_{\s \in \S (T(x))}\min_{\t \in \T(T(x))} (\s A^{k_A} \t')_{k_A \in K_A} \cdot Sh.$$

Let $B$ be the closed unit ball in $\Re^{|K_A|-1}$. If $f$ is differentiable at $x$, the above result implies that 

$$\max_{h \in B} \min_{\phi_A \in F_A} (\nabla f(x) - \phi_A S) \cdot h \leq 0. $$ Applying the minmax theorem gives now that $\nabla f(x) \in F_A$. Let $T(\bar x) = p_A$. By definition of the generalized gradient of $f$ at $\bar x$, we have that $\partial v_A(p_A) \subseteq F_A$, as we wanted to show. \end{proof}

\subsection*{Proof of Proposition \ref{corgeometric}}

\begin{proof}The proof is divided in two-cases: (a) $\Cav(v_A)(p^0_A) > v_A(p^0_A)$ and (b) $\Cav(v_A)(p^0_A) = v_A(p^0_A)$. We start with $(a)$. We first construct the candidate vector of payoffs $\phi_A \in \Re^{|K_A|}$ and a probability distribution $p_A$ and show the pair $(\phi_A, p_A)$ satisfies the definition of $NR$ at $p^0_A$.

Let $\Cav(v_A)(p^0_A) = \sum_{s \in S}\l_s v_A(p_s)$, $\l_s \geq 0$, $\sum_{s \in S}\l_s =1$  and let $p_{s_0}$ be the posterior which is interior to the simplex of states. Let $\H = \{(q,\a) \in \D(K_A) \times \Re | \Cav(v_A)(q) \geq \a \}$. Note that $\H$ is a convex subset of $\Re^{|K_A|} \times \Re$. Now, note that $(p^0_A, \Cav(v_A)(p^0_A)) = \sum_{s \in S}\l_s(p_s, v_A(p_s))$. Since, by (a), $\forall s \in S, p_s \neq p_0$, it follows that $(p^0_A, \Cav(v_A)(p^0_A))$ is not an extremum point of $\H$ (cf. \cite{TR1970}, Section 18). Therefore, there exists a face $F$ of $\H$, with dimension $d \geq 1$, such that $F$ contains $(p^0_A, \Cav(v_A)(p^0_A))$ and each point $(p_s, v_A(p_s))$. Take now a supporting hyperplane $H$ to $\H$ which contains $F$. The hyperplane $H$ intersected with $\D(K_A) \times \Re$ is the graph of an affine function $q \in \D(K_A) \mapsto \phi_A \cdot q \in \Re$, where $\phi_A$ is a vector in $\Re^{|K_A|}$. The candidate pair satisfying $NR$ at $p^0_A$ is $(\phi_A, p_{s_0})$. We now check that it satisfies the conditions of Definition \ref{main def}. It is clear from the construction that $\phi_A \cdot q \geq \Cav(v_A)(q), \forall q \in \D(K_A)$. Using the definition of $T$, we can rewrite this inequality as $\Cav(v_A \circ T|_{P})(x) + \phi_AS \cdot h \geq \Cav(v_A \circ T|_{P})(x+h), T(x) = p_{s_0}, \forall h \in \Re^{|K_A|-1}$ such that $x+h \in P$. We obtain $\phi_A S \in \partial^* \Cav(v_A)(p_{s_0})$. Since

\begin{equation} \tag{*} \phi_A \cdot p_{s_0} = \Cav(v_A)(p_{s_0}) = v_A(p_{s_0}) \end{equation} \begin{equation} \tag{**} \Cav(v_A)(p^0_A) = \phi_A \cdot p^0_A, \end{equation} (1) in Definition \ref{main def} is satisfied. It remains to prove that $\phi_A S \in \partial v_A(p_{s_0})$. In order to show this, we are going to apply a result in Clark \cite{FC1975}.\footnote{We thank an anonymous referee for pointing out a simplification of the proof of this step.} Let $v_A \circ T = f$. Note that $f$ has directional derivatives in all directions (see Proposition 3.4.2 in \cite{LRS2019}). We denote the directional derivative of $f$ at point $x$ in the direction $h$ by $f'(x; h)$. 

Notice now that from (*) and (**) we have that for each $h \in \Re^{|K_A|-1}$, $(-\phi_A) S  \cdot h \leq (-f)'(x; h)$. Corollary 1.10 in Clark \cite{FC1975} now implies that $-\phi_A S \in \partial (-f)$. Since we have that $\partial (f)(x)=-\partial (-f)(x)$, it follows immediately that $\phi_A S \in \partial v_A (p_{s_0})$. This concludes the proof of case (a).

Now we prove case (b). Assume $\Cav(v_A)(p^0_A) = v_A(p^0_A)$. From Aumann et al. \cite{AMS1995}, an optimal strategy $\s_A$ of the informed player in $G_A(p^0_A)$ is to play the mixed action which is optimal in the (one-shot) zero-sum game with matrix $A(p^0_A)$, independently at each stage, whereas the strategy of the uninformed player $2$ is an approachability strategy $\t_A$ at $p^0_A$. By definition of the approachability strategy, there exists a vector $\phi_A \in \Re^{|K_A|}$ satisfying $\phi_A \cdot q \geq v_A(q), \forall q \in \D(K_A)$ such that player $2$ approaches $\phi_A - \Re^{|K_A|}_+$ and the profile $(\s_A, \t_A)$ is a uniform equilibrium, with associated vector of payoffs equal to $\phi_A$. We now observe that the pair $(\phi_A, p^0_A)$ satisfies the conditions of $NR$ at $p^0_A$. First, $\phi_A \cdot p^0_A = \Cav(v_A)(p^0_A) = v_A(p^0_A)$ (which is condition (1)) is clear. By the same reasoning as in the first paragraph of this proof $\phi_A \cdot q \geq v_A(q), \forall q \in \D(K_A)$ is then equivalent to $\phi_AS \in \partial^* \Cav(v_A)(p^0_A)$, which is condition (3). The same reasoning as in the paragraph above (changing now $x = T(p_{s_0})$ to $x^0 = T(p^0_A)$) gives that $\phi_A \in \partial v_A(p^0_A)$, which is condition (2). This concludes the proof. \end{proof}

\section{Appendix B}

\subsection*{Proof of Lemma \ref{characterization}}

Let $(\s, \t_A, \t_B)$ be an equilibrium profile of $\mathcal{G}(p^0)$. We start with the construction of the sequence of random variables $(p_t, \b_{A,t}, \b_{B,t})_{t \in \mathbb{N}}$. The sequence $(p_t)_{t \in \mathbb{N}}$ is the martingale of posteriors obtained from $(\s, \t_A, \t_B)$. It is therefore, immediately adapted to the sequence of increasing fields $(\H_t)_{t \in \mathbb{N}}$. Fix now a Banach limit $L: \ell^{\infty} \to \Re$. \footnote{For the definition of Banach limit, see Section 4.2 in Hart \cite{SH1985}.} A triple of strategies $(\sigma, \tau_{A}, \tau_{B})$ is an $L$-equilibrium if:

\begin{enumerate}
\item $L((\a^{k_A, k_B}_T(\s, \t_A, \t_B))_{T \in \mathbb{N}}) \geq L((\a^{k_A, k_B}_T(\s',\t_A, \t_B))_{T \in \mathbb{N}}), \forall \s \in \S, (k_A, k_B) \in K_A \times K_B$. 
\item $L((\b^A_T(\s, \t_A, \t_B))_{T \in \mathbb{N}}) \geq L((\b^A_T(\s,\t'_A, \t_B))_{T \in \mathbb{N}}), \forall \t'_A \in \T_A$. 
\item  $L((\b^B_T(\s, \t_A, \t_B))_{T \in \mathbb{N}}) \geq L((\b^B_T(\s,\t_A, \t'_B))_{T \in \mathbb{N}}), \forall \t'_B \in \T_B$. 
\end{enumerate}

For notational convenience, we shall denote $L((u_t)_{t \in \mathbb{N}})$ by $L(u_t)$.  A uniform equilibrium of the game $\mathcal{G}(p^0)$ automatically satisfies (1), (2) and (3) above, so it is an $L$-equilibrium.  For each $t \in \mathbb{N}$, let  $\beta_{A,t} = L(\mathbb{E}[\alpha_T|\mathcal{H}_t])$, where $\alpha_T = \frac{1}{T}\sum \limits_{t=1}^T A^{\kappa_A}_{i^t_A, j^t_A}$ and $\beta_{B,t} = L(\mathbb{E}[\beta_T|\mathcal{H}_t])$, where $\beta_T = \frac{1}{T}\sum \limits_{t=1}^T B^{\kappa_B}_{i^t_B,j^t_B}$. We now show (1) of Lemma \ref{characterization}: we have already argued that $(p_t)_{t \in \mathbb{N}}$ is a martingale and it is immediate it is bounded. Fix now $s <t$ and $h_s \in H_s$ with $\mathbb{P}(h_s)>0$: $\mathbb{E}[\b_{A,t}|\mathcal{H}_s](h_s)$ = $\mathbb{E}[L(\mathbb{E}[\a_T| \mathcal{H}_t])| \mathcal{H}_s](h_s) = \frac{1}{\mathbb{P}(h_s)}\mathbb{E}[L(\mathbb{E}[\a_T| \mathcal{H}_t])\mathbbm{1}_{h_s}]$. Letting now $X_T := \mathbb{E}[\a_T| \mathcal{H}_t]\mathbbm{1}_{h_s}$, we have that $X := (X_T)_{T \in \mathbb{N}}: \Omega \to \ell^{\infty}$ is a random variable that takes finitely many values in $\ell^{\infty}$. Therefore, by Lemma 4.6 in Hart \cite{SH1985}, $L$ commutes with $\mathbb{E}$ and we obtain: $\frac{1}{\mathbb{P}(h_s)}\mathbb{E}[L(X_T)] = \frac{1}{\mathbb{P}(h_s)}L(\mathbb{E}[X_T])$ = $\frac{1}{\mathbb{P}(h_s)}L(\mathbb{E}[\mathbb{E}[\alpha_T|\mathcal{H}_t]\mathbbm{1}_{h_s}]) = L(\frac{1}{\mathbb{P}(h_s)}\mathbb{E}[\mathbb{E}[\alpha_T|\mathcal{H}_t]\mathbbm{1}_{h_s}])$, where the last equality follows from $L$ being linear. Now, $L(\frac{1}{\mathbb{P}(h_s)}\mathbb{E}[\mathbb{E}[\alpha_T|\mathcal{H}_t]\mathbbm{1}_{h_s}]) = L(\mathbb{E}[\mathbb{E}[\a_T| \mathcal{H}_t]| \mathcal{H}_s](h_s))$, from the definition of the conditional expectation on a finite field. As $h_s$ was arbitrarily chosen, we have $\mathbb{E}[\b_{A,t}|\mathcal{H}_s] = L(\mathbb{E}[\a_T| \mathcal{H}_s]) = \b_{A,s}$ a.s., proving $(\b_{A,t})_{t \in \mathbb{N}}$ is a martingale adapted to $(\mathcal{H}_t)_{t \in \mathbb{N}}$(boundedness of this martingale is immediate from the fact that payoffs are bounded). The proof that $(\b_{B,t})_{t \in \mathbb{N}}$ is a bounded martingale adapted to $(\mathcal{H}_t)_{t \in \mathbb{N}}$ is the same. We now show condition (2) of Lemma \ref{characterization}. Notice that $\b_{A,1} = L(\mathbb{E}[\a_T| \mathcal{H}_1]) = L(\mathbb{E}[\a_T]) = \lim_{T \to +\infty}\mathbb{E}[\a_T]$, where the last equality follows from the fact that payoffs of an equilibrium profile converge, by definition; the limit $\lim_{T \to +\infty}\mathbb{E}[\a_T]$ is precisely the expected payoff to player 1 in the game $G_A(p^0_A)$ from the equilibrium. By the same reasoning, $\b_{B,1}$ is precisely the expected payoff of player 1 in game $G_B(p^0_B)$. 

We now prove (3) from Lemma \ref{characterization}. Aiming for a contradiction, suppose there exists $h^0_t \in H_t$ with $\mathbb{P}(h^0_t) >0$, such that $\b_{A,t}(h^0_t) > \Cav(v_A)(p_{tA})(h^0_t)$. We will construct a profitable deviation for player 2. Let $\t'_A$ be the following strategy. After $h^0_t$ has occurred, $\t'_A$ will be equal to a Blackwell strategy $\hat \t_A(p_t)$ for the uninformed player in $G_A(p_{t A})$, where $p_t := p_t(h^0_t)$ is the posterior given $h^0_t$; otherwise, $\t'_A$ is equal to $\t_A$. We have, 
$$\mathbb{E}[\a_T] - \mathbb{E}'[\a_T] = \mathbb{P}(h^0_t)(\mathbb{E}[\a_T|h^0_t] - \mathbb{E}'[\a_T|h^0_t]),$$
where $\mathbb{E}'$ is the expectation taken with respect to the $p^0, \s, \t'_A, \t_B$. We claim that $L(\mathbb{E}'[\a_T|h^0_t]) \leq \Cav(v_A)(p_{tA}(h^0_t)), \forall t \in \mathbb{N}$. We will assume the claim for now and conclude the proof of $(3)$. We then prove the claim. Taking $L$ on the right-hand-side of the highlighted equation and using linearity of $L$, one gets: $\b_{A,t}(h^0_t) - L(\mathbb{E}'[\a_T|h^0_t]) \geq \b_{A,t}(h^0_t) - \Cav(v_A)(p_{tA})(h^0_t)>0$. This implies a contradiction: since taking $L$ on the left hand side we obtain the difference between the expected payoffs from $(\s,\t_A,\t_B)$ to player 1 and the expected payoffs from $(\s,\t'_A,\t_B)$ to player 1. As the expected payoffs to player $1$ decreased strictly, player 2 is strictly better, which is a contradiction with the fact that $(\s,\t_A,\t_B)$ is an equilibrium. 

We now prove the claim: note that, $\mathbb{E}'[\a_T |h^0_t] = \frac{\mathbb{E}'[\a_T.\mathbbm{1}_{h^0_t}]}{\mathbb{P'}(h^0_t)} = \frac{1}{\mathbb{P}'(h^0_t)}\frac{1}{T}[\int_{h^0_t}\sum^{t-1}_{s=1}A^{\kappa_A}_{i^s_A, j^s_A} dP'] + \frac{1}{\mathbb{P}'(h^0_t)}\frac{T-t}{T}[\int_{h^0_t} \frac{1}{T-t}\sum^T_{s=t}A^{\kappa_A}_{i^s_A, j^s_A} dP']$. Let $C_T := \frac{(T-t)}{T}$ and $K_T :=  \frac{1}{\mathbb{P}'(h^0_t)}\frac{1}{T}[\int_{h^0_t}\sum^{t-1}_{s=1}A^{\kappa_A}_{i^s_A, j^s_A} dP']$. Then, $\mathbb{E}'[\a_T |h^0_t]  = \frac{1}{\mathbb{P}'(h^0_t)}C_T[\int_{h^0_t} \frac{1}{T-t}\sum^T_{s=t}A^{\kappa_A}_{i^s_A, j^s_A} dP'] + K_T$. Now letting $P^{\circ}(\cdot) := P'(\cdot | h^0_t)$ and $\mathbb{E}^{\circ}[\cdot]$ the expectation operator with respect to $P^{\circ}(\cdot)$, we can re-write $\mathbb{E}'[\a_T |h^0_t]$ as $C_T \mathbb{E}^{\circ}[\frac{1}{T-t}\sum^{T}_{s=t}A^{\kappa_A}_{i^s_A,j^s_A}] + K_T$. Let now $\e>0$. Note that because $\hat \t_A(h^0_t)$ is a Blackwell strategy, there exits $T_0 \in \mathbb{N}$ s.t. $\forall T \geq T_0$, 
 $\mathbb{E}^{\circ}[\frac{1}{T-t}\sum^{T}_{s=t}A^{\kappa_A}_{i^s_A,j^s_A}] \leq \Cav(v_A)(p_{tA}) + \e$. Therefore, $C_T \mathbb{E}^{\circ}[\frac{1}{T-t}\sum^{T}_{s=t}A^{\kappa_A}_{i^s_A,j^s_A}] + K_T \leq C_T \Cav(v_A)(p_{tA}(h^0_t)) + \e C_T + K_T$. Since $K_T \to 0$ and $C_T \to 1$ (as $T \to +\infty$), taking $L$ on both sides of the inequality gives $L(\mathbb{E}'[\a_T| h^0_t ]) \leq \Cav(v_A)(p_{tA}(h^0_t)) + \e$. As $\e>0$ is arbitrary, $L(\mathbb{E}'[\a_T| h^0_t ]) \leq \Cav(v_A)(p_{tA}(h^0_t))$, as required. This concludes the proof of the claim and the proof of (3). The proof of (4) is the exact same. 

\begin{proof}[Proof of the Claim \ref{aux claim}]
Fix $t,s \in \mathbb{N}$ with $s \geq t$. Conditioning over $\mathcal{H}_{s+1}$ and $\kappa_A$, we have $$\mathbb{E}[A^{\kappa_A}_{i^s_A,j^s_A} | \mathcal{H}_t] = \mathbb{E}[ \sum_{k_A \in K_A}p^{k_A}_{s+1 A}A^{k_A}_{i^s_A,j^s_A} | \mathcal{H}_t] = \sum_{k_A \in k_A}p^{k_A}_{tA}\mathbb{E}[A^{k_A}_{i^s_A, j^s_A} | \mathcal{H}_t] + \sum_{k \in K}\mathbb{E}[(p^{k_A}_{s+1A} - p^{k_A}_{tA}) A^{K_A}_{i^s_A, j^s_A} | \mathcal{H}_t] =$$  $$=\mathbb{E}[A(p_{tA})_{i^s_A,j^s_A} | \mathcal{H}_t] + \sum_{k_A \in K_A}\mathbb{E}[(p^{k_A}_{s+1A} - p^{k_A}_{tA}) A^{k_A}_{i^s_A, j^s_A} | \mathcal{H}_t] \leq V_A(p_{tA}) + \sum_{k_A \in K_A}\mathbb{E}[(p^{k_A}_{s+1A} - p^{k_A}_{tA}) A^{k_A}_{i^s_A, j^s_A} | \mathcal{H}_t].$$ Summing over $s \in \mathbb{N}$ from  $t \leq s \leq T$, and dividing by $T$, we have $$\mathbb{E}[\alpha_T | \mathcal{H}_t] \leq \frac{t}{T} + V_A(p_{tA}) + \frac{1}{T}\sum_{t \leq s \leq T}\sum_{k_A \in K_A} \mathbb{E}[|p^{k_A}_{s+1A} - p^{k_A}_{tA}||\mathcal{H}_t],$$ where we used the fact that stage payoffs at any state are bounded $1$.  Denote $Z_t := \frac{t}{T} + \frac{1}{T}\sum_{t \leq s \leq T}\mathbb{E}[ \pi_t |\mathcal{H}_t]$, where $\pi_t := \sum_{k_A \in K_A}\sup_{s \geq t}|p^{k_A}_{s+1A} - p^{k_A}_{tA}|$. Taking Banach-limits (on $T$) on both sides, we have $\beta_{A, t} \leq V_A(p_{tA}) + \mathbb{E}[\pi_t | \mathcal{H}_t]$ a.s..  Since $p_t \rightarrow p^{\infty}$ a.s., as $t \rightarrow \infty$,  it follows by Lemma 4.24 in \cite{SH1985}, that $\mathbb{E}[\pi_t | \mathcal{H}_t] \rightarrow 0$ a.s..\end{proof}

\section{Supplemental Appendix}

Let $K$, $I$ and $J$ be finite sets, with $K$, $I$ and $J$ with cardinality larger than or equal to 2.\footnote{If $I$ has cardinality 1, than the nonrevealing value function is concave. Therefore it is equal to its concavification. If $J$ has cardinality 1, then one optimal strategy of the informed player is to completely reveal the information at any prior, since the uninformed player has no other strategy to play. This implies the concavification is affine.}  A collection of matrices $(A^{k})_{k \in K}$, with $A^{k} \in \mathbb{R}^{I \times J}$ defines a unique zero-sum game with lack of information on one side $G_A(p)$, for some $p \in \Delta(K)$. Let $\mathcal{G}$ be the set of vectors $((A^{k})_{k \in K},p) \in \prod_{k \in K}\mathbb{R}^{(I \times J)} \times \Delta(K)$. Let $C(\Delta(K))$ denote the class of continuous real maps in $\Delta(K)$ and endow $C(\Delta(K))$ with the $|| \cdot||_{\infty}$-norm. The next three results are known (see \cite{SS2002}) and are stated for future reference.  

\begin{lemma}\label{optimal nonrevealing}
Let $(A^1,...,A^{|K|},p)$ define a zero-sum game $G_A(p)$ with lack of information on one side. There exists an optimal strategy of the informed player such that the induced posteriors $(p_s)_{s \in S}$ by such strategy satisfy $|S| \leq |K|+1$. 
\end{lemma}

\begin{lemma}\label{v is continuous}
Let $f: \prod_{k \in K}\mathbb{R}^{(I\times J)} \rightarrow C(\Delta(K))$ be defined by $f(A^1, A^2,...,A^{|K|}) =  v_A$. Then $f$ is continuous.
\end{lemma}

\begin{theorem}\label{cav is continuous}
The operator $\text{Cav}: C(\Delta(K)) \rightarrow C(\Delta(K)) $ \footnote{Laraki \cite{RL2004} shows the operator \text{Cav} is well defined.} is continuous. 
\end{theorem}

\subsection{Nondegeneracy of $I(p^0)$ is Robust to payoff perturbations}

The next result shows that the property of the interval $I(p)$ being nondegenerate is robust to stage-payoffs perturbations. 

\begin{proposition}\label{nondeg}Let $I_A, J_A,I_B,J_B, K_A$ and $K_B$ be finite sets. Let  $\mathcal{G}$ denote the set of vectors $$(A^1, A^2,...,A^{|K_A|}, B^1, B^2,...,B^{|K_B|}) \in \prod_{k_A \in K_A}\mathbb{R}^{(I_A \times I_B)} \times \prod_{k_B \in K_B}\mathbb{R}^{(I_B \times J_B)}$$ that define a  $\mathcal{G}(p)$ for which $I(p)$ is nondegenerate. Then  $\mathcal{G}$ is open in $\prod_{k_A \in K_A}\mathbb{R}^{(I_A \times I_B)} \times \prod_{k_B \in K_B}\mathbb{R}^{(I_B \times J_B)}$. 
\end{proposition}

\begin{proof}  Recall that $\mathfrak{h}(q): \D(K_A \times K_B) \to \Re$ was defined as $\mathfrak{h}(q) = v_A(q_A) + v_B(q_B)$. Define $(f + g):  \prod_{k_A \in K_A}\mathbb{R}^{(I_A \times I_B)} \times  \prod_{k_B \in K_B}\mathbb{R}^{(I_B \times J_B)} \rightarrow C(\Delta(K_A \times K_B))$ by $$(f+g)(A^1, A^2,...,A^{|K_A|}, B^1, B^2,...,B^{|K_B|}) = h.$$ Notice that $(f + g)$ is continuous, by Lemma \ref{v is continuous}.

Let now $(A^1_n,...,A^{|K_A|}_n, B^1_n,...,B^{|K_B|}_n)_{n \in \mathbb{N}} \subset \prod_{k_A \in K_A}\mathbb{R}^{(I_A \times I_B)} \times  \prod_{k_B \in K_B}\mathbb{R}^{(I_B \times J_B)}$ be a sequence converging to a point $$(A^1, A^2,...,A^{|K_A|}, B^1, B^2,...,B^{|K_B|})$$ in $\mathcal{G}$. Let $(h_n)_{n \in \mathbb{N}}, (v_{A_n})_{n \in \mathbb{N}}$ and $(v_{B_n})_{n \in \mathbb{N}}$ be the corresponding sequences of nonrevealing value functions. By Lemma \ref{v is continuous}, $h_n$ converges uniformly to $h$, $v_{A_n}$ converges uniformly to $v_A$ and $v_{B_n}$ converges uniformly to $v_B$. It follows now by Theorem \ref{cav is continuous} that there exists $n_0 \in \mathbb{N}$ such that $\forall n \geq n_0$ we have $\text{Cav}(h_n)(p) < \text{Cav}(v_{A_n})(p_A) + \text{Cav}(v_{B_n})(p_B)$. Since the sequence considered is arbitrary, it follows that there exists an open set around $(A^1, A^2,...,A^{|K_A|}, B^1, B^2,...,B^{|K_B|})$ where $I(p)$ is nondegenerate. \end{proof}

\subsection{The property of Local Nonrevelation at the Prior is not Non-Generic}\label{propnongen}

Define $\mathcal{G}_p$ as the set of matrices $(A^1,...,A^{|K|})$ such that $(A^1,...,A^{|K|}, p)$  is locally nonrevealing at $p$. We will show that $\mathcal{G}_p$ has non-empty interior in $\prod_{k \in K}\mathbb{R}^{I \times J}$. We will consider wihtout loss of generality $p \in \text{int}(\Delta(K))$. If the prior $p$ is not in the interior of $\Delta(K)$, then we drop the type that has probability zero from set $K$.

Let $\{f_i\}^{n}_{i=1}$ be a finite collection of real affine functions defined on $\Delta(K)$ such that $f_i(p) := a_i \cdot p + b_i$. This collection defines a concave piecewise linear function by letting $H(p) :=  \min\{f_1(p),...,f_n(p) \}$. This concave piecewise linear function $H$ induces a \textit{polyhedral subdivision}\footnote{See \cite{LSR2010} for a definition of polyhedral subdivision.} on $\Delta(K)$ by projecting the faces of the graph of $H$ over $\Delta(K)$. Consider a point $p_0$ in the interior of $\Delta(K)$ and consider the following polytope of $\mathbb{R}^K \times \mathbb{R}$: let $y_0 = (p_0, x) \in \Delta(K) \times \mathbb{R}$ where $x >0$ and let $v^k = (v_k,0)$ where $v_k$ is a vertex of $\Delta(K)$. Consider $P = \text{co}\{y_0, v^1,...,v^k \}$. The boundary of $P$ minus int($\D(K)) \times \{0\}$ is the graph of a concave piecewise linear function $H_{y_0}$, with each maximal proper face of $P$ corresponding to the graph of an affine function. The polyhedral subdivision on $\Delta(K)$ induced by this concave piecewise linear function will be denoted $\mathcal{P}_{y_0}$.

\begin{proposition}\label{existence of locally nonrevealing at p game}
 Assume $|I|, |J| \geq |K|$ and $p \in \text{int}(\D(K))$. Then there exists a game $(A^1,...,A^{|K|}) \in \mathcal{G}_p$. \end{proposition}

\begin{proof}We prove a slightly stronger result: we will show that there exists $(A^1,...,A^{K},p)$ such that for \textit{any} $(p_i)^n_{i=1} \subset \Delta(K)$ with $\sum^n_{i}\alpha_ip_i = p$ and $\text{Cav}(v_A)(p) = \sum^{n}_{i=1}\alpha_i v_A(p_i)$, there exists $i_0$ such that $p_{i_0} \in \text{int}(\Delta(K))$.

Choose $p_0 \neq p$ in the interior of $\Delta(K)$ and consider the polytope $P$ constructed just like in the previous paragraph. As discussed in the previous paragraph, consider the finite collection of affine functions $\{f_i\}^{J}_{i=1}$ defining the concave piecewise linear function $H_{y_0}$ whose graph is the boundary of polytope $P$. We show that the function $H_{y_0}$ is the nonrevealing value of a game in $\mathcal{G}_p$. Let $a^{t}_j$ be t-th entry of vector $a_j$ such that $f_j(p) = a_j \cdot p + b_j$. Define the matrix $A^t = [v^t_1,...,v^t_J]$  and $\bar \eta = \min_{j,t} \{a^t_j\}$, where $v^t_j$ is a column vector of length $I$ with entries from 1 to $|K|$  all equal to $a^{t}_j + b_j$ and $\eta <  \bar \eta$ everywhere else. Consider the vector $(A^1,...,A^{|K|},p)$. Notice that for each $t$, the rows of the submatrix of $A^t$ composed by the first $|K|$ rows and $|J|$ columns are all equal. So in the one shot zero-sum game given by matrix $A(p)$, the row player is indifferent between these rows, which are in turn strictly better than the rows from $|K|+1$ to $I$, by construction. The column player, who is a minimizer, chooses therefore the columns that minimizes the row players' payoffs: each column $j$ of $A(p)$ is a vector such that the first $|K|$ entries are equal to $f_j(p)$ by construction. So the column player chooses $j_0$ such that $f_{j_0}(p) = \min_{j \in J}\{f_1(p),...,f_J(p)\} = H_{y_0}(p) = v_A(p)$. Therefore the nonrevealing value function $v_A$ equals $H_{y_0}$. Now, let $\mathcal{P}_{y_0}$ be the polyhedral subdivision induced by $H_{y_0}$. By definition $p$ belongs to a maximal cell $C$ of $\mathcal{P}_{y_0}$. If $p$ belongs to the interior of $C$, then any optimal strategy of the informed player at $p$ induces a posterior at the relative interior of $\Delta(K)$, because there is one vertex of $C$ in the interior of $\Delta(K)$, namely $p_0$. If $p$ is in the relative interior of the intersection of two or more cells, then the intersection also has a vertex at $p_0$, which implies that any optimal strategy of the informed player induces at least one posterior in the interior of the simplex. \end{proof}

Proposition \ref{existence of locally nonrevealing at p game} above provides conditions under which $\mathcal{G}_p$ is nonempty. We use it to prove the following robustness result: 

\begin{proposition}
Let $p \in \text{int}(\Delta(K))$ and $|I|, |J| \geq K$. Then $\mathcal{G}_p$ has nonempty interior. 
\end{proposition}

\begin{proof}
Consider $(A^1,...,A^{|K|}, p)$ such that for any $(p_i)^n_{i=1} \subset \Delta(K)$ with $\sum^n_{i}\alpha_ip_i = p$ and $\text{Cav}(v_A)(p) = \sum^{n}_{i=1}\alpha_i v_A(p_i)$, there exists $i_0$ such that $p_{i_0} \in \text{int}(\Delta(K))$. The existence of such a game is guaranteed by the proof of Proposition \ref{existence of locally nonrevealing at p game}. Assume by contradiction there exists a sequence  $(A^1_s,...,A^{|K|}_s)_{s \in \mathbb{N}}$ with $\lim\limits_{s \rightarrow \infty}(A^1_s,...,A^{|K|}_s) = (A^1,...,A^{|K|})$, such that for each $s$, any optimal strategy at $p$ induces posteriors in the boundary of the simplex $\Delta(K)$. By Lemma \ref{optimal nonrevealing}, let $(p^s_1,...,p^s_{|K|+1})$, with $$\sum_{i=1}^{|K|+1} \alpha^s_i p^s_i = p, \hspace{1cm} \sum_{i=1}^{|K|+1} \alpha^s_i =1, \alpha^s_i \geq 0$$ be the vector of posteriors induced by an optimal strategy given by Lemma \ref{optimal nonrevealing} for the zero-sum game with lack of information on one side with prior $p$ defined by $(A^1_s,...,A^{|K|}_s)$. By assumption, $p^s_i \in \partial\Delta(K)$, for all $i=1,..,|K| +1$. Passing to convergent subsequences if necessary, we can assume that $p^s_i \rightarrow p_i \in \partial \Delta (K)$, for each $i =1,...,|K| +1$. Similarly, assume $\alpha^s_i \rightarrow \alpha_i$, for each $i =1,...,|K|+1$. Note that we have $\sum_{i=1}^{|K|+1}\alpha^s_i v_{A_s}(p^s_i) = \text{Cav}(v_{A_s})(p)$. By Theorem \ref{cav is continuous}, $\text{Cav}(v_{A_s})(p) \rightarrow \text{Cav}(v_A)(p)$, as $s \rightarrow \infty$.  By Lemma \ref{v is continuous} and the above assumptions, $\sum_{i}^{|K|+1}\alpha^s_i v_A(p^s_i) \rightarrow \sum_{i}^{|K| +1}\alpha_i v_A(p_i)$, as $s \rightarrow \infty$; it implies that $\sum_{i}^{|K|+1}\alpha_i v_A(p_i) = \text{Cav}(v_A)(p)$. This is a contradiction, since $p_i \in \partial \Delta(K), \forall i = 1,.., |K|+1$. \end{proof}

\subsection{Existence of Equilibria in $\mathcal{G}(p^0)$}

In this section we construct a particular kind of uniform equilibrium of $\mathcal{G}(p^0)$ called a \textit{joint-plan} (see Aumann et al. \cite{AMS1995}) which pays the lower end of $I(p^0)$.

\begin{definition} Let $h^1_m := (i^t_A, i^t_B)_{1 \leq t \leq m-1}$. We define $H^1_n := \bigcup\{h^1_n\}$ and call it the set of individual histories of player 1.
\end{definition}
\begin{definition}\label{definition of joint plan}

An \textit{independent joint-plan}\footnote{The joint-plan is called ``independent'' because each contract is defined by a product of strategies of each player.} in $\mathcal{G}(p)$ is a triple $(S,x,\gamma)$ where: 
\begin{itemize}

\item (Signals) $S$ is the set of signals, i.e., a subset of $H^{1}_n$, for some $n \in \mathbb{N}$. 

\item (Signaling Strategy) The vector $x$ is a $|\supp(p)|$-tuple where for each $(k_A, k_B)$ in $\supp(p)$, $x^{k_A,k_B}$ is a probability distribution on $S$.  

\item (Contracts) $\gamma = (\gamma_A, \gamma_B)$, with $ \gamma_{i} = (\gamma_{i}^{s})_{s \in S}$,   and  $\gamma_i^s := \sigma^s_i \bigotimes \tau^s_i, \sigma^s_i \in \triangle(I_i)$ and $\tau^s_i \in \triangle(J_i)$, for $i \in \{A, B\}$. We denote by $\gamma^s_A(i_A,j_A)$ the probability of moves $(i_A,j_A)$ and $\gamma^s_B(i_B,j_B)$ the probability of moves $(i_B,j_B)$.

\end{itemize} 
\end{definition}

 Following Lemma 2 in \cite{SS1983}, the independent distribution $\gamma^s_A = \sigma^s_A \bigotimes \tau^s_A$ over $I_A \times J_A$ (resp. $\gamma^s_B$) can be induced through the play of a deterministic sequence of moves at each stage by players 1 and 2 (resp. players 1 and 3) with the appropriate frequency. This deterministic path of play is what the contract $\gamma^s_A$ (resp. $\gamma^s_B$) represents.

\subsection*{Notation} Let $(S, x, \gamma)$ be an independent joint-plan in $\mathcal{G}(p)$. We define some notation necessary for the statement of Lemma \ref{joint implies eq}. The prior $p$ and signaling strategy $x$ define a probability distribution $P$ on $K_A \times K_B \times S$ by letting $P(k_A, k_B, s) := p^{k_A, k_B}x^{k_A, k_B}(s)$. We define the \textit{posterior probability} of $(k_A, k_B) \in \supp(p)$ given the realization of $s \in S$ by $p^{k_A,k_B}(s) : = \frac{P(k_A, k_B, s)}{P(s)}$, where $P(s) = \sum_{(k_A,k_B) \in K_A \times K_B}P(k_A,k_B, s)$. This is the probability over states, obtained by Bayes rule, that players 2 and 3 may compute after observing signals. For $s \in S, \ell \in \{A,B\}$, given a posterior probability $p(s) \in \Delta(K_A \times K_B)$, the marginal posterior over $K_\ell$ is $p(s)_{\ell} \in \Delta(K_\ell)$. Also, for each $s \in S$, $(k_A,k_B) \in \supp(p)$, and $\ell, j \in \{A,B\}$, we define: the expected payoff of player 1 after signal $s$ is $\alpha^{k_A,k_B}(s) := \sum_{i_A,j_A,i_B,j_B}(A^{k_A}_{i_A,j_A}\gamma^s_A(i_A,j_A) + B^{k_B}_{i_B,j_B}\gamma^s_B(i_B,j_B))$; the highest payoff player 1 can obtain after any signal when states are $(k_A,k_B) \in \supp(p)$ is $\alpha^{k_A,k_B} := \max_{t \in S}\alpha^{k_A,k_B}(t)$ and $\alpha = (\alpha^{k_A,k_B})_{(k_A, k_B) \in \supp(p)}$ is their vector; the marginal expected payoff after $s$ of player 1 in game $\ell \in \{A,B\}$ is $\alpha^{k_\ell}_{\ell}(s) = \sum_{i_\ell,j_\ell}\ell^{k_\ell}_{i_\ell,j_\ell}\gamma^s(i_\ell,j_\ell)$; we define by $\beta_{\ell}(s) := \sum_{k_{\ell}}p^{k_{\ell}}(s)_{\ell} \alpha^{k_{\ell}}(s)$ the negative of the expected payoff of the uninformed player playing game $\ell$ after signal $s$, and by $\beta_{\ell} := \sum_{s \in S}P(s)\beta_{\ell}(s)$.  

\medskip

\begin{definition}An independent joint-plan $(S,x,\gamma)$ in $\mathcal{G}(p)$ is called \textit{safe} if for each $s \in S$,  $\tau^s_A$ is optimal in the one-shot game with matrix $A(p(s)_A) := \sum_{k_A \in K_A}p^{k_A}(s)A^{k_A}$ and $\tau^s_B$ is optimal in the one-shot zero-sum game $B(p(s)_B)$.\end{definition}


\begin{lemma}\label{joint implies eq}Let $(S,x,\gamma)$ be an independent joint-plan in $\mathcal{G}(p)$ satisfying:

\begin{enumerate}
\item $\beta_A(s) \leq \text{Cav}(v_A)(p(s)_A)$ and $\beta_B(s) \leq \text{Cav}(v_B)(p(s)_B)$, for all $s \in S$.

\item For all $(k_A,k_B) \in \supp(p)$, $s \in S$ such that $P(k_A, k_B, s) > 0$, it implies $\alpha^{k_A,k_B}(s) = \alpha^{k_A,k_B}$.

\item $\alpha \cdot q \geq \mathfrak{h}(q)$, for all $q \in \Delta(\text{supp}(p))$. 
\end{enumerate}
Then there exists an equilibrium $(\sigma, \tau_A, \tau_B) \in \Sigma \times \mathcal{T}_2 \times \mathcal {T}_3$ in $\mathcal{G}(p)$ such that $\forall (k_A,k_B) \in \supp(p)$ we have $\alpha^{k_A, k_B}(\sigma, \tau_A, \tau_B) = \alpha^{k_A, k_B}$, $\beta^A(\sigma, \tau_A, \tau_B) = - \beta_A$ and $\beta^B(\sigma, \tau_A, \tau_B) = - \beta_B$.
\end{lemma}

\begin{proof}Let $(S,x,\gamma)$ be a safe independent joint-plan. We construct an equilibrium $(\sigma, \tau_A, \tau_B)$ in $\mathcal{G}(p)$ inducing the required joint-plan vector of payoffs. The proof is exactly analogous to Proposition 1 in \cite{SS1983}. We indicate the steps of the construction. If $(k_A, k_B) \in \supp(p)$ realizes,  player 1 uses a \textit{state-dependent lottery} (see Lemma 1 in \cite{SS1983}), using finitely many stages of his play to induce one of the posteriors $(p(s))_{s \in S}$. For each $s \in S$, Lemma 2 of \cite{SS1983} implies there exists $h^s_{A, \infty} = ((i^n_A, j^n_A))_{n \geq 1}$ such that for each $(i_A,j_A) \in I_A \times I_B$ $$ \frac{1}{n} | \{m| 1 \leq m \leq n, (i^m_A, j^m_B) = (i_A,j_A)\}| \rightarrow \gamma^s_A(i_A, j_A),$$ as $n \rightarrow + \infty$.

 Assume the state-dependent lottery of player 1 draws signal $s \in S$. Then player 1 will play according to $h^s_{A, \infty}$, as long as player 2 plays according to $h^s_{A, \infty}$.  The asymptotic frequency induced by this deterministic play is $\gamma^s_A$.  The expected payoff to player 2 given that $s$ realizes is then $- \beta_A(s)$. Similarly, players 1 and 3 will play a deterministic sequence of moves according to some $h^s_{B, \infty}$ such that the induced asymptotic frequency is $\gamma^s_B$. The expected payoff to player 3 given that $s$ realizes is then $- \beta_B(s)$. The payoff to player 1 from following the deterministic sequence of moves in both games is $\alpha^{k_A, k_B}(s) = \alpha^{k_A,k_B}$, $\forall s \in S$ that realize with positive probability, by condition (2). If any player deviates from the deterministic sequence of moves they are supposed to play after $s$ realizes, the deviation is detectable. If player 2 deviates, player 1 will punish him by playing the optimal strategy of the zero-sum game $G_A(p_A(s))$. If player 3 deviates, player 1 will punish him by playing the optimal strategy of the zero-sum game $G_B(p_B(s))$. If player 1 deviates from his deterministic sequence in either game he is playing, the deviation is detectable by players 2 and 3 and players 2 and 3 will play the approachability strategy given by Theorem \ref{approach strategy}. During signaling stages, players 2 and 3 can play anything, but if player 1 makes a detectable deviation during signaling stages -- by not using a signal $s$  -- then players 2 and 3 will also play the approachability strategy given by Theorem \ref{approach strategy}. We show that the strategies defined form a uniform equilibrium: first, condition (2) of Theorem \ref{joint implies eq} prevents any undetectable deviation of player 1 at signaling stages from being profitable. After signal $s$ realizes, if player 2 deviates, then player 1 plays the optimal strategy of $G_A(p_A(s))$. Since $\beta_A(s) \leq \text{Cav}(v_A)(p(s)_A)$, it implies that the deviation is not profitable for player 2. The same reasoning applies for a deviation of player 3. The inequality $\alpha \cdot q \geq \mathfrak{h}(q), \forall q \in \Delta(\supp(p))$ of condition (3), shows that a deviation is not profitable for player 1, because $M(\alpha)$ is approachable by players 2 and 3, by Theorem \ref{approach strategy}.

Condition (1) of the uniform equilibrium definition is immediately satisfied, because the payoffs to each player given by the strategies defined above converge. Also, condition (2) of the definition follows immediately from the fact that deviations are punished with approachability strategies, in case player 1 deviates, and optimal strategies of the zero-sum games, in case players 2 or 3 deviate. \end{proof}

Given an independent joint-plan $(S,x, \gamma)$, the vector $(\alpha, -\beta_A, -\beta_B) \in \mathbb{R}^{\supp(p)} \times \mathbb{R} \times \mathbb{R}$ will be called the \textit{vector of payoffs} of the equilibrium joint-plan.

Condition (1) in Lemma \ref{joint implies eq} guarantees that players 2 and 3 do not deviate after signaling stages from the deterministic path of play induced by the joint-plan contracts: in case signal $s$ realizes, the contract $\gamma^s_A$ has expected payoff to player 2 given $s$ of $-\beta_A(s)$ and player 1 can punish player 2 in case player 2 deviates from the deterministic path of play given by the contract $\gamma^s_A$ by playing his optimal strategy at $G_A(p(s)_A)$, which guarantees the payoff of player 2 would not be larger than $-\text{Cav}(v_A)(p(s)_A) \leq - \beta_A(s)$. The analogous reasoning holds to prevent a deviation of player 3. Condition (2) prevents undetectable deviations from player 1 at signaling stages: it says that player 1 cannot profit from ``lying'' about a signal because he is indifferent to the payoffs under any contract that realizes with positive probability, for any pair of states chosen by Nature. Condition (3) implies the existence of strategies for the uninformed players to punish the informed player in case he makes an observable deviation. This is the approachability strategy of the uninformed player in the repeated game $G_{A+B}(p)$.\footnote{Recall $G_{A+B}(p)$ is a two-player zero-sum infinitely repeated game with one-sided incomplete information and undiscounted payoffs.} In section \ref{AS}, we show this strategy can indeed be played by players 2 and 3. This will be a simple consequence of the fact that the payoffs for the informed player have a ``separable'' structure - they are the addition of payoffs obtained in each zero-sum game separately.


\begin{Proposition}\label{existence lower bound}There exists an independent and safe joint-plan in $\mathcal{G}(p^0)$ satisfying (1), (2) and (3) of Lemma \ref{joint implies eq}. Also, if $(\sigma, \tau_A, \tau_B)$ is the equilibrium induced by this joint-plan, then $(\sigma, \tau_A, \tau_B)$ pays $\text{Cav}(h)(p)$ as an ex-ante payoff to the informed player.\footnote{The generalization of Lemma \ref{joint implies eq} to a model of one informed player and $n$ uninformed players -- as can be readily checked in the proof in the Appendix -- is straightforward.}
\end{Proposition}

\begin{proof}
We first state Theorem \ref{SST} and Lemma \ref{lemmaSST} below, which are the main tools for the proof of Proposition \ref{existence lower bound}. 
\begin{theorem}\label{SST}[Simon et al. \cite{SST1995}]
Let $K$ be a finite set and $p^0 \in$ int $\triangle(K)$. Let $a:\triangle(K) \rightarrow \mathbb{R}$ and $h:\triangle(I) \times \triangle(K) \rightarrow \mathbb{R}^{|K|}$ be continuous functions such that:
\begin{enumerate}

\item The function h is affine with respect to the variable $\sigma \in \triangle(I)$, for all $p \in \triangle(K)$.

\item For all $p,q \in \triangle(K)$, there is $\sigma \in \triangle(I)$ such that $h(\sigma,p)\cdot  q \geq a(q)$.

Therefore, there exists a set $P_0 \subset \triangle(K)$ of cardinality $\leq |K|$ and vectors $\sigma_p \in \triangle(I)$ (with $p \in P_0)$ and $\phi \in \mathbb{R}^{|K|}$ such that:

\item $\phi \cdot q \geq a(q)$ for all $q \in \triangle(K)$.

\item $p^0 \in \text{co}P_0$.

\item For all $p \in P_0$, $k \in K$ we have $\phi^{k} \geq h^{k}(\sigma_{p},p)$, with equality occuring in place of inequality whenever $p^k > 0$.

\end{enumerate}

\end{theorem}

We will also make use of the following simple version of a Lemma in \cite{SST1995}:

\begin{lemma}\label{lemmaSST}
For every $\epsilon >0$ there exists a continuous map $g_{A}:\triangle(K_A)\rightarrow \triangle(J_A)$ such that $\sigma_{A} A(p)(g_{A}(p))^{T}\leq v_A(p)+ \epsilon$, for all $(\sigma,p) \in \triangle(I_A) \times \triangle(K_A)$. 
\end{lemma} 

Assume first $p^0 \in \text{int}\Delta(K_A \times K_B)$. Given $\epsilon > 0$, applying Lemma \ref{lemmaSST} we have that $\sigma_{A}A(p_A)(g_{A}(p_A))^{T}\leq v_A(p_A)+ \epsilon$ and $\sigma_{B}B(p_B)(g_{B}(p_B))^{T}\leq v_B(p_B)+ \epsilon$, for all $(\sigma_{A},p_A) \in \triangle(I_A) \times \triangle(K_A)$ and $(\sigma_{B},p_B) \in \triangle(I_B) \times \triangle(K_B)$. Define $h(\sigma,p) = ((m_{A}\sigma)A^{k_A}(g_A(p_{A}))^{T}+(m_{B}\sigma)B^{k_B}(g_B(p_{B}))^{T})_{(k_A ,k_B) \in K_A \times K_B}$, where $m_{A}\sigma := \text{marg}_{I_A}\sigma$ and $m_{B}\sigma := \text{marg}_{I_B}\sigma$. Since the marginal operator is affine, the function $h$ is affine on $\sigma$. It is also continuous. Now, given $p,q \in \triangle(K_A \times K_B)$, let $\bar{\sigma}_A^q$ be the optimal  strategy of the informed player in the one-shot zero-sum game with matrix $A(q_A)$ and let  $\bar{\sigma}_B^q$ be the  optimal strategy of the informed player in the one-shot zero-sum game with matrix $B(q_B)$. Define $\tilde{\sigma} := \bar{\sigma}_A^q \bigotimes \bar{\sigma}_B^q \in \Delta(I_A \times I_B)$. Then we have that $h(\tilde{\sigma},p) \cdot q \geq a(q) := \mathfrak{h} (q) = v_A(q_A) + v_B(q_B)$. Applying Theorem \ref{SST}, we have that there exists $P_0 \subset \triangle(K_A \times K_B)$ with cardinality  $\leq |K_A \times K_B|$ , $(\sigma_{p})_{p \in P_0}$ and $\phi \in \mathbb{R}^{|K_A \times K_B|}$ satisfying (3), (4) and (5). From (3) and (4) we have that there exists a nonnegative collection $(\lambda_{p})_{p \in P_0}$ and a vector $\phi$ such that $\sum_{p\in P_0}\lambda_{p}p=p^0$ and $\sum_{p \in P_0}\lambda_{p}=1$, $\phi \cdot q \geq a(q)$ and $(m_{A}\sigma_{p})A(p_A)(g_A(p_{A}))^{T} \leq v_A(p_A) + \epsilon$ and $(m_{B}\sigma_{p})B(p_B)(g_B(p_{B}))^{T} \leq v_B(p_B) + \epsilon$ for $\sigma_p \in \triangle(I_A \times I_B)$ and $p \in P_0$. 

Notice that the solutions given by the application of Theorem \ref{SST} are all indexed by $\epsilon >0$. For each $n \in \mathbb{N}$, we can therefore consider $P^n_0 = \{p^n_{s}\}_{s \in S_n} \subset \Delta(K_A \times K_B)$ such that $|P^n_0| \leq |K_A \times K_B|$, $(\sigma_{p^n})_{p^n \in P^n_0}$,  $(m_{B}\sigma_{p^n_{s}})$, $g_B(p^n_{sB})$, $(m_{A}\sigma_{p^n_{s}})$, $g_A(p^n_{sA})$, $(\lambda_{p^n_{s}})_{s \in S_n}$ and $\phi_n$ satisfy $(m_{B}\sigma_{p^n_{s}})B(p^n_{sB})(g_B(p^n_{sB}))^{T} \leq v_B(p^n_{sB}) + 1/n$ and $(m_{A}\sigma_{p^n_{s}})A(p^n_{sA})(g_A(p^n_{sA}))^{T} \leq v_A(p^n_{sA}) + 1/n$ such that  $\sum_{s}\lambda_{p^n_{s}} p^n_{s} = p^0$ and $\sum_{s}\lambda_{p^n_{s}}=1$; also, $\phi_n$ such that $\phi_n \cdot q \geq a(q), \forall q$ with (5) being satisfied. 


Passing to a subsequence if necessary, consider the (Hausdorff) limit $P_0$ of the sequence $P^n_0$. Notice that $P_0$ has finite cardinality (less than $|K_A \times K_B|$). We can also consider limits of the associated solutions, since they all lie in compact sets.\footnote{Property (5) of Theorem \ref{SST} guarantees then that the sequence of vectors $(\phi_n)_{n \in \mathbb{N}}$ is bounded, so it will also have an accumulation point.} Therefore, consider $S$ a finite set,  $P_0 = \{p_s\}_{s \in S}$, $(\sigma_{p_s})_{s \in S},(g_A(p_{sA}))_{s \in S}, (g_B(p_{sB}))_{s \in S}$, $(\lambda_{p_s})_{s \in S}$ and $\phi$ to be the limits of those sequences. It is straightforward to check that the limit of the sequences of solutions satisfy (3), (4) and (5).
The joint-plan is now defined as follows: consider as contracts $\gamma_{A}^{s} = m_{A}\sigma_{p_s} \bigotimes g_A(p_{sA})$, $\gamma_{B}^{s} = m_{B}\sigma_{p_s} \bigotimes g_B(p_{sB})$. 

Let $\tau^A_{p_s} = (g_A(p_{sA}))^{T}$ and $\tau^B_{p_s} = (g_B(p_{sB}))^{T}$. By construction, $\max\limits_{\sigma}\sigma A(p_{sA})\tau_{p_{s}}^A = v_A(p_{sA})$ and $\max\limits_{\sigma}\sigma B(p_{sB}) \tau^B_{p_s} = v_B(p_{sB})$. Therefore it implies that $(m_A \sigma_{p_s})A(p_{sA})\tau_{p_s}^A \leq v_A(p_{sA})$ and $(m_B\sigma_{p_s})B(p_{sB})\tau^B_{p_s} \leq v_B(p_{sB})$, for each $s \in S$. This implies in particular (1) of Lemma \ref{joint implies eq} is satisfied. Condition (3) of Lemma \ref{joint implies eq} follows directly from the fact that condition (5) of Theorem \ref{SST} above is satisfied by $\phi$. We can now define the signaling strategy $x$: let  $S \subset H^1_{t_0}$ with $|H^1_{t_0}| \geq |K_A \times K_B|$, for $t_0 \in \mathbb{N}$. Define $x^{k_A,k_B}(s) = \lambda_{p_s}\frac{p^{k_A,k_B}_s}{p^{0,k_A,k_B}}$. This signaling strategy satisfies condition (2) of Lemma \ref{joint implies eq} and induces the appropriate posteriors: in case $s \in S$ is observed, the uninformed players will update their priors according to Bayes rule to posterios $p_s$. 

We now calculate the ex-ante payoffs of player 1 obtained from the joint-plan just defined. Using condition (5) of Theorem \ref{SST}, we have that $\sum_{p_s}\lambda_{p_s} (h(\sigma_{p_s},p_s) \cdot p_s) = \phi \cdot p^0 = \text{Cav}(h)(p^0)$.  Rewriting the expression for $h$, $\sum_{p_s}\lambda_{p_s}(m_A\sigma_{p_s} A(p_{s_A}) \tau^A_{p_s} + m_B\sigma_{p_{s}}B(p_{s_B})\tau^B_{p_s})= \text{Cav}(h)(p^0)$. This proves the result for $p^0 \in \text{int} \Delta(K_A \times K_B)$. If now $p^0 \in \partial \Delta(K_A \times K_B)$, then consider the model $\mathcal{G}(\bar p)$ defined for $\bar p \in \text{int}\Delta(\supp(p^0))$, where $\bar p^{k_A, k_B} = p^{0,k_A,k_B}, \forall (k_A, k_B) \in \text{supp}(p^0)$ and apply the result proved to this case.\end{proof}

\subsection{Approachability Strategies}
\label{AS}

In this section we construct an approachability strategy for the uninformed player in game $G_{A+B}(p^0)$ and show that this strategy can indeed be played by players $2$ and $3$ in game $\mathcal{G}(p^0)$. The proof is simple but requires several preliminary definitions. This section is taken from Sorin \cite{SS2002} and adapted to our setting. Fix $p \in \Delta(K_A \times K_B)$. The main result is Proposition \ref{approach strategy}.

Let $C$ be a $|I_A \times I_B| \times |J_A \times J_B|$-matrix with coefficient in $\mathbb{R}^{\supp(p)}$, where $$ C_{i_A,j_A,j_A,j_B} = (C^{k_A, k_B}_{i_A,i_B,j_A,j_B})_{(k_A, k_B) \in \supp(p)}= (a^{k_{A}}_{i_{A},j_{A}} + b^{k_{B}}_{i_{B}, j_{B}})_{(k_A, k_B) \in \supp(p)},$$ where $a^{k_{A}}_{i_{A},j_{A}} \in \mathbb{R}$ and $b^{k_{B}}_{i_{B}, j_{B}} \in \mathbb{R}$. We define a \textit{vector payoff zero-sum game}: at stage $n$, player 1 (resp. player 2) chooses a move $(i^n_A,i^n_B)$(resp. $(j^n_A,j^n_B)$). The corresponding vector payoff $g_n = C_{i^n_A,j^n_A,j^n_A,j^n_B}$ is announced. Denote by $h_n$ the sequence of vector payoffs at stage $n$. This is the information available to both players up to stage $n$. Let $\overline{g}_n = \frac{1}{n}\sum_{m=1}^{n}g_m$ be the vector of average payoffs up to stage $n$. Let $||C|| = \max\limits_{i_A,i_B,j_A,j_B,k_A, k_B}|C^{k_A, k_B}_{i_A,i_B,j_A,j_B}|$.

\begin{definition}\label{approachable set}
A set $P \subset \mathbb{R}^{\supp(p)}$ is approachable by player 2 if for any $\epsilon>0$ there exists strategy $\tau$ of player 2 and $N \in \mathbb{N}$ such that for any strategy $\sigma$ of player 1 and $n \geq N$: $$\mathbb{E}_{\sigma,\tau}[d_n] \leq \epsilon,$$ where $d_n$ is the euclidean distance $d(\overline{g}_n, P)$.
\end{definition}

Let $$C\tau = \text{co}\{\sum_{j_A,j_B}C_{i_A, i_B, j_A, j_B}\tau_{j_A,j_B}  | (i_A, i_B) \in I_A \times I_B \},$$ where $C_{i_A, i_B, j_A, j_B}\tau_{j_A,j_B} := (C^{k_A,k_B}_{i_A, i_B, j_A, j_B}\tau_{j_A,j_B})_{(k_A,k_B) \in \supp(p)}$.
\begin{definition}

A closed set $P \subset \mathbb{R}^{\supp(p)}$ is a $B$-set for player 2 if for any $z \notin P$ there exists a closest point $y = y(z)$  in $P$ to $z$ and a mixed move $\tau = \tau(z) \in \Delta(J_A \times J_B)$, such that the hyperplane through $y$ orthogonal to the segment $[y,z]$ separates $z$ from $C\tau$.  
\end{definition}

\begin{theorem}\label{bset thm}
Let $P$ be a $B$-set for player 2. Then $P$ is approachable by that player. More precisely with a strategy satisfying $\tau(h_{n+1}) = \tau(\overline{g}_n)$, whenever $\overline{g}_n \notin P$, one has: $$ \mathbb{E}_{\sigma,\tau}[d_n] \leq \frac{2||C||}{\sqrt{n}}, \forall \sigma $$ and $d_n$ converges $ \mathbb{P}_{\sigma, \tau}$ a.s. to 0.
\end{theorem}

\begin{proof}
See Theorem B1 in \cite{SS2002}.\end{proof}

\begin{remark} The strategy $\tau$ obtained in the statement of Theorem \ref{bset thm} above will be called an \textit{approachability strategy}. \end{remark}

For $z \in \mathbb{R}^{\supp(p)}$, let $M(z):= z - \mathbb{R}^{\supp(p)}_{+}$.

\begin{proposition}\label{approach strategy}
Let  $ z \in Z_{\infty} = \{ z \in \mathbb{R}^{\supp(p)} | z \cdot q \geq \mathfrak{h}(q), \forall q \in \Delta(\supp(p)) \}$. Then player 2 can approach $M(z)$. Also, the approachability strategy $\tau$ for player 2 can be assumed to satisfy $\tau(h_t) = \tau^A(h_t) \bigotimes \tau^B(h_t)  \in \Delta(J_A) \bigotimes \Delta(J_B), \forall h_t \in H_t,  t \in \mathbb{N}$.
\end{proposition}

\begin{proof}
Approachability of $M(z)$ follows from Theorem 3.33 in \cite{SS2002}, where it is checked that $M(z)$ is a $B$-set. Fix $\overline{g}_n$ and let $\tau(\overline{g}_n)$ be the mixed move associated with the $B$-set $M(z)$ and $\overline{g}_n$ in Theorem \ref{bset thm}. Let $\tau^A = \text{marg}_{J_A}\tau$ and $\tau^B = \text{marg}_{J_B}\tau$. We show $C \tau = C(\tau^A \bigotimes \tau^B)$: let $x \in C \tau$ such that $x := \sum_{(i_A, i_B)}\lambda^{i_A,i_B}\sum_{j_A,j_B}C_{i_A, i_B, j_A, j_B}\tau_{j_A,j_B}$ with $\sum_{i_A, i_B}\lambda^{i_A,i_B} = 1$ and $\lambda^{i_A,i_B} \geq 0$. Then,
\begin{center}
$ \sum_{(i_A, i_B)}\lambda^{i_A,i_B}\sum_{j_A,j_B}C_{i_A, i_B, j_A, j_B}\tau_{j_A,j_B} = \sum_{(i_A, i_B)}\lambda^{i_A,i_B} \sum_{j_A,j_B} (a^{k_A}_{i_A,j_A} + b^{k_B}_{i_B, j_B})_{(k_A,k_B)} \tau_{j_A,j_B} = \sum_{i_A, i_B}\lambda^{i_A,i_B} [\sum_{j_A} a^{k_A}_{i_A,j_A}\tau^A_{j_A} + \sum_{j_B}b^{k_B}_{i_B, j_B}\tau^B_{j_B}]_{(k_A,k_B)} = \sum_{i_A, i_B}\lambda^{i_A,i_B} \sum_{j_A,j_B}(a^{k_A}_{i_A,j_A} + b^{k_B}_{i_B, j_B})_{(k_A,k_B)}\tau^A_{j_A} \tau^B_{j_B} = \sum_{i_A, i_B}\lambda^{i_A,i_B}\sum_{j_A,j_B}C_{i_A, i_B, j_A, j_B}\tau^A_{j_A} \tau^B_{j_B},$
\end{center}
which implies that $x \in C(\tau^A \bigotimes \tau^B)$. Now, let $y := \sum_{(i_A, i_B)}\lambda^{i_A,i_B}\sum_{j_A,j_B}C_{i_A, i_B, j_A, j_B}\tau^A_{j_A} \tau^B_{j_B} \in C(\tau^A \bigotimes \tau^B)$, with $\sum_{(i_A, i_B)}\lambda^{i_A,i_B} = 1$ and $\lambda^{i_A,i_B} \geq 0$. Using the the same equalities above, it implies $y \in C\tau$.  
So, if player 2 uses $\tau_t(h_t) : = \tau^A(h_t) \bigotimes \tau^B(h_t)$ after each $h_t$ , $\tau$ is an approachability strategy. \end{proof}

\subsection{A Continuum of Equilibria in $\mathcal{G}(p^0)$}

In subsection 3.1 of the main paper we showed how $\Cav(v_A)(p^0_A) + \Cav(v_B)(p^0_B)$ is an equilibrium of $\mathcal{G}(p^0)$, provided some conditions are satisfied. Proposition \ref{existence lower bound} in this Appendix shows how $\Cav(h)(p^0)$ is always an equilibrium payoff of the informed player in $\mathcal{G}(p^0)$. The next proposition immediately implies that any element in $I(p^0)$ is an ex-ante equilibrium payoff of the informed player, implying a continuum of equilibria exists in the model $\mathcal{G}(p^0)$.

\begin{proposition}\label{continuum}
Let $(\s,\t_A,\t_B)$ and $(\s', \t'_A, \t'_B)$ be two uniform equilibria of $\mathcal{G}(p^0)$ with associated ex-ante payoffs $\g$ and $\g'$ for the informed player, respectively. For any $\a \in (0,1)$, there exists a uniform equilibrium $(\bar \s, \bar \t_A, \bar \t_B)$ whose ex-ante equilibrium payoff to the informed player is $\a \g + (1-\a) \g'$.
\end{proposition}

\begin{proof}Fix $\a \in (0,1)$. We show that $\a \g + (1-\a)\g'$ is an ex-ante equilibrium payoff for the informed player in $\mathcal{G}(p^0)$.  Since each of the uninformed players can play their optimal strategy in their repeated zero-sum games $G_A(p^0_A)$ or $G_B(p^0_B)$, it implies that $$\text{Cav}(v_A)(p^0_A)+\text{Cav}(v_B)(p^0_B) \geq \g$$ as well as $$\text{Cav}(v_A)(p^0_A)+\text{Cav}(v_B)(p^0_B) \geq \g'.$$ Now, since the informed player has an optimal strategy in the zero-sum game $G_{A+B}(p^0)$ that guarantees him $\text{Cav}(h)(p^0)$, it implies that $\g \geq \text{Cav}(h)(p^0)$ as well as $\g' \geq \text{Cav}(h)(p^0)$. Consider a jointly controlled lottery that implements the equilibrium profile associated with $\g$ with probability $\a$ and the equilibrium associated with $\g$ with probability $1-\a$. By the properties of the jointly controlled lottery, there cannot be profitable undetectable deviations at the stages where the jointly controlled lottery is played. For detectable deviations of the uninformed player 2 (respectively, player 3) at the lottery stages, the informed player plays the optimal strategy of the zero-sum game $G_A(p^0_A)$ (respectively $G_B(p^0_B)$) to punish. For detectable deviations of the informed player at the lottery stages, the uniformed players play the approachability strategy of Theorem \ref{approach strategy} to punish. The strategy profile where a jointly controlled lottery is played at initial stages -- with deviations punished as described --  and, after that, the corresponding strategy profile paying $\g$ or $\g'$ drawn from the lottery, is a uniform equilibrium of the game $\mathcal{G}(p^0)$. Indeed, we already showed that there are no profitable deviations during lottery stages. After lottery stages, players play a uniform equilibrium so there is no profitable deviation for any player, also. The ex-ante payoff of this equilibrium is $\a \g +  (1-\a)\g'$.\end{proof}


\begin{thebibliography}{99}
\bibitem{AMS1995}
Aumann, R.J., Maschler, M., Stearns, R.E. (1995): {\it Repeated Games with Incomplete Information}. MIT Press. 

\bibitem{AB2019}
Arieli, I., Babichenko, Y. (2019): ``Private Bayesian Persuasion,'' {\it Journal of Economic Theory}, 182: 185-217.

\bibitem{DB1956}
Blackwell, D. (1956): ``An analog of the minmax theorem for vector payoffs,'' {\it Pacific Journal of Mathematics} 6(1): 1-8.

\bibitem{FC1975}
Clark, F. H. (1975): ``Generalized Gradients and Applications,'' {\it Transaction of the American Mathematical Society}, vol. 205, pp. 247-272.

\bibitem{LSR2010}
De Loera, J., Rambau, J., Santos, F. (2010) {\it Triangulations: Structures for Algorithms and Applications}. Springer.

\bibitem{FF1992}
Forges, F. (1992): ``Repeated Games of Incomplete Information: non-zero-sum,'' {\it Handbook of Game Theory with Economic Applications},1 : 155-177. 

\bibitem{FS2015}
Forges, F., Solomon, A. (2015): ``Bayesian Repeated Games and Reputation,'' {\it Journal of Economic Theory}, 70-104.

\bibitem{FHS2016}
Forges, F., Horst, U., Solomon, A. (2016): ``Feasibility and individual Rationality in two-person Bayesian Games,'' {\it International Journal of Game Theory}, 45:11-36.

\bibitem{KG2011}
Gentzkow, M., Kamenica, E. (2011): ``Bayesian Persuasion,'' {\it American Economic Review }, 101 (6):2590-2615. 

\bibitem{SH1985}
Hart, S. (1985): ``Nonzero-sum Two-person Repeated Games with Incomplete Information,'' {\it Mathematics of Operations Research}, vol. 10, number 1, 117-153.

\bibitem{HL2021}
Huangfu, B., Liu, H. (2021): ``Information Spillover in Multi-good adverse selection,'' {\it American Economics Journal: Microeconmics}, forthcoming. 

\bibitem{KLT2022}
Koessler, F., Laclau, M., Tomala, T. (2022): ``Interactive Information Design,'' {\it Mathematics of Operations Research}, 47(1), 153-175.

\bibitem{LRS2019}
Laraki, R., Renault, J., Sorin, S. (2019): {\it Mathematical Foundations of Game Theory}. Springer-Verlag.

\bibitem{RL2004}
Laraki, R. (2004): ``Regularity of the Convexification Operator on a Compact Set'', {\it Journal of Convex Analysis} 11(1): 209-234.

\bibitem{MSZ2013}
Maschler, M., Solan, E., Zamir, S. (2013): {\it Game Theory}. Cambridge University Press.

\bibitem{JS1988}
Shalev, J. (1988): {\it Nonzero-sum two-person repeated games with incomplete information and observable payoffs.} Tel Aviv University. Faculty of Management. 

\bibitem{SST1995}
Simon, R.S., Spiez, S., Tor\'unczyk, H. (1995) ``The existence of equilibria in certain games, separation for families of convex functions and a theorem of Borsuk-Ulam type,'' {\it Israel Journal of Mathematics}, 92, 1-21.

\bibitem{SS1983}
Sorin, S. (1983):  ``Some results on the Existence of Nash Equilibria of Non-Zero Sum Games with Incomplete Information,'' {\it International Journal of Game Theory}, Vol. 12, Issue 4, p. 193-205. 

\bibitem{SS2002}
Sorin, S. (2002): {\it A First Course on Zero-Sum Repeated Games}. Math\'ematiques et Applications 37. Springer. 

\bibitem{TR1970}
Rockafellar, T. (1970): {\it Convex Analysis}. Princeton University Press. 

\bibitem{YW2013}
Wang, Y. (2013): ``Bayesian Persuasion with Multiple Receivers,'' {\it Mimeo}.

\end{thebibliography}
\end{document}